\newcommand{\AlgorithmStyle}{1}
\newcommand{\IfUseAlgorithmIIeStyle}[2]{\ifthenelse{\equal{\AlgorithmStyle}{1}}{#1}{#2}}
\newcommand{\NN}{\mathbb{N}}
\newcommand{\RR}{\mathbb{R}}
\renewcommand{\S}{\mathcal{S}}
\newcommand{\K}{\mathcal{K}}
\newcommand{\R}{\mathcal{R}}
\newcommand{\C}{\mathcal{C}}
\newcommand{\B}{\mathbb{B}}
\newcommand{\CS}{\mathscr{C}}
\renewcommand{\d}{\mathrm{d}}
\newcommand{\D}{\mathrm{D}}
\newcommand{\T}{\mathrm{T}}
\newcommand{\OK}{\text{OK}}
\newcommand{\OKZ}{\text{OK}_0}
\newcommand{\PO}{\text{PO}}
\newcommand{\EQ}{\mathrm{eq}}
\newcommand{\NEQ}{\mathrm{neq}}
\newcommand{\VOL}{\mathrm{vol}}
\newcommand{\tWitness}{$\mathtt{Witness}$\xspace}
\newcommand{\Witness}[1]{$\mathtt{Witness(}#1\mathtt{)}$\xspace}
\newcommand{\tWitnessZ}{$\mathtt{Witness_0}$\xspace}
\newcommand{\WitnessZ}[1]{$\mathtt{Witness_0(}#1\mathtt{)}$\xspace}
\newcommand{\bWitnessZ}[1]{$\mathtt{Witness_0\big(}#1\mathtt{\big)}$\xspace}
\newcommand{\SET}[1]{\left\{#1\right\}}
\newcommand{\bSET}[1]{\big\{#1\big\}}
\newcommand{\tSET}[1]{\{#1\}}
\newcommand{\Ex}[2][]{\text{\rm\bf E}_{#1}\hspace{-0.03cm}\left[#2\right]}
\renewcommand{\Pr}[2][]{\text{\rm\bf Pr}_{#1}\hspace{-0.03cm}\left[#2\right]}
\newcommand{\DOT}{\,.}
\newcommand{\COMMA}{\,,}
\newcommand{\WHERE}{\,\colon\,}
\newcommand{\DEF}{\mathop{:=}}
\newcommand{\FED}{\mathop{=:}}
\newcommand{\OTHERWISE}{\text{otherwise}}
\newcommand{\tIF}{\text{if}\ }
\newcommand{\poly}{\text{poly}}
\newcommand{\ceil}[1]{\left\lceil #1 \right\rceil}
\newcommand{\floor}[1]{\left\lfloor #1 \right\rfloor}
\newcommand{\e}{\varepsilon}
\newcommand{\ID}[1]{\mathbb{I}_{#1}}
\newcommand{\ZERO}[2]{\mathbb{O}_{#1\times#2}}
\newcommand{\NULL}[1][]{\mathbb{O}_{#1}}
\renewcommand{\th}{^\text{th}}
\newcommand{\lmapsto}{\mathrel{\reflectbox{$\mapsto$}}}
\providecommand{\qedhere}{\tag*{\qed}}
\DeclareMathOperator*{\argmin}{arg\,min}
\newtheorem{theorem}{Theorem}
\newtheorem{lemma}[theorem]{Lemma}
\newtheorem{claim}{Claim}
\newtheorem{corollary}[theorem]{Corollary}
\newtheorem{proposition}[theorem]{Proposition}
\theoremstyle{definition}
\newtheorem{definition}[theorem]{Definition}
\providecommand{\institute}[1]{\date{#1}}
\providecommand{\email}[1]{\texttt{#1}}
\begin{document}

\title{Improved Smoothed Analysis of Multiobjective Optimization\thanks{The paper appeared in a preliminary version in the proceedings of STOC~2012 and will appear in JACM.}}

\author{Tobias Brunsch \and Heiko R\"oglin}

\institute
{%
Department of Computer Science\\
University of Bonn, Germany\\
\email{\small\{brunsch,roeglin\}@cs.uni-bonn.de}
}%

\maketitle

\begin{abstract}
We present several new results about smoothed analysis of multiobjective optimization problems.
Motivated by the discrepancy between worst-case analysis and practical experience, this line of research
has gained a lot of attention in the last decade. We consider problems in which~$d$ linear and one arbitrary
objective function are to be optimized over a set $\S \subseteq \SET{ 0, 1 }^n$ of feasible solutions.
We improve the previously best known bound for the smoothed number of Pareto-optimal solutions
to $O(n^{2d} \phi^d)$, where~$\phi$ denotes the perturbation parameter.  
Additionally, we show that for any constant~$c$ the $c\th$ moment of the smoothed number of Pareto-optimal
solutions is bounded by $O((n^{2d} \phi^d)^c)$. This improves the previously best known bounds significantly.

Furthermore, we address the criticism that the perturbations in smoothed analysis destroy the zero-structure
of problems by showing that the smoothed number of Pareto-optimal solutions remains polynomially bounded even
for zero-preserving perturbations. This broadens the class of problems captured by smoothed analysis and it
has consequences for non-linear objective functions. One corollary of our result is that the smoothed number
of Pareto-optimal solutions is polynomially bounded for polynomial objective functions. Our results also extend
to integer optimization problems.
\end{abstract}

\section{Introduction}

In most real-life decision-making problems there is more than one objective to
be optimized. For example, when booking a train ticket, one wishes to minimize
the travel time, the fare, and the number of train changes. As different
objectives are often conflicting, usually no solution is simultaneously optimal
in all criteria and one has to make a trade-off between different objectives.
The most common way to filter out unreasonable trade-offs and to reduce the
number of solutions the decision maker has to choose from is to determine the set
of \emph{Pareto-optimal solutions}, where a solution is called Pareto-optimal if
no other solution is simultaneously better in all criteria.

Multiobjective optimization problems have been studied extensively in operations
research and theoretical computer science (see, e.g., \cite{Ehrgott05} for a
comprehensive survey). In particular, many algorithms for generating the set of Pareto-optimal
solutions for various optimization problems such as the (bounded) knapsack
problem~\cite{NemhauserU69,KlamrothW00}, the multiobjective shortest path
problem~\cite{CorleyM85,Hansen80,SkriverA00}, and the multiobjective network
flow problem~\cite{Ehrgott99,MustafaG98} have been proposed. Enumerating the set
of Pareto-optimal solutions is not only used as a preprocessing step to
eliminate unreasonable trade-offs, but often it is also used as an intermediate
step in algorithms for solving optimization problems. For example, the
Nemhauser--Ullmann algorithm~\cite{NemhauserU69} treats the single-criterion knapsack problem as
a bicriteria optimization problem in which a solution with small weight and
large profit is sought, and it generates the set of Pareto-optimal solutions,
ignoring the given capacity of the knapsack. After this set has been generated,
the algorithm returns the solution with the highest profit among all Pareto-optimal solutions
with weight not exceeding the knapsack capacity. This solution is optimal for the
given instance of the knapsack problem.

Generating the set of Pareto-optimal solutions (a.k.a.\ the \emph{Pareto set}) only makes sense if few solutions
are Pareto-optimal. Otherwise, it is too costly and it does not provide enough
guidance to the decision maker. While, in many applications, it has been
observed that the Pareto set is indeed usually small (see,
e.g., \cite{Mueller-HannemannW01} for an experimental study of the multiobjective
shortest path problem), one can, for almost every problem with more than one objective
function, find instances with an exponential number of Pareto-optimal
solutions (see, e.g., \cite{Ehrgott05}).

Motivated by the discrepancy between worst-case analysis and practical observations, 
\emph{smoothed analysis} of multiobjective optimization problems has gained a lot of attention
in the last decade. Smoothed analysis is a framework for judging the performance of
algorithms that has been proposed in 2001 by Spielman and Teng~\cite{SpielmanT04}
in order to explain why the simplex algorithm is efficient in practice even though it has an
exponential worst-case running time. 
In this framework, inputs are generated in two steps: first, an adversary chooses an
arbitrary instance, and then this instance is slightly perturbed at random. The
smoothed performance of an algorithm is defined to be the worst expected
performance the adversary can achieve. This model can be viewed as a less
pessimistic worst-case analysis, in which the randomness rules out pathological
worst-case instances that are rarely observed in practice but dominate the
worst-case analysis. If the smoothed running time of an algorithm is low
and inputs are subject to a small amount of random noise then it is unlikely to encounter
an instance on which the algorithm performs poorly. In practice, random noise
can stem from measurement errors, numerical imprecision or rounding errors. It
can also model arbitrary influences, which we cannot quantify exactly, but for
which there is also no reason to believe that they are adversarial.

After its invention in 2001, smoothed analysis has been successfully
applied in a variety of contexts, e.g., to explain the
practical success of local search methods, heuristics
for the knapsack problem, online algorithms, and clustering.
A recent survey by Spielman and Teng~\cite{SpielmanT09} summarizes some of these results.
One of the areas in which smoothed analysis has been applied extensively is
multiobjective optimization. In 2003 Beier and V\"ocking~\cite{BeierV04} initiated this line
of research by showing that the smoothed number of Pareto-optimal solutions is polynomially bounded
for all linear binary optimization problems with two objective functions. This was the first
rigorous explanation why heuristics for generating the set of Pareto-optimal solutions are successful
in practice despite their bad worst-case behavior.
In the last years, Beier and V\"ocking's original result has been improved and extended significantly
in a series of papers. A discussion of this work follows in the next section after the formal description
of the model.

\subsection{Model and Previous Work}

We consider a very general model of multiobjective optimization problems.
An instance of such a problem consists of $d+1$ objective functions $V^1, \ldots, V^{d+1}$
that are to be optimized over a set $\S \subseteq \SET{ 0, \ldots, \K }^n$ of feasible solutions
for some integer~$\K$. While the set~$\S$ and the last objective function $V^{d+1} \colon \S \to \RR$ can be
arbitrary, the first~$d$ objective functions have to be linear of the
form $V^t(x) = V^t_1 x_1 + \ldots + V^t_n x_n$ for $x = (x_1, \ldots, x_n) \in \S$ and $t \in \SET{ 1, \ldots, d }$.
We assume without loss of generality that all objectives are to be minimized and we call
a solution $x \in \S$ \emph{Pareto-optimal} if there is no solution $y \in \S$ which is at least as
good as~$x$ in all of the objectives and even better than~$x$ in at least one. We will introduce
this notion formally in Section~\ref{sec:notation}. The set of Pareto-optimal solutions is called
the \emph{Pareto set}. We are interested in the size of this set. As a convention, we count distinct
Pareto-optimal solutions that coincide in all objective values only once. Since we compare solutions
based on their objective values, there is no need to consider more than one
solution with exactly the same values.

If one is allowed to choose the set~$\S$, the objective function~$V^{d+1}$, and the coefficients of the
linear objective functions arbitrarily, then even for $d=1$, one can construct instances with an
exponential number of Pareto-optimal solutions. For this reason Beier and V\"ocking introduced
the model of \emph{$\phi$-smooth instances}~\cite{BeierV04}, in which an adversary can choose the set~$\S$
and the objective function~$V^{d+1}$ arbitrarily while he can only specify a probability density
function $f^t_i\colon[-1,1]\to[0,\phi]$ for each coefficient~$V^t_i$ according to which it is chosen
independently of the other coefficients. This model is more general than Spielman and Teng's original
two-step model in which the adversary first chooses coefficients which are afterwards subject to 
Gaussian perturbations. In $\phi$-smooth instances the adversary can additionally determine the type of noise.
He could, for example, specify for each coefficient
an interval of length $1/\phi$ from which it is chosen uniformly at random.
The parameter $\phi \ge 1/2$ can be seen as a measure
for the power of the adversary: the larger~$\phi$ the more precisely he can specify the coefficients
of the linear objective functions. The aforementioned example of uniform distributions in intervals of length $1/\phi$ shows that for $\phi\to\infty$
smoothed analysis becomes a worst-case analysis. 

The \emph{smoothed number of Pareto-optimal solutions} depends on the number~$n$ of integer variables, the maximum integer~$\K$, and the perturbation
parameter~$\phi$. It is defined to be the largest expected number of Pareto-optimal solutions the adversary can
achieve by any choice of $\S \subseteq \SET{ 0, \ldots, \K }^n$, $V^{d+1} \colon \S \to \RR$, and the densities $f^t_i \colon [-1,1] \to [0,\phi]$.
In the following we assume that the adversary has made arbitrary fixed choices for these entities.
Then we can associate with every matrix $V \in \RR^{d \times n}$ the number $\PO(V)$ of Pareto-optimal solutions in~$\S$
when the coefficients~$V^t_i$ of the~$d$ linear objective functions take the values given in~$V$.
Assuming that the adversary has made worst-case choices for~$\S$, $V^{d+1}$, and the densities~$f^t_i$,
the \emph{smoothed number of Pareto-optimal solutions} is the expected value $\Ex[V]{\PO(V)}$, where the coefficients in~$V$
are chosen according to the densities~$f^t_i$.
For $c\ge 1$, we call $\Ex[V]{\PO^c(V)}$ the \emph{$c$-th moment of the smoothed number of Pareto-optimal solutions}.
Here we assume that the adversary has made worst-case choices for~$\S$, $V^{d+1}$, and the densities~$f^t_i$
that maximize~$\Ex[V]{\PO^c(V)}$ (in general, these are different from the choices that maximize~$\Ex[V]{\PO(V)}$). 
   
Beier and V\"ocking~\cite{BeierV04} showed that for the binary bicriteria case (i.e., $\K=d=1$) 
the smoothed number of Pareto-optimal solutions is $O(n^4\phi)$ and $\Omega(n^2)$.
The upper bound was later simplified and improved by Beier et al.~\cite{BeierRV07}
to $O(n^2\phi)$. In his PhD thesis~\cite{BeierPhD}, Beier conjectured that the smoothed
number of Pareto-optimal solutions is polynomially bounded in~$n$ and~$\phi$ for $\K=1$ and every constant~$d$.
This conjecture was proven by R\"oglin and Teng~\cite{RoeglinT09}, who showed that for
binary solutions and for any fixed $d \geq 1$, the smoothed number of Pareto-optimal solutions is $O((n^2\phi)^{f(d)})$,
where the function~$f$ is roughly $f(d)=2^{d}d!$. They also proved that for any constant~$c$
the $c$-th moment of the smoothed number of Pareto-optimal solutions is bounded
by $O((n^2\phi)^{c\cdot f(d)})$. Moitra and O'Donnell~\cite{MoitraO11} improved the
bound for the smoothed number of Pareto-optimal solutions significantly to $O(n^{2d}\phi^{d(d+1)/2})$. 
However, it remained unclear how to improve the bound for the moments by their methods.
Recently a lower bound of $\Omega(n^{d-1.5}\phi^d)$ for the smoothed number of Pareto-optimal solutions was proven~\cite{BrunschGRR14}.

\subsection{Our Results}

In this article, we present several new results about smoothed analysis of multiobjective binary and integer optimization problems.
Besides general $\phi$-smooth instances, we additionally consider the special case of 
\emph{quasiconcave density functions}. This means that we assume that every coefficient~$V^t_i$ is chosen
independently according to its own density function $f^t_i \colon [-1,1] \to [0,\phi]$ with the additional requirement that
for every density~$f^t_i$ there is a value $x^t_i \in [-1,1]$ such that~$f^t_i$ is non-decreasing in the interval $[-1,x^t_i]$
and non-increasing in the interval $[x^t_i,1]$. We do not think that this is a severe restriction because all natural
perturbation models, like Gaussian or uniform perturbations, use quasiconcave density functions. Furthermore, quasiconcave
densities capture the essence of a perturbation: each coefficient~$V^t_i$ has an unperturbed value~$x^t_i$ and the probability
that the perturbed coefficient takes a value~$z$ becomes smaller with increasing distance $|z - x^t_i|$. We will call
these instances \emph{quasiconcave $\phi$-smooth instances} in the following.

Beier and V\"ocking originally only considered $\phi$-smooth instances for binary bicriteria optimization problems
(i.e., for the case $\K=d=1$). The above described canonical generalization of this model to binary multiobjective optimization
problems, on which R\"oglin and Teng's~\cite{RoeglinT09} and Moitra and O'Donnell's results~\cite{MoitraO11}
are based, appears to be very general and flexible at the first glance. However, one aspect
limits its applicability severely and makes it impossible to formulate certain multiobjective linear
optimization problems in this model. The weak point of the model is that it assumes that every binary variable~$x_i$
appears in every linear objective function as it is not possible to set some coefficients~$V^t_i$ deterministically to~$0$.

Already Spielman and Teng~\cite{SpielmanT04} and Beier and V\"ocking~\cite{BeierV06} observed that the zeros often
encode an essential part of the combinatorial structure of a problem and they suggested to analyze \emph{zero-preserving perturbations}
in which it is possible to either choose a density~$f^t_i$ according to which the coefficient~$V^t_i$ is chosen or to
set it deterministically to~$0$. Zero-preserving perturbations have been studied in~\cite{SankarST06} and~\cite{BeierV06} for analyzing
smoothed condition numbers of matrices and the smoothed complexity of binary optimization problems. 
For the smoothed number of Pareto-optimal solutions no upper bounds are known that are valid
for zero-preserving perturbations (except trivial worst-case bounds), and in particular the bounds  
proven in~\cite{RoeglinT09} and~\cite{MoitraO11} do not seem to generalize easily to zero-preserving perturbations.
In this article, we develop new techniques for analyzing the smoothed number of Pareto-optimal solutions that can also be used
for analyzing zero-preserving perturbations.

\begin{theorem}\label{thm:MainZeroPreserving}
For any $d \geq 1$, the smoothed number of Pareto-optimal solutions is $\K^{(d+1)^5} \cdot O(n^{d^3+d^2+d} \phi^d)$ 
for quasiconcave $\phi$-smooth instances with zero-preserving perturbations and $\K^{(d+1)^5} \cdot O((n\phi)^{d^3+d^2+d})$ for general $\phi$-smooth instances with zero-preserving perturbations.
\end{theorem}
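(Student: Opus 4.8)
The natural strategy is to peel off the $d$ linear objectives one at a time, reducing a zero-preserving instance with $d$ linear objectives to one with $d-1$ linear objectives, and to bottom out at the bicriteria case; each of the $d$ rounds should cost only a factor polynomial in~$n$ (and, for general densities, also in~$\phi$). Throughout I would count the number of distinct objective vectors realized by Pareto-optimal solutions --- which is what enumeration heuristics actually produce and which coincides with~$\PO(V)$ in general position --- and I would assume without loss of generality that no variable is irrelevant to all $d+1$ objectives, since such a variable can be projected away together with the fiber-wise minimum of~$V^{d+1}$ without changing the count.

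The base case~$d=1$ is clean: with a single linear objective~$V^1$, let~$Z$ be the coordinates on which~$V^1$ vanishes deterministically and~$R=\{1,\ldots,n\}\setminus Z$. Since~$V^1$ only sees the~$R$-coordinates, in each fiber~$\{x\in\S\WHERE x_R=a\}$ at most the minimizer of~$V^{2}$ is Pareto-optimal, so passing to~$\pi_R(\S)$ with the new arbitrary objective~$W(a)=\min\{V^{2}(x)\WHERE x\in\S,\ x_R=a\}$ gives a genuine $\phi$-smooth bicriteria instance in~$|R|\le n$ variables of \emph{full} random support, whose Pareto set has expected size~$O(n^2\phi)$ by Beier et al.~\cite{BeierRV07}; this is comfortably within the claimed~$O(n^{3}\phi)$.

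For the inductive step I would single out~$V^1$ and try to make it ``disappear''. The naive move --- project out the coordinates~$Z_1$ on which~$V^1$ vanishes, as in the base case --- fails precisely because~$V^2,\ldots,V^{d}$ also read those coordinates, so a Pareto-optimal solution of~$\S$ is not recoverable from a bounded number of the exponentially many slices obtained by fixing the~$Z_1$-part. The approach I would take instead is to charge each Pareto-optimal solution~$x$ to a small \emph{certificate}: a tuple of variable indices (roughly $d^2$ of them, on the order of one per pair of currently active objectives) together with a constant number of auxiliary solutions that are themselves determined by the indices and by~$x$ (each being ``the best solution subject to'' a condition formulated in terms of the indices), so that the certificate has only~$n^{\mathrm{poly}(d)}$ instantiations and such that (i)~the certificate confines~$V^1(x)$ to a window of length~$1/\mathrm{poly}(n)$ while constraining only the random coefficients of~$V^1$, and (ii)~conditioned on the certificate, the remaining~$d-1$ linear objectives~$V^2,\ldots,V^{d}$ (with their zeros intact), the arbitrary objective~$V^{d+1}$, and the negligible residual freedom in~$V^1$ form a legitimate zero-preserving $\phi$-smooth instance with~$d-1$ linear objectives, to which the induction hypothesis applies. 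The number of Pareto-optimal solutions is then at most the sum over all certificates of the probability that the certificate is ``realized'' times the $(d-1)$-objective bound; bounding the realization probability by an anti-concentration estimate over the few coefficients of~$V^1$ that a certificate touches, one loses a polynomial-in-$n$ factor per round, and accumulating the~$d$ rounds yields the exponent~$d^3+d^2+d$. Quasiconcavity is what keeps the~$\phi$-dependence at~$\phi^{d}$: it makes each anti-concentration step cost only a factor proportional to~$\phi$, so~$\phi$ is spent essentially once per linear objective, whereas for a merely $\phi$-bounded density each such step has to be done by discretizing a short interval into~$\Theta(n)$ cells and paying a factor~$\phi$ for each, which is exactly what turns~$n^{d^3+d^2+d}\phi^{d}$ into~$(n\phi)^{d^3+d^2+d}$.

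I expect essentially all of the work to be in the inductive step, and specifically in the design and analysis of the certificate: it must have only polynomially many instantiations, be realizable with probability at most~$\mathrm{poly}(\phi)/\mathrm{poly}(n)$, and --- the delicate point --- leave behind, after conditioning, a family of coefficients that is still independent and $\phi$-bounded, so that the induction hypothesis genuinely applies and, in the quasiconcave case, the~$\phi$-exponent does not grow. Taming the coordinates that~$V^1$ shares with~$V^2,\ldots,V^{d}$ is the obstacle that is absent both in the bicriteria case and in the full-support setting, and it is the reason for the polynomial overhead relative to the~$O(n^{2d}\phi^d)$ bound that holds when no coefficient is forced to zero.
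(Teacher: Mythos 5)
Your base case is fine (the paper itself notes that in the bicriteria case zero-preserving perturbations can be simulated away), but the inductive step, which you yourself identify as containing all the work, has a genuine gap, and the route you sketch runs into known obstructions. First, a certificate whose realization ``constrains only the random coefficients of~$V^1$'' is not available: any witness-style auxiliary solution (``the best solution subject to a condition formulated in terms of the indices'') is selected by minimizing the \emph{other} objectives $V^2,\ldots,V^{d+1}$ over sets cut out by $V^1$-constraints, so the realization event depends on all coefficients. Conditioning on it therefore destroys the independence and $\phi$-boundedness of the coefficients of $V^2,\ldots,V^d$, and the induction hypothesis does not apply to the conditional instance; this is exactly the ``delicate point'' you flag, and it is the crux, not a technicality. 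Second, even granting decoupling, the decomposition $\Ex[V]{\PO}\le\sum_{\text{cert}}\Pr[V]{\text{cert realized}}\cdot(\text{bound for }d-1)$ is precisely the structure of R\"oglin and Teng's original induction, where the correlation between the window carved out by $V^1$ and the size of the residual Pareto set forces a detour through higher moments of the $(d-1)$-objective problem and yields exponents of order $2^d d!$; nothing in your sketch explains how one polynomial factor per peeled objective suffices. Third, the zero structure creates a specific failure mode you never address: with zero-preserving perturbations an auxiliary solution can coincide with $x$ on the perturbed coordinates of some objectives, in which case the linear combinations your certificate reveals already determine the quantity whose anti-concentration you want to exploit.

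For contrast, the paper's proof is not an induction on $d$. After two preprocessing reductions (each linear objective has many perturbed coefficients; each variable is perturbed in exactly one objective), it runs a single global witness procedure that, whenever a generated auxiliary solution ties with $x$ on the perturbed coordinates of some objectives, records this and restarts with the remaining objectives on the correspondingly restricted solution set (using that $x$ remains Pareto-optimal there). The resulting certificate --- at most $d^3+d^2+d$ indices, the restricted auxiliary vectors, and the restart schedule --- allows the procedure to be re-executed knowing only certain $\{-1,0,1\}$-linear combinations of the unrevealed coefficients; a linear-independence lemma shows these combinations together with the $d$ combinations determining $V^{1\ldots d}x$ are independent, and a new anti-concentration theorem for quasiconcave densities bounds, in one shot, the probability that the $d$ designated combinations land in a fixed $\e$-box by a term proportional to $\e^d\phi^d$ (and $\e^d\phi^{\gamma}$ with $\gamma=d^3+d^2+d$ in general). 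Summing over the $O(n^{d^2(d+1)})$ certificates and the $(2n/\e)^d$ boxes gives the stated bounds; in particular the $\phi^d$ comes from this simultaneous estimate, not from spending $\phi$ once per objective in successive conditionings. To salvage your plan you would have to supply exactly this kind of mechanism (or another way to avoid conditioning), which is the part your proposal leaves open.
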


Let us remark that the bounds stated in Theorem~\ref{thm:MainZeroPreserving} hold for any $\K \geq 1$ and not only for sufficiently large values of~$\K$. This is why the factor $\K^{(d+1)^5}$ is outside of the $O$-notation. The $O$-notation only refers to the parameters~$n$ and~$\phi$. For constant~$\K$ like in the binary case the factor $\K^{(d+1)^5}$ is a constant for fixed~$d$. In Section~\ref{subsec:ZeroPreserving} we will present some applications of zero-preserving perturbations.
We will see that they allow us not only to extend the smoothed analysis to linear multiobjective optimization problems
that are not captured by the previous model without zero-preserving perturbations, but that they also enable us
to bound the smoothed number of Pareto-optimal solutions in problems with \emph{non-linear objective functions}.
In particular, the number of Pareto-optimal solutions for multivariate polynomial objective functions can be bounded by
Theorem~\ref{thm:MainZeroPreserving}. We say that a $\phi$-smooth instance has polynomial objective functions if  
every objective function~$V^t$, $t \in \SET{ 1,\ldots, d }$, is the weighted sum of monomials, where the
adversary can specify a $\phi$-bounded density on $[-1, 1]$ for every weight according to which it is chosen. Denote the total number of monomials by~$m$ and let~$\Delta$ denote the maximum degree of the monomials. Then the following corollary holds.

\begin{corollary}
For any $d \geq 1$, the smoothed number of Pareto-optimal solutions is $\K^{(d+1)^5 \Delta} \cdot O(m^{d^3+d^2+d} \phi^d)$ 
for quasiconcave $\phi$-smooth instances with polynomial objective functions. 
For general $\phi$-smooth instances with polynomial objective functions
the smoothed number of Pareto-optimal solutions is $\K^{(d+1)^5 \Delta} \cdot O((m\phi)^{d^3+d^2+d})$.
\end{corollary}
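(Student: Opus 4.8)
The plan is to reduce the corollary to Theorem~\ref{thm:MainZeroPreserving} by encoding each polynomial objective function as a linear objective over a larger (but still polynomially sized) set of binary variables, using zero-preserving perturbations to handle the fact that each monomial appears in only one of the $d$ objectives. Concretely, suppose that $V^t(x)=\sum_{j=1}^{m}w^t_j M^t_j(x)$, where each $M^t_j$ is a monomial in the original variables $x_1,\ldots,x_n$ and each weight $w^t_j$ has a $\phi$-bounded density on $[0,1]$. The first step is to introduce, for every monomial $M^t_j$, a new binary variable $y^t_j$, and to define the new feasible set $\S'\subseteq\{0,1\}^{N}$ with $N=\sum_{t=1}^d m_t\le dm$ as the image of $\S$ under the map that sends $x$ to the tuple $\bigl((x,\;(M^t_j(x))_{t,j})\bigr)$, restricted to those coordinates $y^t_j$ that actually occur; since each $M^t_j(x)\in\{0,1\}$ for $x\in\{0,1\}^n$, this is well defined, and the map $\S\to\S'$ is a bijection (the original $x$ is recoverable from the first $n$ coordinates), so $\S$ and $\S'$ have the same number of Pareto-optimal solutions with respect to the corresponding objective vectors.

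The second step is to set up the linear objectives on $\S'$. For $t\in\{1,\ldots,d\}$ define $\tilde V^t(y)=\sum_{j=1}^{m}w^t_j\,y^t_j$; as a function on $\S'$ this agrees with $V^t$ under the bijection above. Crucially, $\tilde V^t$ has a coefficient equal to $w^t_j$ on the variable $y^t_j$ for the monomials of objective $t$, and a coefficient deterministically equal to zero on all variables $y^{t'}_{j}$ with $t'\neq t$ and on the original variables $x_1,\ldots,x_n$. This is exactly the kind of instance covered by Theorem~\ref{thm:MainZeroPreserving}: the nonzero coefficients are chosen independently according to the $\phi$-bounded (and, in the quasiconcave case, quasiconcave) densities of the weights $w^t_j$, and all remaining coefficients are zero-preserved. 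The arbitrary $(d{+}1)$-st objective $V^{d+1}$ carries over unchanged via the bijection.

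The third step is simply to apply the theorem with $n$ replaced by $N\le dm$. For quasiconcave $\phi$-smooth instances this gives a bound of $O(N^{d^3+d^2+d}\phi^d)=O((dm)^{d^3+d^2+d}\phi^d)=O(m^{d^3+d^2+d}\phi^d)$, since $d$ is treated as a constant and is absorbed into the $O$-notation; for general $\phi$-smooth instances it gives $O((N\phi)^{d^3+d^2+d})=O((m\phi)^{d^3+d^2+d})$, as claimed.

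The main thing to be careful about — rather than a deep obstacle — is making sure the encoding genuinely fits the hypotheses of Theorem~\ref{thm:MainZeroPreserving} as stated: the densities there are required to be supported on $[-1,1]$ (and map into $[0,\phi]$), whereas the weights here are specified on $[0,1]\subseteq[-1,1]$, so this is consistent, and $[0,1]$-supported densities are automatically quasiconcave-compatible in the quasiconcave setting. One should also note that $\S'$ is allowed to be an arbitrary subset of $\{0,1\}^N$, which it is, and that the theorem places no restriction on $V^{d+1}$, so transporting it along the bijection is harmless. With these checks in place the corollary follows immediately; there is no genuinely hard step, only the bookkeeping of the variable count and the verification that the zero-preserving structure of the model is exactly what the polynomial-objective encoding produces.
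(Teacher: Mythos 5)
Your overall strategy is exactly the paper's: introduce one binary variable per monomial, exploit zero-preserving perturbations to set all cross-coefficients (variables $y^{t'}_j$ with $t'\neq t$) deterministically to zero in $\tilde V^t$, and invoke Theorem~\ref{thm:MainZeroPreserving} with the number of variables bounded by $O(m)$. However, as written the reduction has a genuine gap, visible in the inconsistency of your definition of $\S'$. If you keep the original coordinates $x_1,\ldots,x_n$ in the new solution vectors (which is what your bijection claim ``the original $x$ is recoverable from the first $n$ coordinates'' requires), then the new instance has $N=n+\sum_t m_t$ variables, and Theorem~\ref{thm:MainZeroPreserving} only yields a bound polynomial in $n+dm$, not in $m$: the number $n$ of original variables can be arbitrarily larger than the number $m$ of monomials (a single monomial $x_1x_2\cdots x_n$ already has $m=1$ and $n$ arbitrary), so the claimed bounds $O(m^{d^3+d^2+d}\phi^d)$ and $O((m\phi)^{d^3+d^2+d})$ do not follow. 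If instead you drop the original coordinates, so that $N=\sum_t m_t\le dm$ as you state, then the map $x\mapsto (M^t_j(x))_{t,j}$ is in general many-to-one, the bijection claim is false, and you cannot simply ``transport $V^{d+1}$ along the bijection.''

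The missing step is the one the paper carries out explicitly in the bicriteria illustration: assume without loss of generality that $V^{d+1}$ is injective, define the adversarial objective on the pattern set by $W^{d+1}(y)=\min\bigl\{V^{d+1}(x)\;\bigm\vert\;\forall t,j:\,M^t_j(x)=y^t_j\bigr\}$, and observe that all solutions in the same fiber have identical values in $V^1,\ldots,V^d$, so only the fiber's $V^{d+1}$-minimizer can ever be Pareto-optimal; this gives a bijection between the Pareto sets of the original instance and of the linearized instance on $\{0,1\}^{\sum_t m_t}$. With this modification your argument goes through and coincides with the paper's: the linearized instance is a quasiconcave (respectively general) $\phi$-smooth instance with zero-preserving perturbations on at most $dm$ variables, and Theorem~\ref{thm:MainZeroPreserving} yields the stated bounds after absorbing the constant $d$ into the $O$-notation.
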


In addition to zero-preserving perturbations we also study the standard model of $\phi$-smooth instances.
We present significantly improved bounds for the smoothed number of Pareto-optimal solutions and the moments, answering 
two questions posed by Moitra and O'Donnell~\cite{MoitraO11}.

\begin{theorem}\label{thm:MainFirstMoment}
For any $d \geq 1$, the smoothed number of Pareto-optimal solutions
is $\K^{2(d+1)^2} \cdot O(n^{2d} \phi^d)$ for quasiconcave $\phi$-smooth instances and
$\K^{2(d+1)^2} \cdot O(n^{2d} \phi^{d(d+1)})$ for general $\phi$-smooth instances.
\end{theorem}  

The bound of Theorem~\ref{thm:MainFirstMoment} for quasiconcave $\phi$-smooth instances improves the previously
best known bound of $O(n^{2d}\phi^{d(d+1)/2})$ in the binary case (which is, however, valid also for non-qua\-si\-con\-cave densities)
and it answers a question posed by Moitra and O'Donnell whether it is
possible to improve the factor of $\phi^{d(d+1)/2}$ in their bound~\cite{MoitraO11}.
Together with the recent lower bound of $\Omega(n^{d-1.5}\phi^d)$~\cite{BrunschGRR14}, which is also valid for
quasiconcave density functions, this shows that the exponents of both~$n$ and~$\phi$ are linear in~$d$.

\begin{theorem}\label{thm:HigherMoment}
For any $d \geq 1$ and any constant $c \in \NN$, the $c$-th moment of the smoothed number of Pareto-optimal solutions
is $\K^{(c+1)^2 (d+1)^2} \cdot O((n^{2d} \phi^d)^c)$ for quasiconcave $\phi$-smooth instances and $\K^{(c+1)^2 (d+1)^2} \cdot O((n^{2d} \phi^{d(d+1)})^c)$ for general $\phi$-smooth instances.
\end{theorem}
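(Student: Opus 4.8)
The plan is to reduce the $c$-th moment to a statement about the first moment ``with multiplicity~$c$'', and then to run the machinery behind Theorem~\ref{thm:MainFirstMoment} on~$c$ copies of the problem simultaneously, the new ingredient being a conditional-independence argument that decouples the copies. Recall that the proof of Theorem~\ref{thm:MainFirstMoment} charges every Pareto-optimal solution (apart from~$O(1)$ ``trivial'' ones) to a discrete \emph{witness} taken from a set~$\mathcal{W}$ of polynomial size, essentially~$|\mathcal{W}|=O(n^{2d-1})$, and shows that for each fixed~$w\in\mathcal{W}$ the probability over~$V$ that some Pareto-optimal solution realizes~$w$ is~$O(n\phi^d)$, so that~$|\mathcal{W}|\cdot O(n\phi^d)$ is the first-moment bound. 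Writing~$N(V)$ for the number of witnesses in~$\mathcal{W}$ that are realized, we have~$\PO(V)\le O(1)+N(V)$, and hence, for fixed~$c$,
\[
  \Ex[V]{\PO^c(V)}\;=\;O(1)+O\big(\Ex[V]{N(V)^c}\big)\;=\;O(1)+O\bigg(\sum_{(\LIST{w}{c})\in\mathcal{W}^c}\Pr[V]{w_1,\ldots,w_c\text{ are all realized}}\bigg)\DOT
\]
Thus it suffices to prove that for every fixed tuple~$(\LIST{w}{c})$ the joint probability is~$O\big((n\phi^d)^c\big)$; multiplying by~$|\mathcal{W}^c|=O(n^{(2d-1)c})$ then gives the claimed bound~$O((n^{2d}\phi^d)^c)$.

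The heart of the argument is the factorization~$\Pr[V]{w_1,\ldots,w_c\text{ all realized}}=O\big((n\phi^d)^c\big)$. One cannot obtain this by applying Theorem~\ref{thm:MainFirstMoment} to a product instance, because the~$c$ copies of the problem would have to share the same random coefficients, and for correlated events the moments genuinely exceed the first moment raised to the~$c$-th power. Instead I would decouple the~$c$ events by the principle of deferred decisions: expose the entries of~$V$ in a carefully chosen order so that, for each~$j$, the event ``$w_j$ is realized'' becomes --- up to a union bound of size~$O(n)$ absorbed into the factor~$n$ --- an assertion about a block of~$d$ still-hidden coordinates of~$V$ whose conditional density is still~$\phi^d$-bounded (a~$\phi$-bounded density stays~$\phi$-bounded under conditioning on the remaining coordinates), and so that the~$c$ blocks of hidden coordinates are pairwise disjoint and therefore conditionally independent. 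Given such an exposure order, each of the~$c$ events has conditional probability~$O(n\phi^d)$ by exactly the estimate used in the first-moment proof, and multiplying yields~$O((n\phi^d)^c)$.

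The main obstacle is designing this exposure order uniformly over all~$c$-tuples of witnesses --- in particular, showing that one can always isolate~$c$ pairwise disjoint blocks of~$d$ ``free'' coordinates, one block per witness~$w_j$, even when several of the~$w_j$ refer to overlapping index sets. This is precisely where the structure of the witnesses from Theorem~\ref{thm:MainFirstMoment} (and quasiconcavity, which keeps the relevant one-dimensional conditional densities~$\phi$-bounded and unimodal) must be exploited, and it is where the earlier analysis of R\"oglin and Teng~\cite{RoeglinT09} was forced into the extra factor~$2^d d!$. I expect that handling witnesses with large overlap --- possibly by refining the witness set, at the cost of a larger but still polynomial~$|\mathcal{W}|$, so that disjoint free coordinates can always be selected --- will be the technically most demanding step.

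Finally, for general (non-quasiconcave)~$\phi$-smooth instances I would carry out the same scheme with the weaker per-event estimate available there: the conditional density of the relevant block of coordinates can then be bounded only by~$\phi^{d(d+1)}$ instead of~$\phi^d$, so the factorization yields~$O\big((n\phi^{d(d+1)})^c\big)$ per witness tuple and hence~$O\big((n^{2d}\phi^{d(d+1)})^c\big)$ in total, as claimed.
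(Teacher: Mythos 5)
Your high-level skeleton (expand the $c$-th moment as a sum over $c$-tuples of witnesses and bound each joint probability) matches the paper's, but the step that carries all the difficulty --- the joint probability bound --- is not actually proved, and the mechanism you propose for it does not work as stated. You want to expose~$V$ so that each event ``$w_j$ is realized'' becomes an assertion about its own block of~$d$ still-hidden coordinates, with the~$c$ blocks pairwise disjoint and hence conditionally independent. But the position~$V^{1\ldots d}x_j$ of the $j$-th candidate is a collection of \emph{linear combinations} of the perturbed coefficients restricted to a common index set, and for different~$j$ these linear combinations involve overlapping (indeed, essentially the same) coordinates; there is in general no way to assign to each witness its own disjoint block of free coordinates, no matter how the witness set is refined. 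You acknowledge this (``handling witnesses with large overlap \ldots will be the technically most demanding step''), i.e.\ the decoupling is deferred rather than established --- and since everything else in your argument is routine given the first-moment machinery, this deferred step is the entire content of the theorem. Note also that in your disjoint-block picture quasiconcavity would play no role at all (conditioning independent coordinates on \emph{other} coordinates trivially preserves $\phi$-boundedness), which is a sign that the picture is too simple: quasiconcavity is needed precisely because one conditions on linear combinations of the \emph{same} coordinates.

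The paper resolves this differently: the~$c$ runs of the \tWitness\ function are \emph{coordinated}, each later call receiving the index tuple accumulated by the earlier ones as forbidden indices and being forced to agree with them there. This yields a single index set~$I^*$ of size~$c(d+1)$ and, via Lemmas~\ref{lemma:shifted certificate form} and~\ref{lemma:linear independence}, a block-triangular structure showing that all~$cd$ linear combinations determining the candidates' positions are linearly independent of each other and of the combinations revealed to execute the calls. The joint event is then treated in one shot --- not as a product of $c$ conditionally independent events --- by the change-of-variables estimate of Theorem~\ref{theorem.Prob:enough randomness}, giving~$O(\phi^{cd}\e^{cd})$ for quasiconcave densities and~$O(\phi^{cd(d+1)}\e^{cd})$ in general (Corollary~\ref{corollary.High:success unlikely}); combined with the~$O(n^{cd})$ certificate count and the~$(2n/\e)^{cd}$ box tuples this gives the theorem. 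So the missing idea in your proposal is exactly this coordination of the witness calls plus the linear-independence/change-of-variables argument that replaces your hoped-for disjointness; without it the factorization~$O((n\phi^d)^c)$ of the joint probability is unsupported.
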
  

This answers a question in~\cite{MoitraO11} whether it is possible to improve the bounds for the 
moments in~\cite{RoeglinT09} and it yields better concentration bounds for the smoothed number of Pareto-optimal solutions. 
Our results also have immediate consequences for the expected running times of various algorithms because  
most heuristics for generating the Pareto set of some problem (including the ones mentioned at the beginning of the introduction) 
have a running time that depends linearly or quadratically on the size of the Pareto set.

The straightforward extension of the Nemhauser-Ullmann algorithm~\cite{NemhauserU69} to the multiobjective knapsack problem
has, for example, a running time of~$\Theta(\sum_{i=0}^{n-1}|P_i|^2)$ on instances with~$n$ items where~$P_i$ denotes the
Pareto set of the instance that consists only of the first~$i$ items. (For~$d=1$ the running time can be made linear in~$|P_i|$ if the sets~$P_i$
are stored in sorted order.) Other examples are the extensions of the Bellman-Ford algorithm and the Floyd-Warshall algorithm
to multiobjective shortest path problems (see, e.g.,~\cite{EhrgottG02}) whose running times depend linearly (for $d=1$) or
quadratically (for~$d>1$) on the number of Pareto-optimal solutions in certain subproblems.
The improved bounds on the smoothed number of Pareto-optimal solutions and the second moment of this number
yield improved bounds on the smoothed running times of these and various other algorithms.

Note that our analysis also covers the general case when the set~$\S$ is an arbitrary subset of $\SET{ -\K, \ldots, \K }^n$.
In this case, consider the shifted set $\S' = \SET{ x + u \WHERE x \in \S } \subseteq \SET{ 0, \ldots, 2\K }$
for $u = (\K, \ldots, \K)$ and the functions $W^1, \ldots, W^{d+1} \colon \S' \to \RR$, defined as $W^t x = V^t x$
for $t = 1, \ldots, d$ and $W^{d+1} x = V^{d+1}(x - u)$. The Pareto set with respect to~$\S$ and
$\SET{ V^1, \ldots, V^{d+1} }$ and the Pareto set with respect to~$\S'$ and $\SET{ W^1, \ldots, W^{d+1} }$ are
identical except for a shift of $(V^1 u, \ldots, V^d u, 0)$ in the image space. Hence, the sizes of both sets
are equal. All aforementioned results can be applied for~$\S'$ and $\SET{ W^1, \ldots, W^{d+1} }$, so they also
hold for~$\S$ and $\SET{ V^1, \ldots, V^{d+1} }$ if one replaces~$\K$ by $2\K$.

\subsection{Applications of Zero-preserving Perturbations}\label{subsec:ZeroPreserving}

Let us first of all remark that we can assume that the adversarial objective~$V^{d+1}$ is injective. If not,
then let $v_1, \ldots, v_\ell$ be the values taken by~$V^{d+1}$ and let $\Delta = \min_{i \neq j} |v_i - v_j|$.
Now, consider an arbitrary injective function $\delta \colon \S \to [0, \Delta)$ and define the new adversarial
objective as $W^{d+1} x = V^{d+1} x + \delta(x)$. Obviously, this function is injective and it preserves
the order of the solutions in~$\S$. This means that if $V^{d+1} x < V^{d+1} y$ for $x, y \in \S$, then also
$W^{d+1} x < W^{d+1} y$. Let~$x$ be a Pareto optimum with respect to~$\S$ and $\bSET{ V^1, \ldots, V^{d+1} }$
and let $x_2, \ldots, x_m$, $m \geq 1$, be all the other solutions for which $V^k x_i = V^k x$ for all $k \in \SET{ 1, \ldots, d+1 }$.
These are all Pareto optima but, due to our convention, we only count them once. Without loss of generality let~$x$
be the solution that minimizes~$W^{d+1}$ among these solutions. Then~$x$ is also Pareto-optimal with
respect to~$\S$ and $\bSET{ V^1, \ldots, V^d, W^{d+1} }$.

Before we give some applications of zero-preserving perturbations let us remark that in the bicriteria case,
which was studied in~\cite{BeierV04}, zero-preserving
perturbations are not more powerful than other perturbations because they can be simulated by the
right choice of $\S \subseteq \SET{ 0, \ldots, \K }^n$ and the objective function $V^2 \colon \S \to \RR$.

Assume, for example, that the adversary has chosen~$\S$ and~$V^2$ and has decided that the first
coefficient~$V^1_1$ of the first objective function should be deterministically set to~$0$. Also
assume without loss of generality that~$V^2$ is injective. We can partition the set~$\S$ into classes of
solutions that agree in all components except for the first one.
This means that two solutions $x \in \S$ and $y \in \S$ belong to the same class if $x_i = y_i$ for all $i \in \SET{ 2, \ldots, n }$.
All solutions in the same class have the same value in the first objective~$V^1$ as they differ only in the
binary variable~$x_1$, whose coefficient has been set to~$0$. We construct a new set of solutions~$\S'$ that
contains for every class only the solution with smallest value in~$V^2$. One can verify that the number of
Pareto-optimal solutions is the same with respect to~$\S$ and with respect to~$\S'$ because all solutions
in $\S \setminus \S'$ are dominated by solutions in~$\S'$. Then we transform the set $\S' \subseteq \SET{ 0, \ldots, \K }^n$ into a set $\S'' \subseteq \SET{ 0, \ldots, \K }^{n-1}$ by dropping the first component of
every solution. Furthermore, we define a function $W^2 \colon \S'' \to \RR$ that assigns to every solution $x \in \S''$
the same value that~$V^2$ assigns to the corresponding solution in~$\S'$. One can verify that the Pareto set
with respect to~$\S'$ and~$V^2$ is identical with the Pareto set with respect to~$\S''$ and~$W^2$. The only difference is
that in the latter problem we have eliminated the coefficient that is deterministically set to~$0$.
Such an easy reduction of zero-preserving perturbations to other perturbations does not seem to be possible
for $d \ge 2$ anymore.

\subsubsection*{Path Trading}

Berger et al.~\cite{BergerRZ11} study a model for routing in networks. In their model there is a
graph $G = (V, E)$ whose vertex set~$V$ is partitioned into mutually disjoint
sets $V_1, \ldots, V_k$. We can think of~$G$ as the Internet graph whose
vertices are owned and controlled by~$k$ different autonomous systems (ASs). We
denote by $E_i \subseteq E$ the set of edges inside~$V_i$.
The graph~$G$ is undirected, and each edge $e \in E$ has a length
$\ell_e \in \RR_{\geq 0}$. The traffic is modeled by a set of requests, where each
request is characterized by its source node $s \in V$ and its target node $t \in
V$. The Border Gateway Protocol (BGP) determines for each request $(s, t)$ the order in which
it has to be routed through the ASs. We say that a path~$P$ from~$s$ to~$t$ is valid if it
connects~$s$ to~$t$ and visits the ASs in the order specified by the BGP protocol. This means that the first AS
has to choose a path~$P_1$ inside~$V_1$ from~$s$ to some node in~$V_1$ that is connected to some
node $v_2 \in V_2$. Then the second AS has to choose a path~$P_2$ inside~$V_2$ from~$v_2$ to some
node in~$V_2$ that is connected to some node $v_3 \in V_3$ and so on.
For simplicity, the costs of routing a packet between two ASs are assumed to be~$0$, whereas
AS~$i$ incurs costs of $\sum_{e \in P_i} \ell_e$ for routing the packet inside~$V_i$ along path~$P_i$.  
In the common \emph{hot-potato routing}, every AS is only interested in minimizing its own costs
for each request.
To model this, there are~$k$ objective functions that map each valid path~$P$ to a cost 
vector $(C_1(P), \ldots, C_k(P))$, where
\[
   C_i(P) = \sum_{e \in P \cap E_i} \ell_e \quad \text{for $i \in \SET{ 1, \ldots, k }$} \DOT
\]

\begin{figure}
  \begin{center}
    \includegraphics[width=0.5\textwidth]{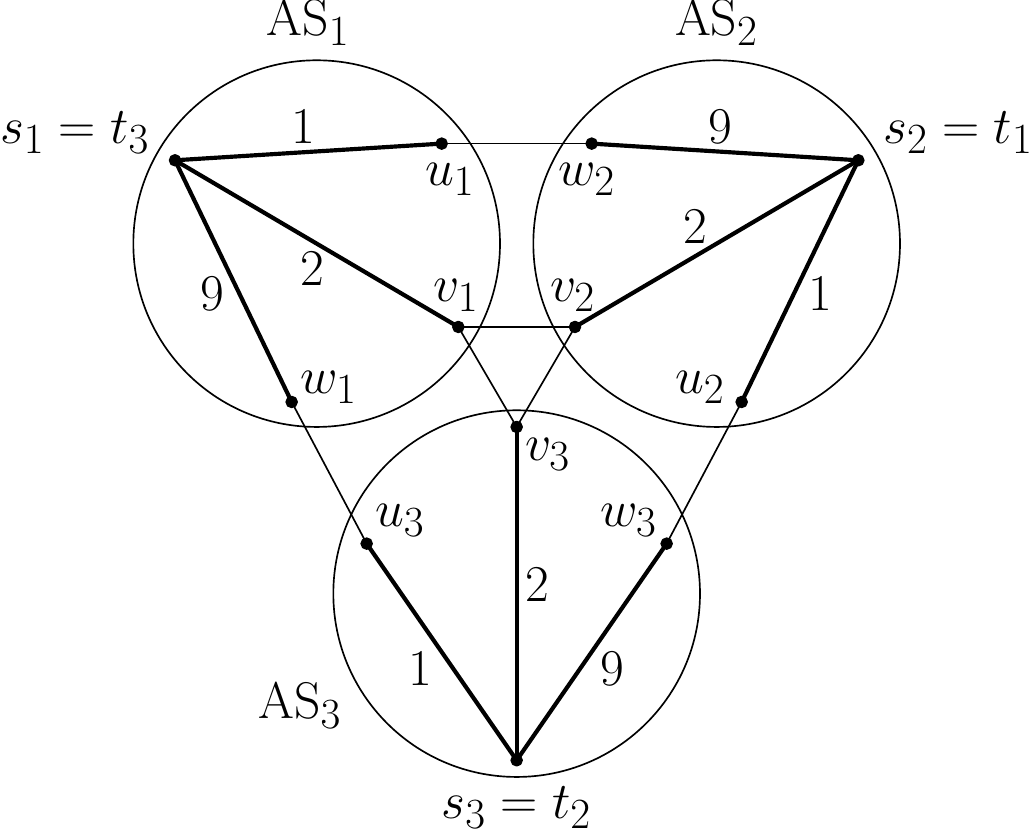}
    \caption{A network graph with three autonomous systems}
    \label{fig:path trading}
  \end{center}
\end{figure}

In~\cite{BergerRZ11} the problem of \emph{path trading} is considered. If there is only
one request, then no AS has an incentive to deviate from the hot-potato strategy.
The problem becomes more interesting if there are multiple requests that have to
be satisfied. Consider, for example, the three ASs depicted in Figure~\ref{fig:path trading}
and assume that there are three requests $(s_1, t_1)$, $(s_2, t_2)$, and $(s_3, t_3)$.
Moreover, assume that the BGP specifies that all requests from $s \in V_i$ to $t \in V_j$
shall be routed directly from AS~$i$ to AS~$j$. If all ASs follow the hot-potato
strategy, then they decide for the routes $(s_1, u_1, w_2, t_1)$, $(s_2, u_2, w_3, t_2)$,
and $(s_3, u_3, w_1, t_3)$. Each AS~$i$ incurs costs of~$1$ for the request
$(s_i, t_i)$ and costs of~$9$ for the request $(s_j, t_j)$ for which $t_j \in V_i$.

Now assume that AS~$i$ routes request $(s_i, t_i)$ from~$s_i$ to~$v_i$. Then it
incurs costs of~$2$ (instead of~$1$) for this route, which is worse than if it had chosen the
hot-potato route. However, if all ASs agree on this new strategy, then each AS~$i$
only incurs costs of~$2$ (instead of~$9$) for the request $(s_j, t_j)$ for which
$t_j \in V_i$. Hence, the total costs of each AS for satisfying the three requests $(s_i, t_i)$
is~$4$ instead of~$10$.

The path trading problem asks whether there exist routes for given requests $(s_i, t_i)$ such
that the total costs of each involved AS is less than or equal to the total costs it would
incur if all would follow the hot-potato strategy. Such routes are called \emph{feasible
path trades}.

Consider~$\K$ requests $(s_1, t_1), \ldots, (s_\K, t_\K)$ and $s_i$-$t_i$-paths $P_1, \ldots, P_\K$
that comply with the BGP. For an edge $e \in E$ let $x_e \in \SET{ 0, \ldots, \K }$ be the
number of paths $P_1, \ldots, P_\K$ that contain~$e$. We can encode the routes $P_1, \ldots, P_\K$ by an
integer vector $x \in \SET{ 0, \ldots, \K }^{|E|}$ consisting of the values~$x_e$. Let~$\S$ denote
the set of encodings of all valid routes $P_1, \ldots, P_\K$. The question whether there is
a feasible path trade for the requests $(s_i, t_i)$ reduces to the question whether the
vector~$x^\star$ that encodes the hot-potato routes $P^\star_1, \ldots, P^\star_\K$ is not
Pareto-optimal with respect to~$\S$ and $\SET{ C_1, \ldots, C_k }$, where the objectives $C_i \colon \S \to \RR$,
\[
  C_i(x) = \sum_{e \in E_i} \ell_e x_e \COMMA
\]
describe the total costs of AS~$i$ for the routes encoded by~$x$. As the Pareto set can be
exponentially large in the worst case, Berger et al.~\cite{BergerRZ11} proposed
to study $\phi$-smooth instances in which an adversary chooses the graph~$G$ and a density
$f_e \colon [0,1] \to [0,\phi]$ for every edge length~$\ell_e$ according to which it is chosen.
It seems as if we could easily apply the
results in~\cite{RoeglinT09} and~\cite{MoitraO11} to bound the smoothed number of Pareto-optimal paths because
all objective functions~$C_i$ are linear in the binary variables~$x_e$, $e \in E$. However,
note that different objective functions contain different variables~$x_e$ because the
coefficients of all~$x_e$ with~$e\notin E_i$ are set to~$0$ in~$C_i$. This is an important combinatorial
property of the path trading problem that has to be obeyed.
In the model in~\cite{RoeglinT09} and~\cite{MoitraO11} it is not possible
to set coefficients deterministically to~$0$. In their model, an AS would, with a probability of~$1$, incur positive costs for all edges and not only for its own edges that are used, which does not resemble the structure of the problem.
Theorem~\ref{thm:MainZeroPreserving}, which allows zero-preserving perturbations, yields immediately the following result. 

\begin{corollary}
The smoothed number of Pareto-optimal valid paths is polynomially bounded in~$|E|$, $\phi$, and~$\K$ for any constant~$k$.
\end{corollary}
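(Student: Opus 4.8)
The plan is to cast the problem of enumerating Pareto-optimal valid paths as a $\phi$-smooth instance with zero-preserving perturbations and then to invoke Theorem~\ref{thm:MainZeroPreserving}. First I would set $n=|E|$ and let $\S\subseteq\{0,1\}^n$ be the set of incidence vectors of valid $s$-$t$ paths, so that each valid path corresponds to a feasible solution and every objective becomes the linear function $C_i(x)=\sum_{e\in E_i}\ell_e x_e$ of the binary variables $x_e$, exactly as in the excerpt. The crucial structural point — and the reason the models of~\cite{RoeglinT09} and~\cite{MoitraO11} do not apply — is that the coefficient of $x_e$ in $C_i$ equals $\ell_e$ when $e\in E_i$ and is \emph{deterministically} $0$ when $e\notin E_i$. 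Since the sets $E_1,\ldots,E_k$ are pairwise disjoint, each random edge length $\ell_e$ occurs in exactly one objective, so the coefficients are mutually independent as required, and each non-zero coefficient $\ell_e$ is drawn from its density $f_e$, which is $\phi$-bounded on $[0,1]\subseteq[-1,1]$. This is precisely a zero-preserving $\phi$-smooth instance.

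The only point that needs a little care is that Theorem~\ref{thm:MainZeroPreserving} is stated for $d$ linear objectives plus \emph{one} arbitrary objective $V^{d+1}$, whereas here all $k$ objectives $C_1,\ldots,C_k$ are randomly weighted linear functions. I would handle this by conditioning: fix an arbitrary realization of the edge lengths $\ell_e$ for $e\in E_k$. Under this conditioning, $C_k$ becomes a fixed function $\S\to\RR_{\ge0}$, which we take as the arbitrary objective $V^{d+1}$ with $d=k-1$, while $C_1,\ldots,C_{k-1}$ remain $k-1$ linear objectives with zero-preserving $\phi$-smooth perturbations whose randomness is untouched by the conditioning (because $E_k$ is disjoint from $E_1,\ldots,E_{k-1}$). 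Theorem~\ref{thm:MainZeroPreserving} — in its general $\phi$-smooth form, since the densities $f_e$ need not be quasiconcave — then bounds the conditional expected number of Pareto-optimal valid paths by $O\big((n\phi)^{(k-1)^3+(k-1)^2+(k-1)}\big)$.

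Finally, since this bound is uniform over the conditioning on the lengths in $E_k$, taking the expectation over those lengths preserves it, so the smoothed number of Pareto-optimal valid paths is $O\big((|E|\phi)^{(k-1)^3+(k-1)^2+(k-1)}\big)$, which is polynomial in $|E|$ and $\phi$ whenever $k$ is constant. I do not expect a genuine obstacle. The two things to be careful about are the conditioning argument just described — in particular that the perturbation of the first $k-1$ objectives is independent of the conditioned-upon lengths, which is exactly the disjointness of the AS-internal edge sets — and, to be pedantic, that distinct valid paths yield distinct incidence vectors (otherwise one first passes to the set of incidence vectors of valid paths and notes that the objective values depend only on the vector), so that counting Pareto-optimal elements of $\S$ indeed counts Pareto-optimal valid paths.
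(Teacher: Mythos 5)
Your proposal is correct and follows essentially the same route as the paper: the paper also encodes valid paths as incidence vectors in $\{0,1\}^{|E|}$, observes that each $C_i(x)=\sum_{e\in E_i}\ell_e x_e$ is linear with the coefficients of edges outside $E_i$ deterministically zero, and then states that Theorem~\ref{thm:MainZeroPreserving} ``yields immediately'' the corollary. The one small divergence is how the adversarial objective is handled: the paper implicitly treats all $k$ cost functions as the perturbed linear objectives (i.e.\ $d=k$, with the arbitrary objective $V^{d+1}$ playing no role, e.g.\ taken constant), whereas you condition on the lengths of the edges in $E_k$ and use $d=k-1$ with the realized $C_k$ as $V^{d+1}$. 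Your variant is sound --- the conditioning is justified exactly by the disjointness of $E_1,\ldots,E_k$, and the bound of Theorem~\ref{thm:MainZeroPreserving} is uniform over the adversary's choice of $V^{d+1}$ --- and it even yields a marginally smaller exponent, $(k-1)^3+(k-1)^2+(k-1)$ instead of $k^3+k^2+k$; either way the bound is polynomial in $|E|$ and $\phi$ for constant $k$, which is all the corollary claims.
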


\subsubsection*{Non-linear Objective Functions}

Even though we assumed above that the objective functions $V^1, \ldots, V^d$ are linear, we can also extend the smoothed analysis
to non-linear objective functions. We consider first the bicriteria case $d=1$. As above, we assume that the adversary has chosen
an arbitrary set~$\S$ of feasible solutions and an arbitrary injective objective function $V^2 \colon \S \to \RR$. In addition
to that the adversary can choose~$m_1$ arbitrary functions $I^1_i \colon \S \to \SET{ 0, \ldots, \K }$, $i \in \SET{ 1, \ldots, m_1 }$. The objective
function $V^1 \colon \S \to \RR$ is defined to be a weighted sum of the functions~$I_i^1$:
\[
   V^1(x) = \sum_{i=1}^{m_1} w^1_i I^1_i(x) \COMMA
\]
where each weight~$w^1_i$ is randomly chosen according to a density $f^1_i \colon [-1, 1] \to [0, \phi]$ given by the adversary.
There is a wide variety of functions $V^1(x)$ that can be expressed in this way. We can, for example, express every polynomial
if we let $I^1_1, \ldots, I^1_{m_1}$ be its monomials. Note that the value~$\K$ then depends on the set~$\S$ and the maximum degree of the monomials.

We can linearize the problem by introducing a binary variable for every function~$I^1_i$. Using the function $\pi \colon \S \to \SET{ 0, \ldots, \K }^{m_1}$, defined by $\pi(x) = (I^1_1(x), \ldots, I^1_{m_1}(x))$, the set of feasible solutions 
becomes $\S' = \SET{ \pi(x) \WHERE x \in \S } \subseteq \SET{ 0, \ldots, \K }^{m_1}$. For this set of feasible solutions we define $W^1 \colon \S' \to \RR$ and $W^2 \colon \S' \to \RR$ as follows:
\[
   W^1(y) = \sum_{i=1}^{m_1} w^1_i y_i
   \quad \text{and} \quad
   W^2(y) = \min \SET{ V^2(x) \WHERE x \in \S\ \text{and}\ \pi(x) = y } \DOT
\]
The problem defined by~$\S$, $V^1$, and~$V^2$ and the problem defined by~$\S'$, $W^1$, and~$W^2$
are equivalent and have the same number of Pareto-optimal solutions. The latter problem is linear and hence we can apply
the result by Beier et al.~\cite{BeierRV07}, which yields that the
smoothed number of Pareto-optimal solutions is bounded by $\poly(\K) \cdot O(m_1^2 \phi)$.
This shows in particular that the smoothed number of Pareto-optimal solutions is polynomially bounded 
in the number of monomials, the maximum integer in the monomials' ranges, and the density parameter for every polynomial objective function~$V^1$.

We can easily extend these considerations to multiobjective problems with $d \geq 2$.
For these problems the adversary chooses an arbitrary set~$\S$, numbers $m_1, \ldots, m_d \in \NN$, and
an arbitrary injective objective function $V^{d+1} \colon \S \to \RR$. 
In addition to that he chooses arbitrary functions $I^t_i \colon \S \to \SET{ 0, \ldots, \K }$ for $t \in \SET{ 1, \ldots, d }$ and $i \in \SET{ 1, \ldots, m_t }$.
Every objective function $V^t \colon \S \to \RR$ is a weighted sum
\[
   V^t(x) = \sum_{i=1}^{m_t} w^t_i I^t_i(x)
\]
of the functions~$I^t_i$, where each weight~$w^t_i$ is randomly chosen according to a density $f^t_i \colon [-1, 1] \to [0, \phi]$
chosen by the adversary. Similar to the bicriteria case, also this problem can be linearized.
However, the previous results about the smoothed number of Pareto-optimal solutions can only be applied if
every objective function~$V^t$ is composed of exactly the same functions~$I^t_i$. Theorem~\ref{thm:MainZeroPreserving}
implies that the smoothed number of Pareto-optimal solutions is polynomially bounded in $\sum m_i$, $\K$, and~$\phi$, for any choice of the~$I^t_i$.

\subsection*{Outline}

After introducing some notation in the next section, we present
an outline of our approach and our methods in Section~\ref{sec:Approach}. In our
analysis we will frequently draw upon fundamental properties of Pareto-optimal
solutions. These are stated and proven in Section~\ref{sec:Pareto properties}.
In Section~\ref{sec:ZeroDestroying} we prove Theorems~\ref{thm:MainFirstMoment}
and~\ref{thm:HigherMoment}. In Section~\ref{sec:ZeroPreserving} we consider
zero-preserving perturbations and prove Theorem~\ref{thm:MainZeroPreserving}.
We conclude the article with some open questions.


\section{Notation}
\label{sec:notation}

For the sake of simplicity we write $V^t x$ instead of $V^t(x)$, even for the
adversarial objective~$V^{d+1}$. With $V^{k_1 \ldots k_t}x$ we refer to the
vector $(V^{k_1}x, \ldots, V^{k_t}x)$. In our analysis, we will shift the
solutions $x \in \S$ by a certain vector $u \in \SET{ 0, \ldots, \K }^n$ and consider the
values $V^t \cdot (x-u)$. For the linear objectives we mean the value $V^tx -
V^tu$, where $V^t u$ is well-defined even for a shift vector $u \in \SET{ 0, \ldots, \K
}^n \setminus \S$. For the adversarial objective, however, we define $V^{d+1}
\cdot (x - u) := V^{d+1} x$. It should not be confused with $V^{d+1} y$ for $y =
x-u$. Note that for Pareto-optimality only the ordering of the solutions
with respect to $V^{d+1}$ and not the values $V^{d+1} x$ themselves are of interest.
By the definition of $V^{d+1} \cdot (x - u)$, the ordering of the vectors $x - u$,
$x \in \S$, equals the ordering of the vectors $x \in \S$ when considering $V^{d+1}$.

In the whole article let $\e>0$ be an arbitrary real for which $1/\e$ is integral.
Our analyses are valid for all such choices of~$\e$, but to obtain our results we will
consider the limit $\e \to 0$. Thus, think of~$\e$ as a very small real. 
Let $b = (b_1, \ldots, b_d) \in \RR^d$ be a vector such that~$b_k$ is an
integral multiple of~$\e$ for all~$k$. We will call the set $B = \SET{ (y_1, \ldots,
y_d) \in \RR^d \WHERE y_k \in (b_k, b_k + \e] \ \mbox{for all} \ k }$ an
\emph{$\e$-box} and~$b$ \emph{the corner} of~$B$. For a vector $x \in \SET{ -\K,
\ldots, \K }^n$ the expression $B_V(x)$ denotes the unique $\e$-box~$B$ for which $V^{1
\ldots d} x \in B$. We call~$B$ the $\e$-box of~$x$ and say that~$x$ lies
in~$B$. With~$\B_\e$ we denote the set of all $\e$-boxes having corners~$b$ for
which $b \in \SET{ -n\K, -n\K + \e, \ldots, n\K - 2\e, n\K - \e}^d$. Hence, $|\B_\e| = (2n\K/\e)^d$. If
all coefficients~$V^k_i$ of~$V$ are from $[-1, 1]$, which is true for all
models considered in this article, and if for all $k = 1, \ldots, d$ there is an
index~$i$ such that $|V^k_i| < 1$, which holds with probability~$1$ in all of our
models, then the $\e$-box of any vector $x \in \SET{ -\K, \ldots, \K }^n$ belongs
to~$\B_\e$. Note that all vectors~$x$ constructed in this article are from $\SET{
-\K, \ldots, \K }^n$. Hence, without any further explanation we will assume that $B_V(x)
\in \B_\e$.

In this article we extensively use tuples instead of sets. The reason for this is
that we are not only interested in certain components of a vector or matrix,
but we also want to describe in which order they are considered. This will be
clear after the introduction of the following notation. Let $n, m$ be positive
integers and let $a_1, \ldots, a_n$, $b_1, \ldots, b_m$ be arbitrary and not
necessarily pairwise distinct reals. We
define $[n] = (1, \ldots, n)$, $[n]_0 = (0, 1, \ldots, n)$, $|(a_1, \ldots,
a_n)| = n$ and $(a_1, \ldots, a_n) \cup (b_1, \ldots, b_m) = (a_1, \ldots, a_n,
b_1, \ldots, b_m)$. By $(a_1, \ldots, a_n) \setminus (b_1, \ldots, b_m)$ and
$(a_1, \ldots, a_n) \cap (b_1, \ldots, b_m)$ we denote the tuples we obtain by
removing all occurrences of elements from $(a_1, \ldots, a_n)$ that do/do not belong to $(b_1,
\ldots, b_m)$. We write $(a_1, \ldots, a_n) \subseteq (b_1, \ldots, b_m)$
if $m \geq n$ and if $(a_1, \ldots, a_n)$ can be obtained from $(b_1, \ldots, b_m)$
by removing $m-n$ elements.

Let~$x$ be a vector
and let~$A$ be a matrix. By $x|_{i_1 \ldots i_n} = x|_{(i_1, \ldots, i_n)}$ we
denote the column vector $(x_{i_1}, \ldots, x_{i_n})^\T$, by $A|_{(i_1, \ldots,
i_n)}$ we denote the matrix consisting of the rows $i_1, \ldots, i_n$ of
matrix~$A$ (in this order).

For an index set $I \subseteq [n]$ and a vector $y \in \SET{ 0, \ldots, \K }^n$
let $\S_I(y)$ denote the set of all solutions $z \in \S$ such that $z_i = y_i$
for all indices $i \in I$. For the sake of simplicity we also use the
notation $\S_I(\hat{y})$ to describe the set $\SET{ z \in \S \WHERE z_i =
\hat{y}_i\ \text{for all}\ i \in I }$ for a vector $\hat{y} \in \SET{ 0, \ldots, \K }^{|I|}$ when the components
of~$y$ are labeled by $y_{i_1}, \ldots, y_{i_{|I|}}$ where $I = (i_1, \ldots,
i_{|I|})$.

With $\ID{n}$ we refer to the $n \times n$-identity matrix
and with $\ZERO{m}{n}$ to  the $m \times n$-matrix whose entries are all~$0$.
If the number of rows and columns are clear, then we drop the indices.

For a set $M \subseteq \RR^n$ and a vector $y \in \RR^n$ we define $M + y \DEF \SET{ x + y \WHERE x \in M }$,
the Minkowski sum of~$M$ and~$\SET{y}$.

\begin{definition}
\label{def:Pareto optimality}
Let $\S \subseteq \RR^n$ be a set of solutions and let $f_1, \ldots, f_d \colon \S \to \RR$ be functions.
\begin{enumerate}

  \item Let $x, y \in \RR^n$ be vectors. We say that~$x$ \emph{dominates}~$y$ (with respect to $\SET{ f_1, \ldots, f_d }$), if $f_i(x) \leq f_i(y)$ for all $i \in [d]$ and $f_i(x) < f_i(y)$ for at least one $i \in [d]$. We say that~$x$ \emph{dominates}~$y$ \emph{strongly} (with respect to $\SET{ f_1, \ldots, f_d }$), if $f_i(x) < f_i(y)$ for all $i \in [d]$.

  \item Let $x \in \RR^n$ be a vector. We call~$x$ \emph{Pareto-optimal} or a \emph{Pareto-optimum} (with respect to~$\S$ and $\SET{ f_1, \ldots, f_d }$), if~$x$ is an element of~$\S$ and if no solution $y \in \S$ dominates~$x$. We call~$x$ \emph{weakly Pareto-optimal} or a \emph{weak Pareto-optimum} (with respect to~$\S$ and $\SET{ f_1, \ldots, f_d }$), if~$x$ is an element of~$\S$ and if no solution $y \in \S$ dominates~$x$ strongly.

\end{enumerate}
\end{definition}

We focus on Pareto-optimal solutions. The notions of strong dominance and weak Pareto-optimality are merely used for zero-preserving perturbations.

\section{Outline of our Approach}
\label{sec:Approach}

To prove our results we adapt and improve methods from the previous analyses by Moitra and O'Donnell~\cite{MoitraO11}
and by R\"oglin and Teng~\cite{RoeglinT09} and combine them in a novel way. Since all coefficients of the linear
objective functions lie in the interval $[-1,1]$, for every solution $x \in \S$ the vector $V^{1\ldots d}x$ lies 
in the hypercube $[-n\K,n\K]^d$. The first step is to partition this hypercube into $\e$-boxes. If~$\e$ is very small (exponentially small in~$n$),
then it is unlikely that there are two different solutions $x \in \S$ and $y \in \S$ that lie in the same $\e$-box~$B$ unless~$x$
and~$y$ differ only in positions that are not perturbed in any of the objective functions, in which case we consider them as the
same solution. In the remainder of this section we assume that no two solutions lie in the
same $\e$-box. Then, in order to bound the number of Pareto-optimal solutions, it suffices to count the number of non-empty
$\e$-boxes.

In order to prove Theorem~\ref{thm:MainFirstMoment} we show that for each fixed $\e$-box the probability that
it contains a Pareto-optimal solution is bounded by $k \cdot \K^{2d^2+2d+1} n^d\phi^d\e^d$ for the constant $k = 2^{2d^2+3d+1} \cdot (d \cdot (d+1))^{d^2}$ that is hidden in the $O$-notation. This implies
the theorem as the number of $\e$-boxes is $(2n\K/\e)^d$ and the exponent of~$\K$ is $2d^2 + 3d + 1 \leq 2(d+1)^2$. Fix an arbitrary $\e$-box~$B$. In the following we will call
a solution $x \in \S$ a \emph{candidate} if there is a realization of~$V$ such that~$x$ is Pareto-optimal and lies in~$B$. 
If there was only a single candidate $x \in \S$,
then we could bound the probability that there is a Pareto-optimal solution in~$B$ by the probability that 
this particular solution~$x$ lies in~$B$. This probability can easily be bounded from above by $\e^d\phi^d$ in the non-zero-preserving case.
However, in principle, every solution $x\in\S$ can be a candidate and a union bound over all of them leads to a factor
of~$|\S|$ in the bound, which we have to avoid. 

Following ideas of Moitra and O'Donnell, we divide the draw of the random matrix~$V$ into two steps. In the first step
some information about~$V$ is revealed that suffices to limit the set of candidates to a single solution $x \in \S$.
The exact position $V^{1 \ldots d}x$ of this solution is determined in the second step. If the information that is
revealed in these two steps is chosen carefully, then there is enough randomness left in the second step to bound the
probability that~$x$ lies in the $\e$-box~$B$. In Moitra and O'Donnell's analysis the coefficients in the matrix~$V$ are
partitioned into two groups. In the first step the first group of coefficients is drawn, which suffices to determine
the unique candidate~$x$, and in the second step the remaining coefficients are drawn, which suffices to bound the probability
that~$x$ lies in~$B$. The second part consists essentially of $d(d+1)/2$ coefficients, which causes the
factor of $\phi^{d(d+1)/2}$ in their bound.

We improve the analysis by a different choice of how to break the draw of~$V$ into two parts. As in the previous analysis,
most coefficients are drawn in the first step. Only~$d^2$ coefficients of~$V$ are drawn in the second step.
However, these coefficients are not left completely random as in~\cite{MoitraO11} because
after the other coefficients have been drawn there can still be multiple candidates
for Pareto-optimal solutions in~$B$. Instead,
the randomness is reduced further by drawing $d(d-1)$ linear combinations of these
random variables in the first step. These linear combinations have the property that, 
after they have been drawn, there is a unique candidate~$x$ whose
position can be described by~$d$ linear combinations that are linearly independent of
the linear combinations already drawn in the first step. In~\cite{RoeglinT09} it was
observed that linearly independent linear combinations of independent random variables
behave in some respect similar to independent random variables. With this insight one can
argue that in the second step there is still enough randomness to bound the probability
that~$x$ lies in~$B$. While the analysis in~\cite{RoeglinT09} yields only a bound proportional
to $\phi^{d^2}\e^d$, we prove an improved result for quasiconcave densities that
yields the desired bound proportional to $\phi^d\e^d$ (see Theorem~\ref{theorem.Prob:enough randomness}).

In order to bound the $c\th$ moment, we sum the probability that all $\e$-boxes $B_1,\ldots,B_c$ simultaneously contain a Pareto-optimal solution over all $c$-tuples $(B_1,\ldots,B_c)$ of $\e$-boxes.
We bound this probability from above by $k \cdot \K^{c^2(d+1)^2 + cd^2} n^{cd}\phi^{cd}\e^{cd}$ for the constant $k = 2^{c^2(d+1)^2 + cd^2 + cd} \cdot (cd(d+1))^{cd^2}$ that is hidden in the $O$-notation.
Since there are $(2n\K/\e)^{cd}$ different $c$-tuples of $\e$-boxes and the exponent of~$\K$ is $c^2(d+1)^2 + cd^2 + cd \leq (c+1)^2 (d+1)^2$, this implies the bound of $\K^{(c+1)^2(d+1)^2} \cdot O((n^2\phi)^{cd})$ for the smoothed $c\th$ moment of the number of Pareto-optimal solutions.

Let us fix a $c$-tuple $(B_1,\ldots,B_c)$ of $\e$-boxes.
The approach to bound the probability that all of these $\e$-boxes contain simultaneously a Pareto-optimal
solution is similar to the approach for the first moment. We divide the draw of~$V$ into two steps.
In the first step enough information is revealed to identify for each of the $\e$-boxes~$B_i$ a unique
candidate $x_i \in \S$ for a Pareto-optimal solution in~$B_i$. 
If we do this carefully, then there is enough randomness left in the second step to bound the probability
that $V^{1\ldots d}x_i \in B_i$ for every $i \in [c]$.
Again most coefficients are drawn in the first step and some linear combinations of the other $cd^2$ coefficients are
also drawn in the first step. However, we cannot simply repeat the construction for the first moment
independently $c$ times because then there might be dependencies between the events $V^{1\ldots d}x_i \in B_i$
for different~$i$. In order to bound the probability that in the second step all~$x_i$ lie in their corresponding
$\e$-boxes~$B_i$, we need to ensure that the events $V^{1\ldots d}x_i \in B_i$ are (almost) independent
after the information from the first step has been revealed.

The general approach to handle zero-preserving perturbations is closely related to the approach for bounding the
first moment for non-zero-preserving perturbations. However, additional complications have to be handled.
The main problem is that we cannot easily guarantee anymore that the linear combinations in the second step
are linearly independent of the linear combinations revealed in the first step. Essentially, the revealed linear
combinations describe the positions of some solutions, which we will call~\emph{auxiliary solutions} in the following. For non-zero-preserving perturbations revealing
this information is not critical as no solution has in any objective function exactly the same value as~$x$. 
For zero-preserving solutions it can, however, happen that the auxiliary solutions take exactly the same value
as~$x$ in one of the objective functions. Then there is not enough randomness
left in the second step anymore to bound the probability that~$x$ lies in this objective in the $\e$-interval described
by the $\e$-box~$B$.

In the remainder of this section we will present some more details on our analysis. We first present a simplified
argument to bound the smoothed number of Pareto-optimal solutions. Afterwards we will briefly discuss which changes
to this argument are necessary to bound higher moments and to analyze zero-preserving perturbations.

\paragraph{Smoothed Number of Pareto-optimal Solutions}

As an important building block in the proof of Theorem~\ref{thm:MainFirstMoment}
we use an insight from~\cite{MoitraO11} about how to test whether a given $\e$-box
contains a Pareto-optimal solution. Let us fix an $\e$-box $B = (b_1, b_1+\e] \times \ldots \times (b_d, b_d+\e]$
with corner $b = (b_1, \ldots, b_d)$.
The following algorithm takes as parameters the matrix~$V$ and the $\e$-box~$B$ and it
returns a solution~$x^{(0)}$.

\vspace{1ex}
\noindent\Witness{V, B}
\begin{algorithmic}[1]
  \STATE Set $\R_{d+1} = \S$.
  \FOR{$t = d, d-1, \ldots, 0$}
    \STATE Set $\C_t = \bSET{ z \in \R_{t+1} \WHERE V^{1 \ldots t} z \le b|_{1 \ldots t} }$. 
    \STATE Set $x^{(t)} = \argmin \bSET{ V^{t+1} z \WHERE z \in \C_t }$. 
    \STATE Set $\R_t = \bSET{ z \in \R_{t+1} \WHERE V^{t+1} z < V^{t+1} x^{(t)} }$.
  \ENDFOR
  \RETURN $x^{(0)}$  
\end{algorithmic}

\vspace{5pt}

The actual \tWitness function that we use in the proof of Theorem~\ref{thm:MainFirstMoment}
is more complex because it has to deal with some technicalities. In particular, the case
that some set~$\C_t$ is empty, in which~$x^{(t)}$ and~$\R_t$ would be undefined in
the function above, has to be handled.
For the purpose of illustration we ignore
these technicalities here and assume that~$\C_t$ is never empty. The crucial
observation that has been made by Moitra and O'Donnell is that if there is 
a Pareto-optimal solution $x \in \S$ that lies in~$B$, then $x^{(0)} = x$ (assuming
that no two solutions lie in the same $\e$-box). Hence, the solution~$x^{(0)}$ returned
by the \tWitness function is the only candidate for a Pareto-optimal solution in~$B$.
Our goal is to execute the \tWitness function and to obtain the solution~$x^{(0)}$ 
without revealing the entire matrix~$V$. We will see that it is indeed possible to
divide the draw of~$V$ into two steps such that in the first step enough information 
is revealed to execute the \tWitness function and such that in the second step there
is still enough randomness left to bound the probability that~$x^{(0)}$ lies in~$B$.

We want to illustrate the case $d=2$, in which there are one adversarial and two linear objective functions
(even though the following reasoning is true for all $d\in\NN$).
For this, assume that~$B$ contains a single solution~$x$ which is Pareto-optimal and that~$x$
is very close to the corner~$b$ of~$B$ which can be assumed if~$B$ is very small. Then $V^t z \leq b_t$ is
equivalent to $V^t z < V^t x$ for each $t \in [d]$.

Consider the situation depicted in Figure~\ref{fig:Witness Step I}. The first and the second objective value of each solution determine a point in the Euclidean plane. The additional value depicted next to this point represents the third objective value of each solution. Let us consider the situation before entering the loop. All points in Figure~\ref{fig:Witness Step I} are encircled meaning that $\R_3$ contains all solutions, i.e., $\R_3 = \S$. Now let us analyze the loop. The set~$\C_2$ contains all solutions that have smaller first and second objective values than~$x$ (gray area in Figure~\ref{fig:Witness Step II}). Among these solutions we pick the one with the smallest third objective value and denote it by~$x^{(2)}$. Set~$\R_2$ contains all solutions with a smaller third objective value (encircled points in Figure~\ref{fig:Witness Step III}). Note that in particular no solution of the gray region is considered anymore. On the other hand, $x$ belongs to~$\R_2$ due to Pareto-optimality.

The set~$\C_1$ contains all solutions from~$\R_2$ that have a smaller first objective value than~$x$ (encircled points in the gray area in Figure~\ref{fig:Witness Step IV}). Among these solutions~$x^{(1)}$ is the one with the smallest second objective value. Set~$\R_1$ contains all solutions from~$\R_2$ with a smaller second objective value (encircled points in Figure~\ref{fig:Witness Step V}). This set still contains~$x$, but no points from the gray region.

In the final iteration $t=0$ we obtain $\C_0 = \R_1$ since there is no restriction in the construction of~$\C_0$ anymore and $\C_0 \neq \emptyset$ since $x \in \R_1$. Solution~$x^{(0)}$ is among the remaining solutions the one with the smallest first objective value (Figure~\ref{fig:Witness Step VI}). This solution equals~$x$ and is now returned.

\begin{figure}[t]\newcommand{\figWidth}{0.30\textwidth}
  \begin{center}
    \subfloat[Initial situation]
    {
      \includegraphics[page=1, width=\figWidth]{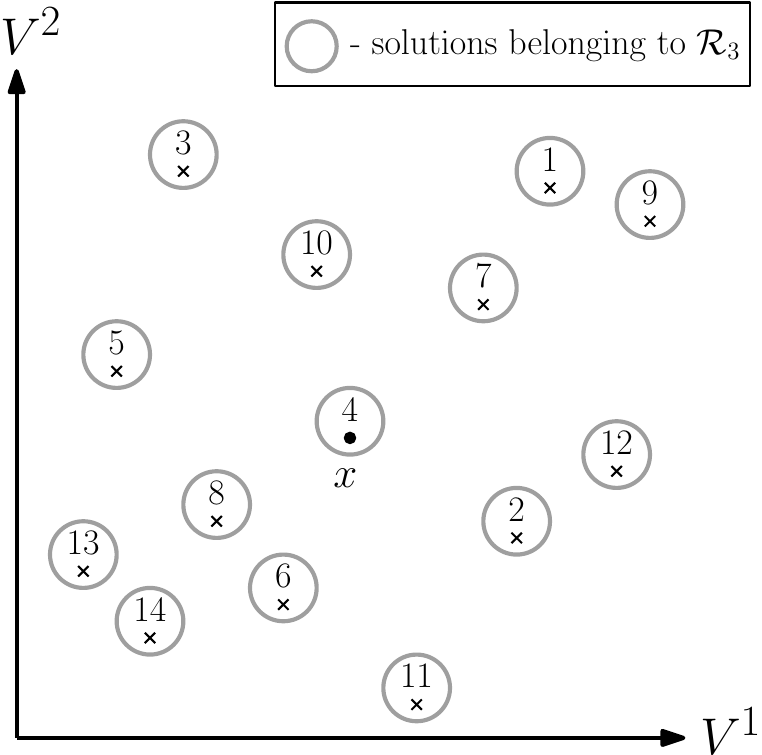}
      \label{fig:Witness Step I}
    }
    \subfloat[Determination of $x^{(2)}$]
    {
      \includegraphics[page=2, width=\figWidth]{Images/Witness.pdf}
      \label{fig:Witness Step II}
    }
    \subfloat[Determination of $\R_2$]
    {
      \includegraphics[page=3, width=\figWidth]{Images/Witness.pdf}
      \label{fig:Witness Step III}
    }
  \end{center}

  \begin{center}
    \subfloat[Determination of $x^{(1)}$]
    {
      \includegraphics[page=4, width=\figWidth]{Images/Witness.pdf}
      \label{fig:Witness Step IV}
    }
    \subfloat[Determination of $\R_1$]
    {
      \includegraphics[page=5, width=\figWidth]{Images/Witness.pdf}
      \label{fig:Witness Step V}
    }
    \subfloat[Determination of $x^{(0)}$]
    {
      \includegraphics[page=6, width=\figWidth]{Images/Witness.pdf}
      \label{fig:Witness Step VI}
    }
  \end{center}

  \caption{Execution of the \tWitness function for three objectives}
  \label{fig:Witness}
\end{figure}

Let us now discuss how the draw of~$V$ can be divided into two steps such that in the first step enough information 
is revealed to execute the \tWitness function and such that in the second step there
is still enough randomness left to bound the probability that~$x^{(0)}$ lies in~$B$.
For this let $I \subseteq [n]$ be a set of indices and assume that we know in advance which
values the solutions $x^{(0)}, \ldots, x^{(d)}$ take at these indices, i.e., assume
that we know $a^{(0)} = x^{(0)}|_I, \ldots, a^{(d)} = x^{(d)}|_I$ before executing the
\tWitness function. Then we can reconstruct $x^{(0)}, \ldots, x^{(d)}$ without having
to reveal the entire matrix~$V$. This can be done by the following algorithm, which
gets as additional parameters the set~$I$ and the matrix $A = [a^{(0)}, \ldots, a^{(d)}]$.

\vspace{1ex}
\noindent\Witness{V, I, A, B}
\begin{algorithmic}[1]
  \STATE Set $\R_{d+1} = \bigcup_{t'=0}^{d} \S_{I} \big( a^{(t')} \big)$. \label{line:Approach1}
  \FOR{$t = d, d-1, \ldots, 0$}
    \STATE Set $\C_t = \bSET{ z \in \R_{t+1} \WHERE V^{1 \ldots t} z \le b|_{1 \ldots t} } \cap \S_I \big( a^{(t)} \big)$. \label{line:Approach2} 
    \STATE Set $x^{(t)} = \argmin \bSET{ V^{t+1} z \WHERE z \in \C_t }$. 
    \STATE Set $\R_t = \bSET{ z \in \R_{t+1} \WHERE V^{t+1} z < V^{t+1} x^{(t)} } \cap \bigcup_{t'=0}^{t-1} \S_{I} \big( a^{(t')} \big)$. \label{line:Approach3}
  \ENDFOR
  \RETURN $(x^{(0)},\ldots,x^{(d)})$  
\end{algorithmic}

\vspace{5pt}

The additional restriction of the set~$\R_{d+1}$
does not change the outcome of the \tWitness function as all solutions $x^{(0)}, \ldots, x^{(d)}$
generated by the first \tWitness function are contained in the set~$\R_{d+1}$ defined
in line~\ref{line:Approach1} of the second \tWitness function. Similarly one can
argue that the additional restrictions in lines~\ref{line:Approach2} and~\ref{line:Approach3}
do not change the outcome of the algorithm because all solutions~$x^{(t)}$ generated by the first 
\tWitness function satisfy the restrictions that are made in the second \tWitness function.
Hence, if $a^{(0)} = x^{(0)}|_I, \ldots, a^{(d)} = x^{(d)}|_I$,
then both \tWitness functions generate the same~$x^{(0)}$.

We will now discuss how much information about~$V$ needs to be revealed 
in order to execute the second \tWitness function, assuming that the additional
parameters~$I$ and~$A$ are given. 
We assume that the coefficients~$V^t_i$ are revealed
for every $t \in [d]$ and $i \notin I$. For the remaining coefficients only certain linear
combinations need to be known in order to be able to execute the \tWitness function.
By carefully looking at the \tWitness function, one can deduce that for $t \in [d]$ only the following linear
combinations need to be known:
\begin{gather*}
    V^t|_I \cdot x^{(t)}|_I, \ldots, V^t|_I \cdot x^{(d)}|_I \COMMA \\
    V^t|_I \cdot (x^{(t-1)}-x^{(0)})|_I, \ldots, V^t|_I \cdot (x^{(t-1)}-x^{(t-2)})|_I \DOT 
\end{gather*}
These terms can be viewed as linear combinations of the random variables~$V^t_i$, $t \in [d]$, $i \in I$,
with coefficients from $\SET{ -\K, \ldots, \K }$. In addition to the already fixed random variables~$V^t_i$, $t \in[d]$, $i \notin I$, the following~$d$ linear combinations determine the position $V^{1 \ldots d}x$ of $x=x^{(0)}$:
\[
   V^{1}_I \cdot x^{(0)}|_I, \ldots, V^{d}_I \cdot x^{(0)}|_I \DOT
\]

An important observation on which our analysis is based is that if the
vectors $x^{(0)}|_I, \ldots, x^{(d)}|_I$ are linearly independent, then also all of
the above mentioned linear combinations are linearly independent. In particular,
the~$d$ linear combinations that determine the position of~$x$ cannot be expressed
by the other linear combinations. 
Usually, however, it is not possible to find a tuple $I \subseteq [n]$ of indices such
that the vectors $x^{(0)}|_I, \ldots, x^{(d)}|_I$ are linearly independent.
By certain technical modifications of the \tWitness function we will ensure that there always
exists such a tuple~$I$ with $|I| = d+1$ and that the last index of~$I$ is determined by the other~$d$ indices.
Since we do not know the tuple~$I$ and the matrix~$A$
in advance, we apply a union bound over all valid choices for these parameters, which yields
a factor of $(\K+1)^{(d+1)^2} n^d \leq 2^{(d+1)^2} \K^{(d+1)^2} n^d$ in the bound for the probability that there exists a Pareto-optimal
solution in~$B$.

R\"oglin and Teng~\cite{RoeglinT09} observed that the linear independence of the linear combinations  
implies that even if the linear combinations needed to execute the \tWitness function
are revealed in the first step, there is still enough randomness in the second step to
prove an upper bound on the probability that $V^{1 \ldots d}x$ lies in a fixed $\e$-box~$B$
that is proportional to~$\e^d$. The bound proven in~\cite{RoeglinT09} is, however, not strong
enough to improve Moitra and O'Donnell's result~\cite{MoitraO11} because the dependence on~$\phi$ is in the order
of $\Theta(\phi^{d^2})$ which is worse than the dependence of $\Theta(\phi^{d(d+1)/2})$ proven by
Moitra and O'Donnell.
We show that for quasiconcave density functions the
dependence in~\cite{RoeglinT09} can be improved significantly to $\Theta(\phi^{d})$, which 
yields the improved bound of $O(n^{2d}\phi^{d})$ in Theorem~\ref{thm:MainFirstMoment} for the binary case.

\paragraph{Higher Moments}

The analysis of higher moments is based on running the \tWitness function multiple times.
Let us fix a $c$-tuple $(B_1,\ldots,B_c)$ of $\e$-boxes. As described above, we bound the probability that
all of them contain a Pareto-optimal solution. For this, we run the \tWitness function $c$ times. In this way,
we get for every $j \in [c]$ a sequence $x^{(j,0)}, \ldots, x^{(j,d)}$ of solutions such that~$x^{(j,0)}$ is the
unique candidate for a Pareto-optimal solution in~$B_j$.

As above, we would like to execute the $c$ calls of the \tWitness function without having to reveal the
entire matrix~$V$. Again if we know for a subset $I \subseteq [n]$ the values that the solutions~$x^{(j,t)}$,
$j\in[c]$, $t\in[d]$, take at these positions, then we do not need to reveal the coefficients~$V^t_i$ with $i \in I$
to be able to execute the calls of the \tWitness function. As in the case of the first moment, it suffices to
reveal some linear combinations of these coefficients.

In order to guarantee that these linear combinations are linearly independent of the linear combinations that
determine the positions of the solutions~$x^{(j,0)}$, $j \in [c]$, we need to coordinate the calls of the \tWitness function.
Otherwise it might happen that, for example, the linear combinations revealed for executing the first call of the
witness function determine already the position of~$x^{(2,0)}$, the candidate for a Pareto-optimal solution in~$B_2$. 
Assume that the first call of the \tWitness function returns a sequence $x^{(1,0)}, \ldots, x^{(1,d)}$ of solutions
and that $I_1\subseteq [n]$ is a set of indices that satisfies the desired property that $x^{(1,0)}|_{I_1}, \ldots, x^{(1,d)}|_{I_1}$
are linearly independent. In order to achieve that all solutions generated in the following calls of the \tWitness function
are linearly independent of these linear combinations, we do not start a second independent call of the
\tWitness function, but we restrict the set of feasible solutions first. Instead of choosing $x^{(2,0)}, \ldots, x^{(2,d)}$
among all solutions from~$\S$, we restrict the set of feasible solutions for the second call of the \tWitness function
to $\S' = S_{I_1}(x^{(2,0)})$. Although we do not know $x^{(2,0)}$ in advance, we can assume to know some of its entries
due to a technical trick. Essentially, all solutions generated in call~$r$ of the \tWitness function
have to coincide with~$x^{(r,0)}$ in all positions that have been selected in one of the previous calls.

This and some additional tricks allow us to ensure that in the end there is a set $I \subseteq [n]$ with $|I| \le (d+1)c$ 
such that all vectors~$x^{(j,t)}|_I$, $j \in [c]$, $t \in [d]$ are linearly independent. Then we can again use 
the bound proven in~\cite{RoeglinT09} to bound the probability that $V^{1 \ldots d}x^{(j,0)} \in B_j$ simultaneously 
for every $j \in [c]$ from above by a term proportional to $\e^{cd} \phi^{cd^2}$. With our improved bound for quasiconcave
density functions, we obtain a bound proportional to $\e^{cd} \phi^{cd}$. Together with a union bound over all valid
choices for~$I$ and the values~$x^{(j,t)}|_I$, $j \in [c]$, $t \in [d]$, we obtain a bound of $k \cdot \K^{c^2(d+1)^2+cd^2} n^{cd} \phi^{cd} \e^cd$ on the probability that all candidates~$x^{(j,0)}$ lie in their corresponding $\e$-boxes for the constant $k = 2^{c^2(d+1)^2+cd^2+cd} \cdot (cd(d+1))^{cd^2}$ that is hidden in the $O$-notation. Together with the bound of $O((n\K)^{cd}/\e^{cd})$ for the number of $c$-tuples $(B_1, \ldots, B_c)$ this implies Theorem~\ref{thm:HigherMoment} as the exponent of~$\K$ is $c^2(d+1)^2 + cd^2 + cd \leq (c+1)^2 (d+1)^2$. 
 
\paragraph{Zero-preserving Perturbations}

If we use the same \tWitness function as above also for zero-preserving perturbations, then it can happen that there is a Pareto-optimal solution~$x$ in the
$\e$-box~$B$ that does not coincide with the solution~$x^{(0)}$ returned by the \tWitness function. This problem occurs,
for example, if $V^d \cdot x^{(d-1)} = V^d \cdot x^{(0)}$, which we cannot exclude if we allow zero-preserving perturbations.
We recommend to visualize this case for $d = 2$. On the other hand if we knew in advance that $V^d \cdot x^{(d-1)} = V^d \cdot x^{(0)}$,
then we could bound the probability of $V^dx^{(0)} \in (b_d, b_d+\e]$ already after the solution~$x^{(d-1)}$ has been generated.
Hence, if we were only interested in bounding this probability, we could terminate the \tWitness function already after~$x^{(d-1)}$
has been generated. Instead of terminating the \tWitness function at this point entirely, we keep in mind that $V^d \cdot x^{(0)}$
has already been determined and we restart the \tWitness function with the remaining objective functions only. 

Let us make this a bit more precise. As long as the solutions~$x^{(t)}$ generated by the \tWitness function differ in all
objective functions from~$x$, we execute the \tWitness function without any modification. Only if a solution~$x^{(t)}$
is generated that agrees with~$x$ in some objective functions, we deviate from the original \tWitness function. Let $K \subseteq [d]$
denote the objective functions in which~$x^{(t)}$ coincides with~$x$. At this point we can bound the probability that
$V^t\cdot x\in(b_t,b_t+\e]$ simultaneously for all $t\in K$. In order to also deal with the other objectives $t \notin K$,
we restart the \tWitness function. In this restart, we ignore all objective functions in~$K$ and we execute the
\tWitness function as if only objectives $t \notin K$ were present. Additionally we restrict in the restart the set of feasible solutions
to those that coincide in the objectives $t \in K$ with~$x$, i.e., to $\SET{ y \in \S \WHERE V^t \cdot y = V^t \cdot x \ \text{for all $t \in K$}}$.
With similar techniques as in the analysis of higher moments we ensure that different restarts lead to linearly independent
linear combinations.

This exploits that every Pareto-optimal solution~$x$ is also Pareto-optimal with respect to only the 
objective functions~$V^t$ with $t \notin K$ if the set~$\S$ is restricted to solutions
that agree with~$x$ in all objective functions~$V^t$ with $t \in K$. 
This property guarantees that whenever the \tWitness function is restarted, $x$ is still a Pareto-optimal solution
with respect to the restricted solution set and the remaining objective functions. 

It can happen that we have to restart the \tWitness function $d$ times before a unique candidate for a Pareto-optimal solution
in~$B$ is identified. As in each of these restarts at most~$d$ solutions are generated, the total number of solutions that is
generated can increase from $d+1$, as in the case of non-zero-preserving perturbations, to roughly~$d^2$. The set $I \subseteq [n]$
of indices restricted to which these solutions are linearly independent has a cardinality of at most~$d^3$. The reason for this
increase is that we have to choose more indices to obtain linear independence due to the fixed zeros. Taking a union bound over all
valid choices of~$I$, of the values that the generated solutions take at these positions, and of the possibilities when and due to
which objectives the restarts occur, yields Theorem~\ref{thm:MainZeroPreserving}. 
This theorem relies again on the result about linearly independent linear combinations of independent random variables
from~\cite{RoeglinT09} and its improved version for quasiconcave densities that we show in this article.

\section{Properties of (Weak) Pareto-optimal Solutions}
\label{sec:Pareto properties}

In this section we will identify the main properties of (weakly) Pareto-optimal solutions that lay the foundation for all variants of the \tWitness function. In the model without zero-preserving perturbations we only need properties of Pareto optima. In the model with zero-preserving perturbations, however, much more work has to be done and there we need the notion of weak Pareto optimality.

We start with an observation that is valid for both Pareto-optimal solutions and weak Pareto-optimal solutions.

\begin{proposition}
\label{Subset Proposition}
Let $\S \subseteq \RR^n$ be a set of solutions, let $f_1, \ldots, f_d \colon \S \to \RR$ be functions, let~$x^\star$ be a (weak) Pareto optimum with respect to~$\S$ and $\SET{ f_1, \ldots, f_d }$, and let $\S' \subseteq \S$ be a subset of solutions that contains~$x^\star$. Then~$x^\star$ is (weakly) Pareto-optimal with respect to~$\S'$ and $\SET{ f_1, \ldots, f_d }$.
\end{proposition}

The core idea of the \tWitness functions is given by the following lemma and Corollary~\ref{Recursion Corollary}. It implies that if~$x$ is Pareto-optimal with respect to~$\R_{t+1}$ and $\SET{ V^1, \ldots, V^{t+1} }$, then~$x$ is also Pareto-optimal with respect to~$\R_t$ and $\SET{ V^1, \ldots, V^t }$ (cf.\ function \Witness{V, B} described in Section~\ref{sec:Approach}). Given this as the induction step, it yields that~$x$ is Pareto-optimal with respect to~$\R_1$ and $\SET{ V^1 }$. This means that in iteration $t = 0$ we obtain $x^{(0)} = \argmin \bSET{ V^1 z \WHERE z \in \C_0 } = x$ because $\C_0 = \R_1$.

\begin{lemma}
\label{Recursion Lemma}
Let $\S \subseteq \RR^n$ be a set of solutions, let $f_1, \ldots, f_{t+1} \colon \S \to \RR$, $t \geq 1$, be functions, and let~$x^\star$ be a weak Pareto optimum with respect to~$\S$ and $\SET{ f_1, \ldots, f_{t+1} }$. We consider the set $\C \subseteq \S$ of solutions that dominate~$x^\star$ strongly with respect to $\SET{ f_1, \ldots, f_t }$.
\begin{enumerate}\renewcommand{\labelenumi}{(\Roman{enumi})}

  \item If $\C = \emptyset$, then~$x^\star$ is weakly Pareto-optimal with respect to~$\S$ and $\SET{ f_1, \ldots, f_t }$.

  \item If $\C \neq \emptyset$, then let $\hat{f} = \min_{x \in \C} f_{t+1}(x)$. Then $f_{t+1}(x^\star) \leq \hat{f}$. Furthermore, if $f_{t+1}(x^\star) < \hat{f}$, then~$x^\star$ is weakly Pareto-optimal with respect to $\R \mathop{:=} \{ x \in \S \WHERE f_{t+1}(x) < \hat{f} \}$ and $\SET{ f_1, \ldots, f_t }$.

\end{enumerate}
\end{lemma}

\begin{proof}
Claim~(I) holds due to the definition of weak Pareto optimality. Let us consider Claim~(II). If the inequality $f_{t+1}(x^\star) \leq \hat{f}$ does not hold, then $\hat{x} = \argmin_{x \in \C} f_{t+1}(x)$ dominates~$x^\star$ strongly with respect to $\SET{ f_1, \ldots, f_{t+1} }$. This is a contradiction since~$x^\star$ is weakly Pareto-optimal with respect to~$\S$ and $\SET{ f_1, \ldots, f_{t+1} }$.

Now let us show that~$x^\star$ is weakly Pareto-optimal with respect to~$\R$ and $\SET{ f_1, \ldots, f_t }$ if $f_{t+1}(x^\star) < \hat{f}$. The condition ensures that $x^\star \in \R$. Assume to the contrary that there exists a $y \in \R$ that dominates~$x^\star$ strongly with respect to $\SET{ f_1, \ldots, f_t }$. Since $\R \subseteq \S$, this implies $y \in \C$. Due to $y \in \R$ we obtain the contradiction $f_{t+1}(y) < \hat{f} \leq f_{t+1}(y)$, where the second inequality follows from the definition of~$\hat{f}$ and $y \in \C$.
\end{proof}

If the functions $f_1, \ldots, f_t$ in Lemma~\ref{Recursion Lemma} are injective, we can also obtain a statement about Pareto optima.

\begin{corollary}
\label{Recursion Corollary}
Let $\S \subseteq \RR^n$ be a set of solutions, let $f_1, \ldots, f_{t+1} \colon \S \to \RR$, $t \geq 1$, be functions, where $f_1, \ldots, f_t$ are injective, and let~$x^\star$ be a Pareto optimum with respect to~$\S$ and $\SET{ f_1, \ldots, f_{t+1} }$. We consider the set $\C \subseteq \S$ of solutions that dominate~$x^\star$ strongly with respect to $\SET{ f_1, \ldots, f_t }$.

\begin{enumerate}\renewcommand{\labelenumi}{(\Roman{enumi})}

  \item If $\C = \emptyset$, then~$x^\star$ is Pareto-optimal with respect to~$\S$ and $\SET{ f_1, \ldots, f_t }$.

  \item If $\C \neq \emptyset$, then let $\hat{f} = \min_{x \in \C} f_{t+1}(x)$. Then $f_{t+1}(x^\star) < \hat{f}$. Furthermore, $x^\star$ is Pareto-optimal with respect to $\R \mathop{:=} \{ x \in \S \WHERE f_{t+1}(x) < \hat{f} \}$ and $\SET{ f_1, \ldots, f_t }$.

\end{enumerate}
\end{corollary}

\begin{proof}
First of all we observe that a solution $y \in \S$ dominates~$x^\star$ with respect to $\SET{ f_1, \ldots, f_t }$ if and only if~$y$ dominates~$x^\star$ strongly with respect to $\SET{ f_1, \ldots, f_t }$. This is due to the injectivity of the functions $f_1, \ldots, f_t$. Consequently, Claim~(I) follows from the definition of Pareto optimality. Let us consider Claim~(II). Assume to the contrary that $\hat{f} \leq f_{t+1}(x^\star)$. In this case, the solution $\hat{x} = \argmin_{x \in \C} f_{t+1}(x)$ would dominate~$x^\star$ with respect to $\SET{ f_1, \ldots, f_{t+1} }$ contradicting the assumption that~$x^\star$ is Pareto-optimal. Hence, $f_{t+1}(x^\star) < \hat{f}$.

Due to Lemma~\ref{Recursion Lemma}, $x^\star$ is weakly Pareto-optimal with respect to~$\R$ and $\SET{ f_1, \ldots, f_t }$ because every Pareto optimum is also a weak Pareto optimum. As these functions are injective, $x^\star$ is even Pareto-optimal with respect to~$\R$ and $\SET{ f_1, \ldots, f_t }$.
\end{proof}

For the model with zero-preserving perturbations we need one more lemma that allows us to handle non-injectivity appropriately.

\begin{lemma}
\label{Restart Lemma}
Let $\S \subseteq \RR^n$ be a set of solutions, let $f_1, \ldots, f_{t+1} \colon \S \to \RR$, $t \geq 1$, be functions, and let~$x^\star$ be a Pareto optimum with respect to~$\S$ and $\SET{ f_1, \ldots, f_{t+1} }$. Furthermore, let $K \subseteq [t+1]$ be a tuple of indices and let~$\S'$ be a subset of $\SET{ x \in \S \WHERE f_k(x) = f_k(x^\star)\ \text{for all}\ k \in K }$. Then~$x^\star$ is Pareto-optimal with respect to~$\S'$ and $\SET{ f_k \WHERE k \in [t+1] \setminus K }$.
\end{lemma}

\begin{proof}
Assume to the contrary that~$x^\star$ is not Pareto-optimal. Then there exists a solution $y \in \S'$ such that~$y$ dominates~$x^\star$ with respect to $\SET{ f_k \WHERE k \in [t+1] \setminus K }$. Since $f_k(y) = f_k(x^\star)$ for all $k \in K$, solution~$y$ also dominates~$x^\star$ with respect to $\SET{ f_1, \ldots, f_{t+1} }$. This contradicts the assumption that~$x^\star$ is Pareto-optimal.
\end{proof}

\section{Non-zero-preserving Perturbations}

\label{sec:ZeroDestroying}

\subsection{Smoothed Number of Pareto-optimal Solutions}

\label{subsec:FirstMoment}

To prove Theorem~\ref{thm:MainFirstMoment} we assume without loss of generality that $n \geq d+1$ and consider the function given as Algorithm~\ref{algorithm:Witness} which we call the \tWitness function. It is very similar to the one suggested by Moitra and O'Donnell, but with an additional parameter~$I$. This parameter is a tuple of forbidden indices: it restricts the set of indices we are allowed to choose from. For the analysis of the smoothed number of Pareto-optimal solutions we will set $I = ()$. The parameter becomes important in the next section when we analyze higher moments.

\IfUseAlgorithmIIeStyle
{
\LinesNumbered
\begin{algorithm*}[h!t]
  \caption{\Witness{V, x, I}}
  \label{algorithm:Witness}
  set $I_{d+1} = I$\;
  set $\R_{d+1} = \S_{I_{d+1}}(x)$\label{l:initial loser set} \;
  \For{$t = d, d-1, \ldots, 0$}
  {
    set $\C_t = \bSET{ z \in \R_{t+1} \WHERE V^{1 \ldots t} z < V^{1 \ldots t} x }$\label{l:winner set} \;
    \uIf{$\C_t \neq \emptyset$}
    {\label{l:interesting case}
      set $x^{(t)} = \argmin \bSET{ V^{t+1} z \WHERE z \in \C_t }$\label{l:winner} \;
      \lIf{$t = 0$}{\Return{$x^{(t)}$} \;}
      set $i_t = \min \bSET{ i \in [n] \WHERE x^{(t)}_{i_t} \neq x_{i_t} }$\label{l:index} \;
      set $I_t = I_{t+1} \cup (i_t)$\;
      set $\R_t = \bSET{ z \in \R_{t+1} \WHERE V^{t+1} z < V^{t+1} x^{(t)} } \cap \S_{I_t}(x)$\label{l:loser set} \;
    }
    \Else
    {\label{l:technical case}
      set $i_t = \min ([n] \setminus I_{t+1})$\label{l:trivial index} \;
      set $I_t = I_{t+1} \cup (i_t)$\;
      set $x^{(t)}_i = \begin{cases}
                         \min (\SET{ 0, \ldots, \K } \setminus \SET{ x_i }) & \tIF i = i_t \cr
                         x_i                                                & \OTHERWISE
                       \end{cases}$\label{l:trivial winner} \;
      set $\R_t = \R_{t+1} \cap \S_{I_t}(x)$\label{l:trivial loser set} \;
    }
  }
  \Return{$(\bot, \ldots, \bot)$} \;
\end{algorithm*}%
}
{
\begin{algorithm*}[h!t]
  \caption{\Witness{V, x, I}}
  \label{algorithm:Witness}
  \begin{algorithmic}[1]  
    \STATE Set $I_{d+1} = I$.
    \STATE Set $\R_{d+1} = \S_{I_{d+1}}(x)$. \label{l:initial loser set}
    \FOR{$t = d, d-1, \ldots, 0$}
      \STATE Set $\C_t = \bSET{ z \in \R_{t+1} \WHERE V^{1 \ldots t} z < V^{1 \ldots t} x }$. \label{l:winner set}
      \IF{$\C_t \neq \emptyset$} \label{l:interesting case}
        \STATE Set $x^{(t)} = \argmin \bSET{ V^{t+1} z \WHERE z \in \C_t }$. \label{l:winner}
        \STATE \textbf{if} $t = 0$ \textbf{then} \textbf{return} $x^{(t)}$
        \STATE Set $i_t = \min \bSET{ i \in [n] \WHERE x^{(t)}_{i_t} \neq x_{i_t} }$. \label{l:index}
        \STATE Set $I_t = I_{t+1} \cup (i_t)$.
        \STATE Set $\R_t = \bSET{ z \in \R_{t+1} \WHERE V^{t+1} z < V^{t+1} x^{(t)} } \cap \S_{I_t}(x)$. \label{l:loser set}
      \ELSE \label{l:technical case}
        \STATE Set $i_t = \min ([n] \setminus I_{t+1})$. \label{l:trivial index}
        \STATE Set $I_t = I_{t+1} \cup (i_t)$.
        \STATE Set $x^{(t)}_i = \begin{cases}
                  \min (\SET{ 0, \ldots, \K } \setminus \SET{ x_i }) & \tIF i = i_t \COMMA \cr
                  x_i                                                & \OTHERWISE \DOT
                \end{cases}$ \label{l:trivial winner}
        \STATE Set $\R_t = \R_{t+1} \cap \S_{I_t}(x)$. \label{l:trivial loser set}
      \ENDIF
    \ENDFOR
    \RETURN $(\bot, \ldots, \bot)$    
  \end{algorithmic}
\end{algorithm*}%
}%

Let us give some remarks about the \tWitness function. Note that $\C_0 = \R_1$
since $V^{1 \ldots t} z < V^{1 \ldots t} x$ is no
restriction if $t = 0$. In Line~\ref{l:winner} ties are broken by taking
the lexicographically first solution~$x^{(t)}$. For $t \geq 1$ the index~$i_t$ in Line~\ref{l:index} 
exists because $V^1 x^{(t)} < V^1 x$ which implies $x^{(t)} \neq x$.

Unless stated otherwise, we assume that the following \emph{$\OK$-event} $\OK(V)$ occurs. This event occurs if
$|V^k \cdot (y - z)| \geq \e$ for every $k \in [d]$ and for arbitrary two distinct
solutions $y \neq z \in \S$ and if for all $k \in [d]$ there is an index $i \in [n]$
for which $|V^k_i| < 1$. Amongst others, the first property ensures that there is a unique
$\arg \min$ in Line~\ref{l:winner} and that the functions $V^1, \ldots, V^d$ are injective.
The latter property, which holds with probability~$1$,
ensures that $B_V(x) \in \B_\e$ for all vectors $x \in \SET{ -\K, \ldots, \K }^n$. Later we will
see that the $\OK$-event occurs with sufficiently high probability.

Before we start to analyze the \tWitness function, let us discuss the differences between the function
\Witness{V, B} described in Section~\ref{sec:Approach} and the function \Witness{V, x, I} given as
Algorithm~\ref{algorithm:Witness}. As described in Section~\ref{sec:Approach} for the illustrative case $d=2$,
the parameters~$B$ and~$x$ play exactly the same role if $B = B_V(x)$ assuming that the $\OK$-event
holds. As stated earlier, the additional parameter~$I$ in the function \Witness{V, x, I} has no meaning
for the analysis of the first moment. To prove Theorem~\ref{thm:MainFirstMoment}, we simply set it to the empty tuple. The case $\C_t \neq \emptyset$ (Line~\ref{l:interesting case}) is the interesting case, which is also captured by the function \Witness{V, B}. The case $\C_t = \emptyset$ (Line~\ref{l:technical case}) is the technical case. Here it is only important that we choose an index~$i_t$ that is not an element of~$I_{t+1}$ and that the vector~$x^{(t)}$ is defined such that $x^{(t)}$ coincides with~$x$ in all components $i \in I_{t+1}$ and that it does not coincide with~$x$ in component~$i_t$. Note that the vector~$x^{(t)}$ as we define it in Line~\ref{l:trivial winner} is not necessarily a solution from~$\S$.

In the remainder of this section we only consider the case that~$x$ is Pareto-optimal,
that~$I$ is an arbitrary index tuple with pairwise distinct indices, and that
the number~$|I|$ of indices contained in~$I$ is at most $n-(d+1)$. This ensures that the indices $i_0, \ldots, i_d$ exist.

\begin{lemma}
\label{lemma:Witness}
The call \Witness{V, x, I} returns the vector $x^{(0)} = x$.
\end{lemma}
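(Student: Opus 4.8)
The plan is to show that when $x$ is Pareto-optimal, the \tWitness\ function never enters the \texttt{else}-branch, and that at each step $t = d, d-1, \ldots, 0$ the solution $x$ itself survives in the set $\R_{t+1}$, so that the argmin in line~\ref{l:winner} is well-defined and, for $t=0$, must equal $x$. The key invariant I would maintain is: \emph{for every $t \in \{d, d-1, \ldots, 0\}$, we have $x \in \R_{t+1}$, and moreover $x \in \C_t$ whenever $t = 0$}. The first part of the invariant immediately implies $\C_t \neq \emptyset$ for all $t$ (since, as noted after the algorithm, $\C_0 = \R_1$, and for $t \geq 1$ we will see $x \in \C_t$ is not what we need — rather we need \emph{some} element of $\R_{t+1}$ to beat the threshold — so the careful statement is just $\C_t \ne \emptyset$), hence the \texttt{else}-branch is never taken and the indices $i_t$ are all chosen via line~\ref{l:index}.

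First I would establish $x \in \R_{d+1}$: by line~\ref{l:initial loser set}, $\R_{d+1} = \S \cap \S_I(x)$, and since $x \in \S$ and trivially $x_i = x_i$ for all $i \in I$, we have $x \in \S_I(x)$, so $x \in \R_{d+1}$. Next, the inductive step: assume $x \in \R_{t+1}$ for some $t \in \{d, \ldots, 1\}$. I need to show (a) $\C_t \neq \emptyset$, (b) the chosen $x^{(t)}$ satisfies $x^{(t)} \neq x$ so that $i_t$ exists, and (c) $x \in \R_t$. For (a): since $x$ is Pareto-optimal, no $z \in \S$ dominates it; but if $\C_t$ were empty, then no $z \in \R_{t+1}$ has $V^{1\ldots t} z < V^{1\ldots t} x$ — this is consistent with Pareto-optimality and does \emph{not} by itself give a contradiction, so the real reason $\C_t \ne \emptyset$ must come from a stronger invariant. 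I would instead strengthen the invariant to track \emph{which} solution plays the role of a witness: carry along the claim that at the start of iteration $t$ there is at least one solution in $\R_{t+1}$ that strictly beats $x$ in the first $t$ coordinates — for $t = d$ this needs $\C_d \ne \emptyset$, which follows because... here is the crux. The honest route is: because $x$ is Pareto-optimal, there is no $z$ with $V^{1\ldots d} z \le V^{1\ldots d} x$ and $V^{d+1} z \le V^{d+1} x$ with one strict — but this does not force $\C_d \ne \emptyset$ either. So I expect the correct argument is that the lemma's conclusion $x^{(0)} = x$ is proved by showing $x$ is always a \emph{legal candidate} and is \emph{selected}: concretely, I would prove by downward induction that $x \in \R_{t+1}$ for all $t$, and then observe that at $t = 0$, $\C_0 = \R_1 \ni x$, so $x^{(0)} = \arg\min\{V^1 z : z \in \C_0\}$; to conclude $x^{(0)} = x$ I must show $x$ attains this minimum, i.e. $V^1 x \le V^1 z$ for all $z \in \R_1$. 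But $\R_1 \subseteq \C_1 \subseteq \{z : V^1 z < V^1 x\}$ would be a contradiction unless $\R_1$ is structured so that... — so in fact the chain of $\R_t$'s is built precisely so that $\R_1$ contains only $x$ and solutions that \emph{tie} $x$ in the relevant objective. Thus the real content is: \textbf{the loser-set updates in line~\ref{l:loser set} remove exactly those solutions that could pre-empt $x$, while Pareto-optimality of $x$ guarantees $x$ is never itself removed.}

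So the clean plan: (1) By downward induction on $t$ show $x \in \R_{t+1}$, using at the \texttt{if}-branch that $x^{(t)} = \arg\min\{V^{t+1} z : z \in \C_t\}$ satisfies $V^{t+1} x^{(t)} \le V^{t+1} x$ (since if $x \in \C_t$ it is a competitor) — and then $V^{t+1} x < V^{t+1} x^{(t)}$ is \emph{false}, so $x$ is \emph{not} removed when forming $\R_t$ in line~\ref{l:loser set}, i.e. $x \in \R_t$. But wait: $x \in \C_t$ requires $V^{1\ldots t} x < V^{1\ldots t} x$, which is false. So $x \notin \C_t$ for $t \ge 1$, meaning $x$ never competes for $x^{(t)}$, and the removal condition $V^{t+1} z < V^{t+1} x^{(t)}$ must simply not be triggered by $z = x$ — which requires $V^{t+1} x \ge V^{t+1} x^{(t)}$, i.e. $x^{(t)}$ does \emph{not} strictly beat $x$ in objective $t+1$. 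This is exactly where Pareto-optimality of $x$ enters: I would show that if $x^{(t)}$ strictly beat $x$ in objective $t+1$ \emph{and} satisfies $V^{1\ldots t} x^{(t)} < V^{1\ldots t} x$, then by chasing the definitions of the earlier $\R$'s and $\C$'s one exhibits a solution dominating $x$ — contradicting Pareto-optimality. Combining: $x \in \R_{t+1}$ for all $t$ down to $0$; then $x \in \C_0 = \R_1$; then since every $z \in \R_1$ satisfies (by construction) $V^1 z \ge V^1 x$ — this follows because $z \in \R_1 \subseteq \R_2$ was not removed at step $t=1$, so $V^1 z \ge V^1 x^{(1)}$, and one shows $V^1 x^{(1)} \ge V^1 x$ is what we need, hmm — so the final identification $\arg\min = x$ needs: $x$ minimizes $V^1$ over $\R_1$; since all of $\R_1$ ties or loses to $x$ in coordinate $1$ by the chain of strict-inequality removals, and $x \in \R_1$, the minimum is $V^1 x$, attained by $x$; uniqueness of the $\arg\min$ (the $\OK$-event gives $|V^1 \cdot (y-z)| \ge \e > 0$ for distinct $y,z$) forces $x^{(0)} = x$.

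The \textbf{main obstacle} is exactly the subtlety flagged above: proving that the selected $x^{(t)}$ never strictly dominates $x$ in objective $t+1$ (equivalently, that $x$ is never deleted when passing from $\R_{t+1}$ to $\R_t$), which is where one must invoke Pareto-optimality of $x$ carefully — I would do this by induction, showing that at each stage the set $\R_{t+1}$ consists only of solutions $z$ with $V^{t+1\ldots d} z \le V^{t+1\ldots d} x$ in an appropriate componentwise sense plus $z \in \S_I(x)$, so that a solution in $\C_t$ beating $x$ at coordinate $t+1$ would dominate $x$ across all of $V^{1\ldots d+1}$ once we also account for $V^{d+1}$ via the successive argmin choices; the bookkeeping of which coordinates are "already controlled" is the delicate part, and handling the $\OK$-event to get strictness/uniqueness where needed is the routine-but-necessary finish. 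The \texttt{else}-branch is shown unreachable as a corollary of $x \in \R_{t+1}$ together with the observation that $x$ being a strict loser is impossible, so $\C_t$ is nonempty (one needs here that the chain actually keeps a strictly-dominating-in-first-$t$-coords witness alive; if not, the branch is entered, but then the final return is $(\bot,\ldots,\bot)$, contradicting the existence of the Pareto-optimal $x$ in the relevant box — so this contradiction is the lever that rules the branch out).
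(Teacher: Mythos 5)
Your proposal has the right overall shape (a downward‐induction survival statement for $x$, a non‐domination statement, and the $\OK$-event to make the final $\arg\min$ unique), but it rests on two concrete errors. First, you assert that for Pareto-optimal $x$ the \texttt{else}-branch is never entered, and you justify this by claiming that entering it would force the return value $(\bot,\ldots,\bot)$. Both statements are false: $\C_t$ can perfectly well be empty for some $t\ge 1$ (for instance when no feasible solution beats $x$ in the first objective), the function then just picks a trivial index in line~\ref{l:trivial index} and continues, and $(\bot,\ldots,\bot)$ is returned only if $\C_0=\R_1$ is empty, which cannot happen because $x\in\R_1$. The paper does not rule the \texttt{else}-branch out; its survival claim ($x\in\R_t$ for all $t$) simply handles the case $\C_t=\emptyset$ directly, since then $\R_t=\R_{t+1}\cap\S_I(x)$ (line~\ref{l:trivial loser set}) still contains $x$. (Your appeal to ``the Pareto-optimal $x$ in the relevant box'' is also vacuous here: this variant of the \tWitness\ function has no box parameter.)

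Second, you misread line~\ref{l:loser set}: $\R_t$ is \emph{defined} as the set of $z$ with $V^{t+1}z<V^{t+1}x^{(t)}$, so the strict inequality is the condition for being \emph{kept}, not removed. Hence $x$ survives exactly when $V^{t+1}x<V^{t+1}x^{(t)}$, not when $V^{t+1}x\ge V^{t+1}x^{(t)}$ as you require — you are aiming at the wrong inequality. Your route to it is also unsound: the invariant ``$\R_{t+1}$ contains only $z$ with $V^{t+1\ldots d}z\le V^{t+1\ldots d}x$'' is false (a $z$ with $V^{t+1}x<V^{t+1}z<V^{t+1}x^{(t)}$ can lie in $\R_t$), and Pareto-optimality alone cannot exclude a solution dominating $x$ in only the first $t+1$ objectives when $t+1\le d$. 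The paper's argument is different and avoids all of this: it first shows, for every $t$, that no $z\in\R_t$ dominates $x$ with respect to $V^1,\ldots,V^t$ — for $t=d+1$ this is Pareto-optimality, and for $t\le d$ it needs no Pareto-optimality at all, because such a $z$ would lie in $\C_t$, whence $V^{t+1}x^{(t)}\le V^{t+1}z$ by the $\arg\min$ in line~\ref{l:winner}, contradicting $V^{t+1}z<V^{t+1}x^{(t)}$ from $z\in\R_t$. Survival of $x$ then follows by downward induction: if $x\notin\R_t$ while $x\in\R_{t+1}\cap\S_I(x)$, then $V^{t+1}x^{(t)}\le V^{t+1}x$, so $x^{(t)}\in\R_{t+1}$ would dominate $x$ with respect to $V^1,\ldots,V^{t+1}$, contradicting the first claim; finally $x\in\C_0=\R_1$, no element of $\C_0$ beats $x$ in $V^1$, and the $\OK$-event makes $x$ the unique minimizer. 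Because your proposal never finds this filter/$\arg\min$ contradiction and relies on the two false claims above, it does not constitute a proof as written.
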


\begin{proof}
We show the following claim by induction on~$t$.

\begin{claim}
\label{claim:correctness zero destroying}
For all $t \in [d+1]$, solution~$x$ is Pareto-optimal with respect to~$\R_t$ and $\SET{ V^1, \ldots, V^t }$.
\end{claim}

\begin{proof}[Proof of Claim~\ref{claim:correctness zero destroying}]
Note that the functions $V^1, \ldots, V^d$ are injective due to the assumption that the $\OK$-event occurs. This allows us to apply Corollary~\ref{Recursion Corollary}. Recalling that $x \in \S_{I'}(x)$ for every index tuple~$I'$, Claim~\ref{claim:correctness zero destroying} is true for $t = d+1$ by assumption and due to Proposition~\ref{Subset Proposition}.

Now let us assume that the claim holds for some value $t+1$ and consider set~$\C_t$. We distinguish between two cases. If $\C_t = \emptyset$, then $\R_t = \R_{t+1} \cap \S_{I_t}(x)$ and the claim follows from the induction hypothesis, from Corollary~\ref{Recursion Corollary}~(I), and from Proposition~\ref{Subset Proposition}. If $\C_t \neq \emptyset$, then $\R_t = \R'_t \cap \S_{I_t}(x)$ for $\R'_t = \SET{ z \in \R_{t+1} \WHERE V^{t+1} z < V^{t+1} x^{(t)} }$. Hence, the claim follows from the induction hypothesis, from Corollary~\ref{Recursion Corollary}~(II), and from Proposition~\ref{Subset Proposition}.
\end{proof}

In accordance with Claim~\ref{claim:correctness zero destroying}, we obtain for $t=1$ that~$x$ is Pareto-optimal with respect to~$\R_1$ and $\tSET{ V^1 }$. In particular, $x \in \R_1 = \C_0 \neq \emptyset$, i.e., $x^{(0)} = x$. This solution will be returned in iteration $t = 0$. \qed
\end{proof}

At a first glance it seems odd to compute a solution~$x$ by calling a function with~$x$ as parameter. However, we will see that not all information about~$x$ is required to execute the call \Witness{V, x, I}. To be a bit more precise, the indices $i_1, \ldots, i_d$ and the entries at the positions $i \in I \cup (i_1, \ldots, i_d)$ of the vectors~$x^{(t)}$ constructed during the execution of the \tWitness function suffice to simulate the execution of \tWitness without knowing~$x$ completely (see Lemma~\ref{lemma:same behavior}). We will call these information a \emph{certificate} (see Definition~\ref{definition:certificate}). For technical reasons we will assume that we also know the entries of the vectors~$x^{(t)}$ at position $i_0 = \min ([n] \setminus (I \cup (i_1, \ldots, i_d)))$ and, for the analyis of higher moments, also at further positions.

For our purpose it is not necessary to know how to obtain the required information about~$x$ to reconstruct it. It suffices to know that the set of possible certificates is sufficiently small (see Lemma~\ref{lemma:certificate space}) and that for at least one of them the simulation of the execution of \tWitness returns~$x$ (see Lemma~\ref{lemma:same behavior}). This is one crucial property which will help us to bound the expected number of Pareto-optimal solutions.

\begin{definition}
\label{definition:certificate}
Let $x^{(0)}, \ldots, x^{(d)}$ be the vectors and~$I_1$ be the index tuple constructed during the call \Witness{V, x, I} and set $i_0 = \min \big( [n] \setminus I_1 \big)$ and $I_0 = I_1 \cup (i_0)$. We call the pair $(I_0, A_0)$ for $A_0 = \big[ x^{(d)}, \ldots, x^{(0)} \big]$ the \emph{$(V, I)$-certificate} of~$x$. The pair $(I_0, A)$ for $A = A_0|_{I_0}$ is called the \emph{restricted $(V, I)$-certificate} of~$x$. We call a pair $(I', A')$ a \emph{(restricted) $I$-certificate}, if there exist a realization~$V$ such that the $\OK$-event occurs and a Pareto-optimal solution $x \in \S$ such that $(I', A')$ is the (restricted) $(V, I)$-certificate of~$x$. By~$\CS(I)$ we denote the set of all restricted $I$-certificates.
\end{definition}

The notation used in this section is summarized in Table~\ref{table:Notation}.

\begin{table}[ht]
\begin{tabularx}{\textwidth}{|lX|} \hline
$\S \subseteq \SET{ 0, \ldots, \K }^n$\phantom{${A^A}^A$} & set of feasible solutions\\
$V^1,\ldots,V^d$ & linear objective functions\\
$V^{d+1} \colon \S \to \RR$ & adversarial objective function\\
$V^t_1,\ldots,V^t_n$ & coefficients of~$V^t$ for~$t\in[d]$\\
$f^t_i\colon[-1,1]\to[0,\phi]$ & probability density of~$V^t_i$ for~$t\in[d]$ and~$i\in[n]$\\
$V \in \RR^{d \times n}$ & matrix of coefficients of~$V^1,\ldots,V^d$\\ 
$\PO(V)$ & number of Pareto-optimal solutions for~$V$\\
$\OK(V)$ & event that $|V^k \cdot (y - z)| \geq \e$ for every $k \in [d]$ and for arbitrary two distinct
solutions $y \neq z \in \S$ and that for all $k \in [d]$ there is an index $i \in [n]$ for which $|V^k_i| < 1$\\
$\B_\e$ & set of all $\e$-boxes having corners~$b$ for which \\
        & $b \in \SET{ -n\K, -n\K + \e, \ldots, n\K - 2\e, n\K - \e}^d$\\
$B_V(x)$ & $\e$-box~$B$ for which $V^{1 \ldots d} x \in B$\\
$x^{(0)}, \ldots, x^{(d)}$ & vectors constructed during the call of Algorithm~\ref{algorithm:Witness}\\
 $\R_{d+1},\ldots,\R_0$ & sets constructed during the call of Algorithm~\ref{algorithm:Witness}\\
 $\C_{d},\ldots,\C_0$ & sets constructed during the call of Algorithm~\ref{algorithm:Witness}\\ 
$I_1$ & index tuple constructed during the call of Algorithm~\ref{algorithm:Witness}\\ 
$i_0$ & $\min \big( [n] \setminus I_1 \big)$\\
$I_0$ & $I_1 \cup (i_0)$\\
$(I_0, A_0)$ & $(V, I)$-certificate of~$x$ where $A_0 = \big[ x^{(d)}, \ldots, x^{(0)} \big]$\\
 $(I_0, A)$ & restricted $(V, I)$-certificate of~$x$ where $A = A_0|_{I_0}$\\
 $(I', A')$ & (restricted) $I$-certificate, i.e., there exist a realization~$V$ such that the $\OK$-event occurs and a Pareto-optimal solution $x \in \S$ such that $(I', A')$ is the (restricted) $(V, I)$-certificate of~$x$\\
 $\CS(I)$ & set of all restricted $I$-certificates\\
 $u \in \SET{ 0, \ldots, \K }^n$ & shift vector used in Algorithm~\ref{algorithm:Witness simulation}
\\ \hline
\end{tabularx}
\caption{Notation used in Section~\ref{subsec:FirstMoment}}
\label{table:Notation}
\end{table}

For the analysis of the first moment we only need restricted $I$-certificates. Our analysis of higher moments requires more knowledge about the vectors~$x^{(t)}$ than just the values~$x_i$ for $i \in I_0$. The additional indices are, however, depending on further calls of the \tWitness function which we do not know a priori. This is why we have to define two types of certificates. For the sake of reusability we formulate some statements more general than necessary for this section.

\begin{lemma}\newcommand{\AST}{`$*$'}
\label{lemma:certificate form}
Let~$V$ be an arbitrary realization for which the $\OK$-event occurs, let~$x$ be a Pareto-optimal solution with respect to~$\S$ and~$V$, and let $(I_0, A)$ be the restricted $(V, I)$-certificate of~$x$. Then $I_0 = (j_1, \ldots, j_{|I|+d+1})$ consists of pairwise distinct indices and
\[
  A = \begin{bmatrix}
    x_{j_1} & \ldots & x_{j_{|I|}} & \overline{x_{j_{|I|+1}}} & *      & \ldots                   & * \cr
    \vdots  &        & \vdots      & x_{j_{|I|+1}}            & \ddots & \ddots                   & \vdots \cr
    \vdots  &        & \vdots      & \vdots                   & \ddots & \overline{x_{j_{|I|+d}}} & * \cr
    x_{j_1} & \ldots & x_{j_{|I|}} & x_{j_{|I|+1}}            & \ldots & x_{j_{|I|+d}}            & x_{j_{|I|+d+1}}
  \end{bmatrix}^\T
  \in \SET{ 0, \ldots, \K }^{(|I|+d+1) \times (d+1)} \COMMA
\]
where each \AST\ can be an arbitrary value from $\SET{ 0, \ldots, \K }$ (different \AST-entries can represent different values) and where $\overline{z}$ for a value $z \in \SET{ 0, \ldots, \K }$ can be an arbitrary value from $\SET{ 0, \ldots, \K } \setminus \SET{ z }$.
\end{lemma}

\begin{proof}
Lemma~\ref{lemma:Witness} implies that the last column $(x_{j_1}, \ldots, x_{j_{|I|+d}+1})^\T$ of~$A$ equals $x^{(0)}_{I_0} = x|_{I_0}$.
Hence, we just have to consider the first~$d$ columns of~$A$. Note that $I = (j_1, \ldots, j_{|I|})$ and $j_{|I|+1}, \ldots, j_{|I|+d+1} = i_d, \ldots, i_0$. The construction of the sets~$\R_t$ yields $\R_t \subseteq \S_{I_t}(x)$ (see Lines~\ref{l:initial loser set}, \ref{l:loser set}, and~\ref{l:trivial loser set}). Index~$i_t$ is always chosen such that $i_t \notin I_{t+1}$: If it is constructed in Line~\ref{l:index}, then $x^{(t)}_{i_t} \neq x_{i_t}$. Since in this case we have
\[
	x^{(t)}
	\in \C_t
	\subseteq \R_{t+1}
	\subseteq \S_{I_{t+1}}(x) \COMMA
\]
index~$i_t$ cannot be an element of~$I_{t+1}$. In Line~\ref{l:trivial index}, index~$i_t$ is explicitely constructed such that $i_t \notin I_{t+1}$. The same argument holds for index~$i_0$. Hence, the indices of~$I_0$ are pairwise distinct.

Now, consider the column of~$A$ corresponding to vector~$x^{(t)}$ for $t \in [d]$. If $\C_t = \emptyset$, then the form of the column follows directly from the construction of~$x^{(t)}$ in Line~\ref{l:trivial winner} and from the fact that the indices of~$I_0$ are pairwise distinct. If $\C_t \neq \emptyset$, then
\[
	x^{(t)}
	\in \C_t
	\subseteq \R_{t+1}
	\subseteq \S_{I_{t+1}}(x) \COMMA
\]
i.e., $x^{(t)}$ coincides with~$x$ in all indices $i \in I_{t+1}$. By the choice of~$i_t$ in Line~\ref{l:index} we get $x^{(t)}_{i_t} \in \SET{ 0, \ldots, \K } \setminus \SET{x_{i_t}}$. This concludes the proof.
\end{proof}

Let $(I_0, A_0)$ be the $(V, I)$-certificate of~$x$ and let $J \supseteq I_0$ be a tuple of pairwise distinct indices.
As mentioned before, our goal is to execute the \tWitness function without revealing the entire matrix~$V$.
For this we consider the following variant of the \tWitness function given as Algorithm~\ref{algorithm:Witness simulation} that uses information about~$x$ given by the index tuple~$J$, the matrix $A = A_0|_J$ with columns $a^{(d)}, \ldots, a^{(0)}$, a shift vector~$u$ and the $\e$-box $B = B_V(x-u)$ instead of vector~$x$ itself. The meaning of the shift vector will become clear when we analyze the probability of certain events. We will see that not all information about~$V$ needs to be revealed to execute the new \tWitness function, i.e., we have some randomness left which we can use later. With the choice of the shift vector we can control which information has to be revealed for executing the \tWitness function.

\IfUseAlgorithmIIeStyle
{
\LinesNumbered
\begin{algorithm*}[h!t]
  \caption{\Witness{V, J, A, B, u}}
  \label{algorithm:Witness simulation}
  let~$b$ be the corner of~$B$ \;
  set $\R_{d+1} = \bigcup_{s=0}^{d} \S_J \big( a^{(s)} \big)$ \;
  \For{$t = d, d-1, \ldots, 0$}
  {
    set $\C_t = \bSET{ z \in \R_{t+1} \WHERE V^{1 \ldots t} \cdot (z - u) \leq b|_{1 \ldots t} } \cap \S_J \big( a^{(t)} \big)$\label{l.II:winner set} \;
    \uIf{$\C_t \neq \emptyset$}
    {
      set $x^{(t)} = \argmin \bSET{ V^{t+1} z \WHERE z \in \C_t }$\label{l.II:winner} \;
      \lIf{$t = 0$}{\Return{$x^{(t)}$} \;}
      set $\R_t = \bSET{ z \in \R_{t+1} \WHERE V^{t+1} z < V^{t+1} x^{(t)} } \cap \bigcup_{s=0}^{t-1} \S_J \big( a^{(s)} \big)$\label{l.II:loser set} \;
    }
    \Else
    {
      set $x^{(t)} = (\bot, \ldots, \bot)$ \;
      set $\R_t = \R_{t+1} \cap \bigcup_{s=0}^{t-1} \S_J \big( a^{(s)} \big)$ \;
    }
  }
  \Return{$x^{(0)}$} \;
\end{algorithm*}%
}
{
\begin{algorithm*}[h!t]
  \caption{\Witness{V, J, A, B, u}}
  \label{algorithm:Witness simulation}
  \begin{algorithmic}[1]
    \STATE Let~$b$ be the corner of~$B$.
    \STATE Set $\R_{d+1} = \bigcup_{s=0}^{d} \S_J \big( a^{(s)} \big)$.
    \FOR{$t = d, d-1, \ldots, 0$}
      \STATE Set $\C_t = \bSET{ z \in \R_{t+1} \WHERE V^{1 \ldots t} \cdot (z - u) \leq b|_{1 \ldots t} } \cap \S_J \big( a^{(t)} \big)$. \label{l.II:winner set}
      \IF{$\C_t \neq \emptyset$}
        \STATE Set $x^{(t)} = \argmin \bSET{ V^{t+1} z \WHERE z \in \C_t }$. \label{l.II:winner}
        \STATE \textbf{if} $t = 0$ \textbf{then} \textbf{return} $x^{(t)}$
        \STATE Set $\R_t = \bSET{ z \in \R_{t+1} \WHERE V^{t+1} z < V^{t+1} x^{(t)} } \cap \bigcup_{s=0}^{t-1} \S_J \big( a^{(s)} \big)$. \label{l.II:loser set}
      \ELSE
        \STATE Set $x^{(t)} = (\bot, \ldots, \bot)$.
        \STATE Set $\R_t = \R_{t+1} \cap \bigcup_{s=0}^{t-1} \S_J \big( a^{(s)} \big)$.
      \ENDIF
    \ENDFOR
    \RETURN $x^{(0)}$
  \end{algorithmic}
\end{algorithm*}%
}

\begin{lemma}
\label{lemma:same behavior}
Let $(I_0, A_0)$ be the $(V, I)$-certificate of~$x$, let $J \supseteq I_0$ be an arbitrary tuple of pairwise distinct indices, let $A = A_0|_J$, let $u \in \SET{ 0, \ldots, \K }^n$ be an arbitrary vector, and let $B = B_V(x - u)$. Then the call \Witness{V, J, A, B, u} returns vector~$x$.
\end{lemma}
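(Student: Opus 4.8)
The plan is to show that the call \Witness{V, J, A, B, u} simulates the call \Witness{V, x, I} step by step, producing the same sequence of solutions $x^{(d)}, \ldots, x^{(0)}$, and in particular returning $x^{(0)} = x$ by Lemma~\ref{lemma:Witness}. The key observation is that the extra parameters $J$, $A$, $B$, and $u$ encode exactly the information about $x$ that the original \tWitness\ function uses, so that the restrictions imposed in the new algorithm are satisfied by the original run and do not cut away anything essential.

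First I would set up notation so that the two runs can be compared. Write $x^{(d)}, \ldots, x^{(0)}$ for the solutions produced by the original call \Witness{V, x, I} and $\tilde{x}^{(d)}, \ldots, \tilde{x}^{(0)}$ for those produced by \Witness{V, J, A, B, u}; similarly distinguish the sets $\R_t, \C_t$ from $\tilde{\R}_t, \tilde{\C}_t$. I would prove by downward induction on $t$ (starting from $t = d+1$) that $\tilde{\R}_t = \R_t$, that $\tilde{\C}_t = \C_t$ whenever the former is defined, and that $\tilde{x}^{(t)} = x^{(t)}$. For the base case, $\R_{d+1} = \S \cap \S_I(x)$ in the original run, while $\tilde{\R}_{d+1} = \S \cap \bigcup_{t'=0}^d \S_J(a^{(t')})$. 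Here I would use Lemma~\ref{lemma:certificate form}: the columns $a^{(d)}, \ldots, a^{(0)}$ of $A = A^*|_J$ agree with $x|_J$ except that $a^{(t)}$ flips the bit at index $j_{|I|+d+1-t}$ (the index $i_{t}$ selected in line~\ref{l:index} or~\ref{l:trivial index}) — and all these flipped indices lie in $J \supseteq I^*$. Since each generated solution $x^{(t)}$ satisfies $x^{(t)} \in \S_{I_{t+1}}(x)$ and differs from $x$ only at $i_t \in J$, we get $x^{(t)} \in \S_J(a^{(t)})$ for the original solutions, so $\{x^{(d)}, \ldots, x^{(0)}\} \subseteq \tilde{\R}_{d+1}$, and the containment $\tilde{\R}_{d+1} \subseteq \R_{d+1}$ holds because $a^{(0)} = x|_J$ forces any $z$ with $z|_J = a^{(0)}$ (or, more carefully, any $z \in \S_J(a^{(t')})$ that survives) to agree with $x$ on $I$; the key point is that the witness only ever \emph{uses} the part of $\R_{d+1}$ lying in $\S_I(x)$, so replacing $\R_{d+1}$ by the smaller set $\tilde{\R}_{d+1}$ changes nothing.

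For the inductive step I would treat the two branches separately. When $\C_t \neq \emptyset$ in the original run: the condition $V^{1\ldots t} z < V^{1\ldots t} x$ defining $\C_t$ must be matched against $V^{1\ldots t}(z - u) \le b|_{1\ldots t}$ defining $\tilde{\C}_t$, where $b$ is the corner of $B = B_V(x - u)$. By definition of the $\e$-box and the $\OK$-event, $V^{1\ldots t}(x - u) \in (b_1, b_1+\e] \times \cdots$, and $|V^k \cdot (y-z)| \ge \e$ for distinct $y, z \in \S$; so $V^k z < V^k x$ is equivalent to $V^k z \le V^k x - \e \le V^k(x-u) - \e + V^k u = \ldots$, which after the shift is equivalent to $V^k(z-u) \le b_k$. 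This is exactly the discretization trick that makes the $\le b$ test coincide with the strict $<$ test, and I would spell it out carefully. The added restriction $z \in \S_J(a^{(t)})$ in $\tilde{\C}_t$ is satisfied by $x^{(t)}$ by the certificate form, and because the $\arg\min$ in line~\ref{l.II:winner} only selects among candidates the original algorithm also considers (all lying in $\S_I(x) \subseteq$ the relevant fiber), it picks the same solution $x^{(t)}$. The loser-set update then matches since $\R_t$ in the original run is already contained in $\S_{I_{t+1}}(x) \subseteq \bigcup_{t' < t} \S_J(a^{(t')})$. When $\C_t = \emptyset$ in the original run, one checks $\tilde{\C}_t = \emptyset$ too (the extra constraints only shrink it, and $\C_t = \emptyset$ already), the original sets $x^{(t)}$ to a bit-flip of $x$ while the new run sets $\tilde{x}^{(t)} = (\bot, \ldots, \bot)$ — but this discrepancy is harmless because that $x^{(t)}$ is never used again except through $\R_t = \R_{t+1} \cap \S_I(x)$, which matches $\tilde{\R}_t$. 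Finally, at $t = 0$ both algorithms return their respective $x^{(0)}$, and since these coincide and equal $x$ by Lemma~\ref{lemma:Witness}, the call \Witness{V, J, A, B, u} returns $x$.

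The main obstacle I anticipate is the careful bookkeeping around the shift vector and the box-versus-strict-inequality equivalence: one has to verify that $V^{1\ldots t} z \le b|_{1\ldots t}$ (with $b$ the corner of $B_V(x-u)$) is, under the $\OK$-event, exactly equivalent to $V^{1\ldots t} z < V^{1\ldots t} x$ for all $z \in \S$, and that this holds uniformly across all coordinates $1, \ldots, t$ and all iterations. The subtlety is that the adversarial objective $V^{d+1}$ is handled by the convention $V^{d+1}\cdot(x-u) := V^{d+1} x$, so no shift issue arises there, but the linear coordinates genuinely rely on $\e$ being a lower bound on coordinate gaps between distinct feasible solutions — this is precisely what the first clause of the $\OK$-event buys us. The second, more clerical obstacle is keeping straight that the $\bot$-valued "fake" winners in the empty-$\C_t$ branch never propagate into any set that affects the final output; once one observes that $\R_t$ in that branch does not reference $x^{(t)}$ at all, this is immediate.
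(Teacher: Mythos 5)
Your overall plan is the right one and matches the paper's: simulate the original run \Witness{V, x, I} inside \Witness{V, J, A, B, u}, use Lemma~\ref{lemma:certificate form} to relate the columns of~$A$ to the~$x^{(t)}$, use the $\OK$-event to turn the box test $V^{1\ldots t}\cdot(z-u)\le b|_{1\ldots t}$ into the strict test $V^{1\ldots t}z<V^{1\ldots t}x$, and observe that the $\bot$-winners in the empty branch never matter. But the induction you actually announce is not provable as stated, and the one step you offer to carry it through is wrong. The equality $\tilde{\R}_t=\R_t$ (and $\tilde{\C}_t=\C_t$) is false in general: the new sets are cut down by $\bigcup_{t'}\S_J\big(a^{(t')}\big)$, which excludes many elements of~$\R_t$ that agree with~$x$ only on the current index tuple but match none of the patterns~$a^{(t')}$ on all of~$J$; only the inclusion $\tilde{\R}_t\subseteq\R_t$ holds. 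Worse, the inclusion you invoke for the loser-set update, $\S_{I_{t+1}}(x)\subseteq\bigcup_{t'<t}\S_J\big(a^{(t')}\big)$, is reversed: the correct (and needed) fact is $\bigcup_{t'\le t-1}\S_J\big(a^{(t')}\big)\subseteq\S_{I_t}(x)$, which follows from Lemma~\ref{lemma:certificate form} and is what gives $\tilde{\R}_t\subseteq\R_t$.

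What is genuinely missing is the third invariant that makes the induction close: for every \emph{future} round~$t'<t$ with $\C_{t'}\neq\emptyset$, the original winner~$x^{(t')}$ must be shown to survive in~$\tilde{\R}_t$. This is not a consequence of any containment between the two runs' sets; it has to be checked against the value constraint in line~\ref{l.II:loser set}, namely $V^{t+1}x^{(t')}<V^{t+1}x^{(t)}$, which holds because $x^{(t')}\in\R_{t'}\subseteq\R_t$ in the original run, combined with $x^{(t')}\in\tilde{\R}_{t+1}$ (induction hypothesis) and $x^{(t')}\in\S_J\big(a^{(t')}\big)$ (definition of the certificate). Without this invariant you cannot even argue that $x^{(t)}\in\tilde{\C}_t$ beyond the first iteration, which is the step your argmin argument relies on. The paper's proof is organized around exactly these three statements ($\R'_t\subseteq\R_t$; $x'^{(t)}=x^{(t)}$ when $\C_t\neq\emptyset$; $x^{(t')}\in\R'_t$ for all earlier-indexed rounds with nonempty~$\C_{t'}$), proved jointly by downward induction. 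Your write-up gestures at the first two but replaces the third by an incorrect set inclusion, so as written the inductive step fails; fixing it requires adding precisely this winner-preservation invariant and the value inequality that supports it.
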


Before we give a formal proof of Lemma~\ref{lemma:same behavior} we try to give some intuition for it. Instead of considering the whole set~$\S$ of solutions we restrict it to vectors that look like the vectors we want to reconstruct in the next iterations, i.e., we intersect the current set with the set $\bigcup_{s=0}^{t-1} \S_J \big( a^{(s)} \big)$ in iteration~$t$. In this way we only deal with subsets of the original sets, but we do not lose the vectors we want to reconstruct since $J \supseteq I_0$. This restriction to the essential candidates of solutions allows us to execute this variant of the \tWitness function with only partial information about~$V$.

\begin{proof}
Let~$\R'_t$, $\C'_t$, and~$x'^{(t)}$ denote the sets and vectors constructed during the execution of the call \Witness{V, J, A, B, u} and let $\R_t$, $\C_t$, and~$x^{(t)}$ denote the sets and vectors constructed during the execution of call \Witness{V, x, I}. We prove the following claims simultaneously by induction.

\begin{claim}
\label{claim:same behavior I}
$\R'_t \subseteq \R_t$ for all $t \in [d+1]$.
\end{claim}

\begin{claim}
\label{claim:same behavior II}
$x'^{(t)} = x^{(t)}$ for all $t \in [d]_0$ for which $\C_t \neq \emptyset$.
\end{claim}

\begin{claim}
\label{claim:same behavior III}
$x^{(s)} \in \R'_t$ for all $t \in [d+1]$ and all $s \in [t-1]_0$ for which $\C_s \neq \emptyset$.
\end{claim}

\begin{proof}[Proof of Claim~\ref{claim:same behavior I}, Claim~\ref{claim:same behavior II}, and Claim~\ref{claim:same behavior III}]
Let us first focus on the shift vector~$u$ and compare Line~\ref{l:winner set} of the first \tWitness function (Algorithm~\ref{algorithm:Witness}) with Line~\ref{l.II:winner set} of the second \tWitness function (Algorithm~\ref{algorithm:Witness simulation}). The main difference is that in the first version we have the restriction $V^{1 \ldots t} z < V^{1 \ldots t} x$, whereas in the second version we seek for solutions~$z$ such that $V^{1 \ldots t} \cdot (z - u) \leq b|_{1 \ldots t}$. As~$b$ is the corner of the $\e$-box $B = B_V(x - u)$, those restrictions are equivalent for solutions $z \in \S$ since
\[
  V^{1 \ldots t} \cdot (z - u) \leq b|_{1 \ldots t}
  \iff V^{1 \ldots t} \cdot (z - u) < V^{1 \ldots t} \cdot (x - u)
  \iff V^{1 \ldots t} z < V^{1 \ldots t} x \DOT
\]
The first inequality is due to the occurrence of the $\OK$-event.

Now we prove the statements by downward induction over~$t$. Let $t = d+1$. Lemma~\ref{lemma:certificate form} yields $a^{(s)} \big|_I = x|_I$ for all $s \in [d]_0$, i.e., $\bigcup_{s=0}^d \S_J \big( a^{(s)} \big) \subseteq \S_I(x)$ because $I \subseteq I_0 \subseteq J$. Consequently, $\R'_{d+1} \subseteq \R_{d+1}$ (Claim~\ref{claim:same behavior I}). Consider an arbitrary index $s \in [(d+1)-1]_0$ for which $\C_s \neq \emptyset$. Then
\[
	x^{(s)}
	\in \C_s
	\subseteq \R_{s+1}
	\subseteq \S
\]
(see Line~\ref{l:winner}) and, thus, $x^{(s)} \in \S_J \big( a^{(s)} \big)$. Hence, $x^{(s)} \in \R'_{d+1}$ (Claim~\ref{claim:same behavior III}).

For the induction step let $t \leq d$. By the observation above we have
\begin{align*}
	\C'_t &= \SET{ z \in \R'_{t+1} \WHERE V^{1 \ldots t}z < V^{1 \ldots t} x } \cap \S_J \big( a^{(t)} \big) \quad \text{and} \cr
	\C_t &= \SET{ z \in \R_{t+1} \WHERE V^{1 \ldots t} z < V^{1 \ldots t} x } \DOT
\end{align*}
Since $\R'_{t+1} \subseteq \R_{t+1}$, we obtain $\C'_t \subseteq \C_t$. We first consider the case $\C_t = \emptyset$ which implies $\C'_t = \emptyset$ and $t \geq 1$ in accordance with Lemma~\ref{lemma:Witness} since $\C_0 \neq \emptyset$. Then
\begin{align*}
	\R'_t &= \R'_{t+1} \cap \bigcup_{s=0}^{t-1} \S_J \big( a^{(s)} \big) \quad \text{and} \cr
	\R_t &= \R_{t+1} \cap \S_{I_t}(x) \DOT
\end{align*}
According to Lemma~\ref{lemma:certificate form}, all vectors $x^{(0)}, \ldots, x^{(t-1)}$ coincide with~$x$ on the indices $i \in I_t$ as $I_t \subseteq I_0 \subseteq J$. Thus, $\bigcup_{s=0}^{t-1} \S_J \big( a^{(s)} \big) \subseteq \S_{I_t}(x)$. As $\R'_{t+1} \subseteq \R_{t+1}$ due to Claim~\ref{claim:same behavior I} of the induction hypothesis, we obtain $\R'_t \subseteq \R_t$ (Claim~\ref{claim:same behavior I}). For Claim~\ref{claim:same behavior II} nothing has to be shown here. Let $s \in [t-1]_0$ be an index for which $\C_s \neq \emptyset$. Then $x^{(s)} \in \R'_{t+1}$ by Claim~\ref{claim:same behavior III} of the induction hypothesis, $x^{(s)} \in \S_J \big( a^{(s)} \big)$, and consequently $x^{(s)} \in \R'_t$ (Claim~\ref{claim:same behavior III}).

Finally, let us consider the case $\C_t \neq \emptyset$. Claim~\ref{claim:same behavior III} of the induction hypothesis yields $x^{(t)} \in \R'_{t+1}$. Since $x^{(t)} \in \S_J \big( a^{(t)} \big)$ and $V^{1 \ldots t} x^{(t)} < V^{1 \ldots t} x$, also $x^{(t)} \in \C'_t$ and, thus, $\C'_t \neq \emptyset$. Hence, $x'^{(t)} = x^{(t)}$ as $\C'_t \subseteq \C_t$ (Claim~\ref{claim:same behavior II}). The remaining claims have only to be validated if $t \geq 1$. Then
\[
	\R'_t = \bSET{ z \in \R'_{t+1} \WHERE V^{t+1} z < V^{t+1} x^{(t)} } \cap \bigcup_{s=0}^{t-1} \S_J \big( a^{(s)} \big)
\]
because $x'^{(t)} = x^{(t)}$, and
\[
	\R_t = \bSET{ z \in \R_{t+1} \WHERE V^{t+1} z < V^{t+1} x^{(t)} } \cap \S_{I_t}(x) \DOT
\]
With the same argument used for the case $\C_t = \emptyset$ we obtain $\R'_{t+1} \cap \bigcup_{s=0}^{t-1} \S_J \big( a^{(s)} \big) \subseteq \R_{t+1} \cap \S_{I_t}(x)$ and, hence, $\R'_t \subseteq \R_t$ (Claim~\ref{claim:same behavior I}). Consider an arbitrary index $s \in [t-1]_0$ for which $\C_s \neq \emptyset$. Then
\[
	x^{(s)}
	\in \C_s
	\subseteq \R_{s+1}
	\subseteq \R_t \DOT
\]
In particular, $V^{t+1} x^{(s)} < V^{t+1} x^{(t)}$ (see Line~\ref{l:loser set}) and, hence, $V^{t+1} x^{(s)} < V^{t+1} x'^{(t)}$ because $x'^{(t)} = x^{(t)}$. Furthermore, $x^{(s)} \in \R'_{t+1}$ due to the induction hypothesis, Claim~\ref{claim:same behavior III}, and $x^{(s)} \in \S_J \big( a^{(s)} \big)$. Consequently, $x^{(s)} \in \R'_t$ (Claim~\ref{claim:same behavior III}).
\end{proof}

With the claims above Lemma~\ref{lemma:same behavior} follows immediately: Since $x^{(0)} = x$ and $\C_0 \neq \emptyset$ due to Lemma~\ref{lemma:Witness}, we obtain $x'^{(0)} = x^{(0)}$ (Claim~\ref{claim:same behavior II}). Hence, the call \Witness{V, J, A, B, u} returns $x'^{(0)} = x$. \qed 
\end{proof}

As mentioned earlier, with the shift vector~$u$ we control which information of~$V$ has to be revealed to execute the call \Witness{V, J, A, B, u}. While Lemma~\ref{lemma:same behavior} holds for every vector~$u$, we have to choose~$u$ carefully for our probabilistic analysis to work. We will see that the choice $u^\star = u^\star(J, A)$, given by
\begin{equation}
\label{eq:shift vector}
u^\star_i = \begin{cases}
  |x_i - 1| & \tIF i = i_0 \COMMA \cr
  x_i       & \tIF i \in J \setminus (i_0) \COMMA \cr
  0         & \OTHERWISE \COMMA
\end{cases}
\end{equation}
is appropriate since $x_i - u^\star_i = 0$ for all $i \in J \setminus (i_0)$ and $|x_{i_0} - u^\star_{i_0}| = 1$ (cf.\ Lemma~\ref{lemma:shifted certificate form}). Recall that~$i_0$ is the index that has been added to~$I_1$ in the definition of the $(V, I)$-certificate to obtain~$I_0$ and note that $u^\star \in \SET{ 0, \ldots, \K }^n$. Moreover, for every index $i \in J$ the value~$x_i$ is given in the last column of~$A$ (see Lemma~\ref{lemma:certificate form}). Hence, if $(I_0, A_0)$ is the $(V, I)$-certificate of~$x$, then vector~$u^\star$ can be defined when a tuple $J \supseteq I_0$ and the matrix $A = A_0|_J$ are known; we do not have to know the solution~$x$ itself.

For bounding the number of Pareto-optimal solutions consider the functions $\chi_{I_0, A, B}(V)$ parameterized by an arbitrary restricted $I$-certificate $(I_0, A)$, and an arbitrary $\e$-box $B \in \B_\e$, defined as follows: $\chi_{I_0, A, B}(V) = 1$ if the call \Witness{V, I_0, A, B, u^\star(I_0, A)} returns a solution $x' \in \S$ for which $B_V \big( x' - u^\star(I_0, A) \big) = B$, and $\chi_{I_0, A, B}(V) = 0$ otherwise.

\begin{corollary}
\label{corollary:realization bound}
Assume that the $\OK$-event occurs. Then the number $\PO(V)$ of Pareto-optimal solutions is at most
\[
	\sum \limits_{(I_0, A) \in \CS(I)} \sum \limits_{B \in \B_\e} \chi_{I_0, A, B}(V) \DOT
\]
\end{corollary}

\begin{proof}
Let~$x$ be a Pareto-optimal solution, let $(I_0, A)$ be the restricted $(V, I)$-certificate of~$x$, and let $B = B_V \big( x - u^\star(I_0, A) \big) \in \B_\e$. Due to Lemma~\ref{lemma:same behavior}, \Witness{V, I_0, A, B, u^\star(I_0, A)} returns vector~$x$. Hence, $\chi_{I_0, A, B}(V) = 1$. It remains to show that the assignment $x \mapsto (I_0, A, B)$ given in the previous lines is injective. Otherwise we would count the occurence of two distinct Pareto-optimal solutions~$x_1$ and~$x_2$ only once in the sum stated in Corollary~\ref{corollary:realization bound}.

Let~$x_1$ and~$x_2$ be distinct Pareto-optimal solutions and let $(I_0^{(1)}, A_1)$ and $(I_0^{(2)}, A_2)$ be the restricted $(V, I)$-certificates of~$x_1$ and~$x_2$, respectively. If $(I_0^{(1)}, A_1) \neq (I_0^{(2)}, A_2)$, then~$x_1$ and~$x_2$ are mapped to distinct triplets. Otherwise, $u^\star(I_0^{(1)}, A_1) = u^\star(I_0^{(2)}, A_2)$ and, hence, $B_V \big( x_1 - u^\star(I_0^{(1)}, A_1) \big) \neq B_V \big( x_2 - u^\star(I_0^{(2)}, A_2) \big)$ because of the $\OK$-event and $x_1 \neq x_2$. Consequently, also in this case~$x_1$ and~$x_2$ are mapped to distinct triplets.
\end{proof}

Corollary~\ref{corollary:realization bound} immediately implies a bound on
the expected number of Pareto-optimal solutions.

\begin{corollary}
\label{corollary:expectation bound}
The expected number of Pareto-optimal solutions is bounded by
\[
  \Ex[V]{\PO(V)} \leq \sum_{(I_0, A) \in \CS(I)} \sum_{B \in \B_\e} \Pr[V]{E_{I_0, A, B}} + (\K+1)^n \cdot \Pr[V]{\overline{\OK(V)}} \COMMA
\]
where $E_{I_0, A, B}$ denotes the event that the call \Witness{V, I_0, A, B, u^\star(I_0, A)} returns a vector~$x'$ such that $B_V \big( x' - u^\star(I_0, A) \big) = B$.
\end{corollary}

\begin{proof}
By applying Corollary~\ref{corollary:realization bound}, we obtain
\begin{align*}
&\Ex[V]{\PO(V)} \cr
  &= \Ex[V]{\PO(V) \,\big|\, \OK(V)} \cdot \Pr[V]{\OK(V)} + \Ex[V]{\PO(V) \,\big|\, \overline{\OK(V)}} \cdot \Pr[V]{\overline{\OK(V)}} \cr
  &\leq \Ex[V]{\left. \sum_{(I_0, A) \in \CS(I)} \sum_{B \in \B_\e} \chi_{I_0, A, B}(V) \right| \OK(V)} \cdot \Pr[V]{\OK(V)} + |\S| \cdot \Pr[V]{\overline{\OK(V)}} \cr
  &\leq \Ex[V]{\sum_{(I_0, A) \in \CS(I)} \sum_{B \in \B_\e} \chi_{I_0, A, B}(V)} + (\K+1)^n \cdot \Pr[V]{\overline{\OK(V)}} \cr
  &= \sum_{(I_0, A) \in \CS(I)} \sum_{B \in \B_\e} \Pr[V]{E_{I_0, A, B}} + (\K+1)^n \cdot \Pr[V]{\overline{\OK(V)}} \DOT \qedhere
\end{align*}
\end{proof}

We will see that the first term of the sum in
Corollary~\ref{corollary:expectation bound} can be bounded independently
of~$\e$ and that the limit of the second term tends to~$0$ for $\e \to 0$.
First of all, we analyze the size of the restricted certificate space.

\begin{lemma}
\label{lemma:certificate space}
The size of the restricted certificate space $\CS(I)$ for $I = ()$ is bounded by
\[
	|\CS(I)| \leq (\K+1)^{(d+1)^2} n^d \DOT
\]
\end{lemma}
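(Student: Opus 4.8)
The plan is to bound the number of restricted $I_0$-certificates by counting the number of possible index tuples $I^*$ and the number of possible bit matrices $A$ that can arise. Recall from Lemma~\ref{lemma:certificate form} that a restricted $(V, I_0)$-certificate of a Pareto-optimal solution $x$ has the shape $(I^*, A)$ with $I^* = (j_1, \ldots, j_{d+1})$ (since $|I_0| = 0$, so $|I^*| = |I_0| + d + 1 = d + 1$) a tuple of pairwise distinct indices from $[n]$, and $A \in \SET{0,1}^{(d+1) \times (d+1)}$ whose last column is $x|_{I^*}$ and whose first $d$ columns have the prescribed ``staircase'' pattern: in the column corresponding to $x^{(t)}$ the first $d - t + 1$ entries are forced (they equal $x_{j_1}, \ldots, x_{j_{d}}$ in the appropriate coordinates up to the flipped entry $\overline{x_{j_{|I|+d-t+1}}}$) and the remaining entries are free `$*$'-entries.

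First I would count the tuples $I^*$. Since $I^*$ is a tuple of $d+1$ pairwise distinct indices drawn from $[n]$, the number of such tuples is $n \cdot (n-1) \cdots (n - d) \le n^{d+1}$. Actually, to get the sharper bound stated, I would be a little more careful: the index $i^* = \min([n] \setminus \hat I)$ is \emph{determined} once $\hat I = (j_1, \ldots, j_d)$ is fixed, so $I^*$ is really determined by the first $d$ of its entries together with the outcome; but $\hat I$ itself is a tuple of $d$ pairwise distinct indices, giving at most $n^d$ choices. So the number of possible $I^*$ is at most $n^d$.

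Next I would count, for each fixed $I^*$, the number of bit matrices $A$ consistent with the form in Lemma~\ref{lemma:certificate form}. The entries of $A$ that are not pinned down to a fixed value or to a coordinate of $x$ are the `$*$'-entries together with the free choice of $x|_{I^*}$ itself. Counting crudely: $A$ is a $(d+1)\times(d+1)$ $0/1$ matrix, so there are at most $2^{(d+1)^2}$ possibilities for it. Multiplying the two counts gives $|\CS(I_0)| \le 2^{(d+1)^2} \cdot n^d$, which is $O(n^d)$ for fixed $d$, as claimed.

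The only real subtlety — and the step I would be most careful about — is matching the combinatorics of the \tWitness\ function to the claimed cardinality, i.e.\ making sure that I have correctly identified which entries of the certificate are genuinely free and which are forced, and that $i^*$ (and hence the last column of $I^*$) is indeed a deterministic function of the rest so that it contributes no extra factor of $n$. Given Lemma~\ref{lemma:certificate form}, this is bookkeeping rather than a genuine obstacle: the lemma already hands us the precise form of $(I^*, A)$, so the proof is essentially ``count the tuples, count the $0/1$ matrices, multiply.'' A looser union bound (treating every entry of $A$ as free and allowing $n^{d+1}$ choices for $I^*$) would still give $O(n^{d+1})$, but to land exactly on $2^{(d+1)^2} n^d$ one must use that $\hat I$ has only $d$ free indices.
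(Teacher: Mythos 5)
Your proposal is correct and follows essentially the same argument as the paper: the $d$ indices created by the \tWitness\ function give at most $n^d$ choices (with $i^*$ determined deterministically from them), and the binary $(d+1)\times(d+1)$ matrix $A$ is bounded crudely by $2^{(d+1)^2}$, yielding the stated product. The extra discussion of the staircase structure of $A$ is harmless but unnecessary, since the final count uses only the crude bound.
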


\begin{proof}
Exactly~$d$ indices $i_1, \ldots, i_d$ are created during the execution of the call \Witness{V, x, I} if the $\OK$-event occurs and if~$x$ is Pareto-optimal with respect to~$V$. The index~$i_0$ is determined deterministically depending on the indices $i_1, \ldots, i_d$. Matrix~$A$ of every restricted $I$-certificate $(I_0, A)$ is a $(d+1) \times (d+1)$-matrix with entries from $\SET{ 0, \ldots, \K }$. Hence, the number of possible restricted $I$-certificates is bounded by $(\K+1)^{(d+1)^2} n^d$.
\end{proof}

Let us now fix an arbitrary $I$-certificate $(I_0, A_0)$, a tuple $J \supseteq I_0$, and an $\e$-box $B \in \B_\e$. We want to analyze the probability $\Pr[V]{E_{J, A, B}}$ where $A = A_0|_J$. By~$V_J$ and~$V_{\overline{J}}$ we denote the part of the matrix $V^{1 \ldots d}$ that belongs to the indices $i \in J$ and to the indices $i \notin J$, respectively. We apply the principle of deferred decisions and assume that~$V_{\overline{J}}$ is fixed as well, i.e., we will only exploit the randomness of~$V_J$.

As motivated above, the call \Witness{V, J, A, B, u} can be executed without the full knowledge of~$V_J$. To formalize this, we introduce matrices~$Q_k$ that describe the linear combinations of~$V^k_J$ that suffice to be known:
\begin{equation}
\label{eq:linear combinations}
  Q_k = \big[ p^{(d)}, \ldots, p^{(k)}, p^{(k-2)} - p^{(k-1)}, \ldots,  p^{(0)} - p^{(k-1)} \big] \in \SET{ -\K, \ldots, \K }^{|J| \times d}
\end{equation}
for $p^{(t)} = p^{(t)}(J, A, u) = a^{(t)} - u|_J$ where~$a^{(t)}$ are the columns of matrix $A = \big[ a^{(d)}, \ldots, a^{(0)} \big]$ and $t \in [d]_0$. Note that the matrices $Q_k = Q_k(J, A, u)$ depend on the pair $(J, A)$ and on the vector~$u$.

\begin{lemma}
\label{lemma:output determination}
Let $u \in \SET{ 0, \ldots, \K }^n$ be an arbitrary shift vector and let~$U$ and~$W$ be two realizations of~$V$ such that $U_{\overline{J}} = W_{\overline{J}}$ and $U^k_J \cdot q = W^k_J \cdot q$ for all indices $k \in [d]$ and all columns~$q$ of the matrix $Q_k(J, A, u)$. Then the calls \Witness{U, J, A, B, u} and \Witness{W, J, A, B, u} return the same result.
\end{lemma}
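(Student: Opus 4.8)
The plan is to show by downward induction on $t = d+1, d, \ldots, 0$ that the two runs of the \tWitness\ function maintain identical state: the loser sets $\R_t$ coincide, the candidate sets $\C_t$ coincide, and the vectors $x^{(t)}$ coincide. Since the output is just $x^{(0)}$ (returned in the $t=0$ iteration), identical state throughout implies identical output. The base case $t = d+1$ is immediate: $\R_{d+1} = \S \cap \bigcup_{t'=0}^d \S_J(a^{(t')})$ depends only on $\S$, $J$, and $A$, none of which involve $U$ or $W$.

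For the inductive step, assume $\R_{d+1}^U = \R_{d+1}^W, \ldots, \R_{t+1}^U = \R_{t+1}^W$ (call this common set $\R_{t+1}$). I must show $\C_t^U = \C_t^W$, then $x^{(t),U} = x^{(t),W}$, then $\R_t^U = \R_t^W$. The first is the crux. Line~\ref{l.II:winner set} sets $\C_t = \bigl\{ z \in \R_{t+1} \cap \S_J(a^{(t)}) \WHERE V^{1\ldots t}\cdot(z-u) \le b|_{1\ldots t} \bigr\}$. For $z \in \S_J(a^{(t)})$, the restriction $z_i = a^{(t)}_i$ holds for all $i \in J$. I would split $V^k \cdot (z-u) = V^k_J \cdot (z-u)|_J + V^k_{\overline J} \cdot (z-u)|_{\overline J}$. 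The second summand agrees for $U$ and $W$ since $U_{\overline J} = W_{\overline J}$. For the first summand, when $z \in \S_J(a^{(t)})$ we have $(z-u)|_J = a^{(t)} - u|_J = p^{(t)}(J,A,u)$, which by~\eqref{eq:linear combinations} is a column of $Q_t(J,A,u)$ (the $t$-th entry in the list $p^{(d)}, \ldots, p^{(t)}, \ldots$). Hence $U^k_J \cdot (z-u)|_J = U^k_J \cdot p^{(t)} = W^k_J \cdot p^{(t)} = W^k_J \cdot (z-u)|_J$ for all $k \le t \le d$ by hypothesis. Therefore $V^{1\ldots t}\cdot(z-u)$ takes the same value under $U$ and $W$ for every relevant $z$, so the defining inequality selects the same elements and $\C_t^U = \C_t^W =: \C_t$.

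Next, $x^{(t)} = \arg\min\{ V^{t+1} z \WHERE z \in \C_t \}$ (line~\ref{l.II:winner}); I need the $\arg\min$ to be the same under $U$ and $W$. Every $z \in \C_t \subseteq \S_J(a^{(t)})$ satisfies $(z)|_J = a^{(t)}$, and differences $z - z'$ of two such vectors vanish on $J$; but more directly, $V^{t+1}_J z = V^{t+1}_J a^{(t)}$ is the same additive constant for all $z \in \C_t$, and it equals $U^{t+1}_J \cdot p^{(t)} + U^{t+1}_J \cdot u|_J$ --- wait, I should instead note that $V^{t+1}_J \cdot z = V^{t+1}_J \cdot a^{(t)}$, a constant over $\C_t$, so the $\arg\min$ over $\C_t$ is governed entirely by $V^{t+1}_{\overline J} \cdot z$, which agrees for $U$ and $W$. (A cleaner route: if $t+1 \le d$, then $p^{(t)}$ is a column of $Q_{t+1}$, giving $U^{t+1}_J p^{(t)} = W^{t+1}_J p^{(t)}$ directly; and if $t+1 = d+1$, the objective $V^{d+1}$ does not depend on the perturbed matrix at all.) Either way $x^{(t),U} = x^{(t),W}$. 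Finally, $\R_t$ in line~\ref{l.II:loser set} is $\bigl\{ z \in \R_{t+1} \cap \bigcup_{t'=0}^{t-1}\S_J(a^{(t')}) \WHERE V^{t+1} z < V^{t+1} x^{(t)} \bigr\}$; both the set being filtered and the threshold $V^{t+1} x^{(t)}$ are now known to be the same, and for any $z$ in the filtered set the inequality $V^{t+1} z < V^{t+1} x^{(t)}$ has the same truth value under $U$ and $W$ --- here I again use that $z$ and $x^{(t)}$ both lie in $\S_J$ of some $a^{(t')}$ so the $V_J$-contributions are determined by columns of the appropriate $Q$ matrices (or, for $t+1 = d+1$, by $V^{d+1}$ which is $V$-independent). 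The $\C_t = \emptyset$ branch is handled identically and more easily, since $\R_t = \R_{t+1} \cap \bigcup_{t'<t}\S_J(a^{(t')})$ involves no objective evaluation.

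The main obstacle is bookkeeping which column of which $Q_k$ witnesses each vector difference that appears in an objective evaluation during the run --- in particular, confirming that the differences $x^{(t')} - x^{(t-1)}$ implicit in comparing loser-set thresholds across rounds are exactly the columns $p^{(t')} - p^{(t-1)}$ appearing in $Q_t$ (for $t \ge 1$), and handling the boundary $k = d+1$ where the adversarial objective is used and no linear combination is needed at all. Once that correspondence is pinned down, each of the three inductive sub-claims reduces to the observation that $V^k_J$ is applied only to vectors lying in $\{0\} \cup \{\text{columns of } Q_k\}$ (after subtracting the constant $\S_J$-contribution), on which $U$ and $W$ agree by assumption.
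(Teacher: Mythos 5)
Your proof is correct and is essentially the paper's argument: the paper fixes an objective index $k$ and enumerates which lines of the call need information about $V^k_J$, showing each needed quantity is $V^k_J$ applied to a column of $Q_k$ (namely $p^{(t)}$ for the winner-set tests in rounds $t\ge k$, nothing for the $\arg\min$ since all of $\C_t$ agrees on $J$, and $p^{(t')}-p^{(k-1)}$ for the loser-set comparisons in round $k-1$), which is exactly the bookkeeping you carry out, just packaged as a round-by-round induction that the two executions maintain identical state. One inessential slip: in your parenthetical ``cleaner route'' for the $\arg\min$ step, $p^{(t)}$ is \emph{not} a column of $Q_{t+1}$ (its columns are $p^{(d)},\ldots,p^{(t+1)}$ and $p^{(t')}-p^{(t)}$ for $t'\le t-1$), but your primary argument --- that $V^{t+1}_J z$ is constant on $\C_t$, so the minimizer is governed by $V^{t+1}_{\overline J}$, which $U$ and $W$ share --- is the one the paper uses and suffices.
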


Lemma~\ref{lemma:output determination} states that for different realizations~$U_J$ and~$W_J$ of~$V_J$ the modified \tWitness function outputs the same result. Actually, in the proof we will even see that the complete execution of both calls is identical. This means that solution~$x$ is already determined if these realizations are known. However, there is still randomness left in the objective values $V^1 x, \ldots, V^d x$ which allows us to bound the probability that~$x$ falls into box~$B$ (see Corollary~\ref{corollary:success unlikely}).

\begin{proof}
We fix an index $k \in [d]$ and analyze which information of~$V^k_J$ is required for the execution of the call \Witness{V, J, A, B, u}. For the execution of Line~\ref{l.II:winner set} we need to know $V^k \cdot (z - u)$ for solutions $z \in \S_J \big( a^{(t)} \big)$ in all iterations $t \geq k$. Since we assume~$V^k_{\overline{J}}$ to be known, this means that
\[
	V^k_J \cdot \big( z|_J - u|_J \big)
	= V^k_J \cdot \big( a^{(t)} - u|_J \big)
	= V^k_J \cdot p^{(t)}
\]
must be revealed. For $t \geq k$ vector~$p^{(t)}$ is a column of~$Q_k$. The execution of Line~\ref{l.II:winner} does not require further information about~$V^k_J$: The only iteration where we might need information about~$V^k_J$ is iteration $t=k-1$. However, as $\C_t \subseteq \S_J \big( a^{(t)} \big)$, we obtain
\[
  x^{(t)}
  = \argmin \bSET{ V^{t+1} z \WHERE z \in \C_t }
  = \argmin \bSET{ V^{t+1}_{\overline{J}} z|_{\overline{J}} \WHERE z \in \C_t }
\]
because all solutions $z \in \C_t$ agree on the entries with indices $i \in J$. Since $V^{t+1}_{\overline{J}} = V^k_{\overline{J}}$ is known, $x^{(t)}$ can be determined without any further information. Note that this does not imply that $V^{t+1} x^{(t)}$ is already specified.

It remains to consider Line~\ref{l.II:loser set}. Only in iteration $t = k-1$ we need information about~$V^k$. In that iteration it suffices to know $V^k_J \cdot \big( z|_J - x^{(t)}|_J \big)$ for every solution $z \in \bigcup_{s=0}^{t-1} \S_J \big( a^{(s)} \big)$. Hence, for $z \in \S_J \big( a^{(s)} \big)$, $s \in [t-1]_0 = [k-2]_0$, the linear combinations
\[
  V^k_J \cdot \big( z|_J - x^{(t)}|_J \big)
  = V^k_J \cdot \big( \big( a^{(s)} - u|_J \big) - \big( a^{(k-1)} - u|_J \big) \big)
  = V^k_J \cdot \big( p^{(s)} - p^{(k-1)} \big)
\]
must be revealed. For $s \in [k-2]_0$, vector $p^{(s)} - p^{(k-1)}$ is a column of~$Q_k$.

As~$U$ and~$W$ agree on all necessary information, both calls return the same result.
\end{proof}

We will now see why $u^\star = u^\star(J, A)$ defined in Equation~\eqref{eq:shift vector} is a good shift vector.

\begin{lemma}\newcommand{\AST}{`$*$'}\newcommand{\PLUS}{`$+$'}
\label{lemma:shifted certificate form}
Let $Q = \big[ \hat{p}^{(d)}, \ldots, \hat{p}^{(0)} \big]$ where $\hat{p}^{(t)} = p^{(t)} \big( J, A, u^\star(J, A) \big) \big|_{I_0}$. Then
\[
  |Q| = \begin{bmatrix}
    0 & \ldots & 0 & + & * & \ldots & * \cr
    \vdots & & \vdots & 0 & \ddots & \ddots & \vdots \cr
    \vdots & & \vdots & \vdots & \ddots & + & * \cr
    0 & \ldots & 0 & 0 & \ldots & 0 & 1
  \end{bmatrix}^\T \in \SET{ 0, \ldots, \K }^{(|I|+d+1) \times (d+1)} \COMMA
\]
where~$|Q|$ denotes the matrix~$Q'$ for which $q'_{ij} = |q_{ij}|$. Each \AST-entry can be an arbitrary value from $\SET{ 0, \ldots, \K }$ and each \PLUS-entry can be an arbitrary value from $\SET{ 1, \ldots, \K }$. Different \AST-entries as well as different \PLUS-entries can represent different values.
\end{lemma}

\begin{proof}\newcommand{\CR}{\\[0.15em]}
Let $I_0 = (j_1, \ldots, j_{|I|+d+1})$, i.e., $i_0 = j_{|I|+d+1}$. According to Lemma~\ref{lemma:certificate form} and the construction of vector~$u^\star$ in Equation~\eqref{eq:shift vector} we obtain
\[
  Q = \begin{bmatrix}
    x_{j_1}                  & \ldots        & \ldots                   & x_{j_1}         \cr
    \vdots                   &               &                          & \vdots          \cr
    x_{j_{|I|}}              & \ldots        & \ldots                   & x_{j_{|I|}}     \cr
    \overline{x_{j_{|I|+1}}} & x_{j_{|I|+1}} & \ldots                   & x_{j_{|I|+1}}   \cr
    *                        & \ddots        & \ddots                   & \vdots          \cr
    \vdots                   & \ddots        & \overline{x_{j_{|I|+d}}} & x_{j_{|I|+d}}   \cr
    *                        & \ldots        & *                        & x_{j_{|I|+d+1}}
  \end{bmatrix}
  -
  \begin{bmatrix}
    x_{j_1}             & \ldots & x_{j_1}             \cr
    \vdots              &        & \vdots              \cr
    x_{j_{|I|}}         & \ldots & x_{j_{|I|}}         \cr
    x_{j_{|I|+1}}       & \ldots & x_{j_{|I|+1}}       \CR
    \vdots              &        & \vdots              \cr
    x_{j_{|I|+d}}       & \ldots & x_{j_{|I|+d}}       \CR
    |x_{j_{|I|+d+1}}-1| & \ldots & |x_{j_{|I|+d+1}}-1|
  \end{bmatrix} \DOT
\]
The claim follows since $|a-b| \leq \K$, $\overline{a} - a \neq 0$, and $\big| a - |a-1| \big| = 1$ for all $a, b \in \SET{ 0, \ldots, \K }$.
\end{proof}

\begin{lemma}
\label{lemma:linear independence}
For all $k \in [d]$ the columns of the matrix $Q_k \big( J, A, u^\star(J, A) \big)$ and the vector~$p^{(0)}$ are linearly independent.
\end{lemma}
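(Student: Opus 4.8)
The plan is to read everything off the explicit staircase structure supplied by Lemma~\ref{lemma:shifted certificate form}. First I would reduce the claim to the vectors restricted to the index tuple~$I^*$: restriction to~$I^*$ is a linear map, so linear independence of the restrictions implies linear independence of the original vectors. Write~$\hat{p}^{(t)} := p^{(t)}\big(J, A, u^*(J, A)\big)\big|_{I^*}$ for~$t \in [d]_0$. Lemma~\ref{lemma:shifted certificate form} tells us that the first~$|I|$ entries of each~$\hat{p}^{(t)}$ vanish and that the~$(d+1) \times (d+1)$ matrix obtained from~$\big[\hat{p}^{(d)}, \ldots, \hat{p}^{(0)}\big]$ by deleting these~$|I|$ zero rows is, up to the signs of its entries, triangular with~$\pm 1$ on the diagonal, hence invertible. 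Thus~$\hat{p}^{(d)}, \ldots, \hat{p}^{(0)}$ are linearly independent and span the~$(d+1)$-dimensional space~$W$ of all vectors supported on the last~$d+1$ entries of~$I^*$.

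Next I would count dimensions. By~\eqref{eq:linear combinations} the matrix~$Q_k\big(J, A, u^*(J, A)\big)$ has exactly~$d$ columns, so together with~$p^{(0)}$ we are looking at~$d+1$ vectors; after restriction to~$I^*$ they all lie in~$W$. Since a spanning family of size~$\dim W = d+1$ must be a basis, it suffices to show that the span of the~$d+1$ restricted vectors is all of~$W$; pulling back along the restriction map then yields the lemma. To show they span~$W$ I would recover each basis vector~$\hat{p}^{(t)}$, $t \in [d]_0$: the vectors~$\hat{p}^{(d)}, \ldots, \hat{p}^{(k)}$ occur directly among the columns of~$Q_k$, and~$\hat{p}^{(0)}$ is the distinguished extra vector~$p^{(0)}$; for~$k \ge 2$ the last column~$\hat{p}^{(0)} - \hat{p}^{(k-1)}$ of~$Q_k$ combines with~$\hat{p}^{(0)}$ to give~$\hat{p}^{(k-1)} = \hat{p}^{(0)} - \big(\hat{p}^{(0)} - \hat{p}^{(k-1)}\big)$, and then for~$1 \le j \le k-2$ the column~$\hat{p}^{(j)} - \hat{p}^{(k-1)}$ of~$Q_k$ gives~$\hat{p}^{(j)} = \big(\hat{p}^{(j)} - \hat{p}^{(k-1)}\big) + \hat{p}^{(k-1)}$. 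Hence all of~$\hat{p}^{(0)}, \ldots, \hat{p}^{(d)}$, and therefore all of~$W$, lie in the span.

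Finally, the two extreme values of~$k$ deserve a remark. For~$k = 1$ the ``difference'' block of~$Q_k$ is empty, so~$Q_1 = \big[\hat{p}^{(d)}, \ldots, \hat{p}^{(1)}\big]$ and together with~$p^{(0)}$ we immediately have the full triangular family~$\hat{p}^{(d)}, \ldots, \hat{p}^{(0)}$; for~$k = d$ the first block consists of the single column~$\hat{p}^{(d)}$, but the recovery argument above goes through verbatim. I do not expect a real obstacle here: all the combinatorial content is in Lemma~\ref{lemma:shifted certificate form}, and what remains is the elementary fact that~$d+1$ vectors spanning a~$(d+1)$-dimensional space are linearly independent, together with the bookkeeping of which~$\hat{p}^{(t)}$ are expressible through the columns of~$Q_k$ and~$p^{(0)}$.
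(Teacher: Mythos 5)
Your proposal is correct and follows essentially the same route as the paper: reduce to the restrictions $\hat{p}^{(t)} = p^{(t)}|_{I^*}$, use Lemma~\ref{lemma:shifted certificate form} to get their linear independence from the triangular structure, and conclude independence of the columns of $Q_k$ together with $p^{(0)}$. The only cosmetic difference is the final step: the paper expands an arbitrary vanishing linear combination of these $d+1$ vectors in terms of the independent $\hat{p}^{(t)}$ and reads off that all coefficients vanish, while you argue dually that the $d+1$ restricted vectors span the $(d+1)$-dimensional coordinate subspace supported on the last $d+1$ indices of $I^*$ and hence form a basis; both conclusions are immediate given the structure supplied by Lemma~\ref{lemma:shifted certificate form}.
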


\begin{proof}
Let $\hat{p}^{(t)} = p^{(t)} \big|_{I_0}$ for all $t \in [d]_0$. It suffices to show that the columns of the submatrix $\hat{Q}_k = Q_k \big|_{I_0}$ and the vector~$\hat{p}^{(0)}$ are linearly independent. Consider the matrix $Q = \big[ \hat{p}^{(d)}, \ldots, \hat{p}^{(0)} \big]$. Due to Lemma~\ref{lemma:shifted certificate form} the last~$d+1$ rows of~$Q$ form a lower triangular matrix and the entries on the principal diagonal are from the set $\SET{ -\K, \ldots, \K } \setminus \SET{0}$. Consequently, the vectors $\hat{p}^{(t)}$ are linearly independent. As these vectors are the same as the columns of matrix~$\hat{Q}_1$ plus vector~$\hat{p}^{(0)}$ (see Equation~\ref{eq:linear combinations}), the claim holds for $k = 1$. Now let $k \geq 2$. We consider an arbitrary linear combination of the columns of~$\hat{Q}_k$ and the vector~$\hat{p}^{(0)}$ and show that it is~$0$ if and only if all coefficients are~$0$.
\begin{align*}
0
&= \sum_{t=k}^d \lambda_t \cdot \hat{p}^{(t)} + \sum_{t=0}^{k-2} \lambda_t \cdot \big( \hat{p}^{(t)} - \hat{p}^{(k-1)} \big) + \mu \cdot \hat{p}^{(0)} \cr
&= \sum_{t=k}^d \lambda_t \cdot \hat{p}^{(t)} + \sum_{t=1}^{k-2} \lambda_t \cdot \hat{p}^{(t)} - \left( \sum_{t=0}^{k-2} \lambda_t \right) \cdot \hat{p}^{(k-1)} + (\lambda_0 + \mu) \cdot \hat{p}^{(0)} \DOT
\end{align*}
As the vectors~$\hat{p}^{(t)}$ are linearly independent, we first get $\lambda_t = 0$ for $t \in [d] \setminus \SET{ k-1 }$, which yields $\lambda_0 = 0$ due to $\sum_{t=0}^{k-2} \lambda_t = 0$ and, finally, $\mu = 0$ because of $\lambda_0 + \mu = 0$. This concludes the proof.
\end{proof}

\begin{corollary}
\label{corollary:success unlikely}
Let $\gamma = d(d+1)$. For an arbitrary restricted $I$-certificate $(I_0, A)$ the probability of the event $E_{I_0, A, B}$ is bounded by
\[
  \Pr[V]{E_{I_0, A, B}} \leq (2\gamma\K)^{\gamma-d} \phi^\gamma \e^d
\]
and by
\[
  \Pr[V]{E_{I_0, A, B}} \leq 2^d (\gamma\K)^{\gamma-d} \phi^d \e^d
\]
if all densities are quasiconcave.
\end{corollary}

\begin{proof}
Event $E_{I_0, A, B}$ occurs if the output of the call \Witness{V, I_0, A, B, u^\star(I_0, A)} is a vector~$x'$ for which $B_V \big( x'-u^\star(I_0, A) \big) = B$. We apply the principle of deferred decisions and assume that $V|_{\overline{I_0}}$ is fixed arbitrarily. Now let us further assume that the linear combinations of~$V^k_{I_0}$ given by the columns of matrix $Q_k = Q_k(I_0, A, u^\star(I_0, A))$ are known for all $k \in [d]$. This means that for some fixed values we consider all realizations of~$V$ for which the linear combinations of~$V^k_{I_0}$ given by the columns of~$Q_k$ equal these values. In accordance with Lemma~\ref{lemma:output determination}, vector~$x'$ is therefore already determined, i.e., it is the same for all realizations of~$V$ that are still under consideration.

The equality $B_V \big( x'-u^\star(I_0, A) \big) = B$ holds if and only if
\[
  V^k \cdot \big( x' - u^\star(I_0, A) \big)
  = V^k_{\overline{I_0}} \cdot \big( x' - u^\star(I_0, A) \big) \big|_{\overline{I_0}} + V^k_{I_0} \cdot \big( x' - u^\star(I_0, A) \big) \big|_{I_0} \in (b_k, b_k + \e]
\]
holds for all $k \in [d]$, where $b = (b_1, \ldots, b_d)$ is the corner of~$B$. Since
\[
  \big( x' - u^\star(I_0, A) \big) \big|_{I_0} = a^{(0)} - u^\star(I_0, A)|_{I_0} = p^{(0)}
\]
for the vector $p^{(0)} = p^{(0)} \big( I_0, A, u^\star(I_0, A) \big)$, this is equivalent to the event that
\[
  V^k_{I_0} \cdot p^{(0)} \in (b_k, b_k + \e] - V^k_{\overline{I_0}} \cdot \big( x' - u^\star(I_0, A) \big) \big|_{\overline{I_0}} \FED C_k \COMMA
\]
where~$C_k$ is an interval of length~$\e$ depending on~$x'$ and hence on the linear combinations of~$V_{I_0}$ given by the matrices~$Q_k$. By~$C$ we denote the $d$-dimensional hypercube $C = \prod_{k=1}^d C_k$ with side length~$\e$ defined by the intervals~$C_k$.

For all $k \in [d]$ let $Q'_k \in \SET{ -\K, \ldots, \K}^{|I_0| \times (d+1)}$ be the matrix consisting of the columns of~$Q_k$ and the vector~$p^{(0)}$. These matrices form the diagonal blocks of the matrix
\[
  Q' = \begin{bmatrix}
    Q'_1 & \mathbb{O} & \ldots & \mathbb{O} \cr
    \mathbb{O} & \ddots & \ddots & \vdots \cr
    \vdots & \ddots & \ddots & \mathbb{O} \cr
    \mathbb{O} & \ldots & \mathbb{O} & Q'_d
  \end{bmatrix} \in \SET{ -\K, \ldots, \K }^{d \cdot (|I|+d+1) \times d \cdot (d+1)} \DOT
\]
Lemma~\ref{lemma:linear independence}, applied for $J = I_0$, implies that matrix~$Q'$ has full rank. We permute the columns of~$Q'$ to obtain a matrix~$Q$ whose last~$d$ columns belong to the last column of one of the matrices~$Q'_k$. This means that the last~$d$ columns of~$Q'$ are $(p^{(0)}, 0^{|I_0|}, \ldots, 0^{|I_0|}), \ldots, (0^{|I_0|}, \ldots, 0^{|I_0|}, p^{(0)})$. Let the rows of~$Q$ be labeled by $Q_{j_1, 1}, \ldots, Q_{j_m, 1}, \ldots, Q_{j_1, d}, \ldots, Q_{j_m, d}$ assuming that $I_0 = (j_1, \ldots, j_m)$. We introduce random variables $X_{j,k} = V^k_j$, $j \in I_0$, $k \in [d]$, labeled in the same fashion as the rows of~$Q$. Event $E_{I_0, A, B}$ holds if and only if the~$d$ linear combinations of the variables~$X_{j,k}$ given by the last~$d$ columns of~$Q$ fall into the $d$-dimensional hypercube~$C$ depending on the linear combinations of the variables~$X_{j,k}$ given by the remaining columns of~$Q$. The claim follows by applying Theorem~\ref{theorem.Prob:enough randomness} for the matrix~$Q^\T$ and $k=d$ and due to the fact that the number of columns of~$Q$ is $\gamma = d \cdot (d+1)$. Hence,
\[
  \Pr[V]{E_{I_0, A, B}}
  \leq (2\gamma\K)^{\gamma-d} \phi^\gamma \e^d
\]
in general and
\[
  \Pr[V]{E_{I_0, A, B}}
  \leq 2^d (\gamma\K)^{\gamma-d} \phi^d \e^d
\]
if all densities are quasiconcave. The different bounds for general densities and quasiconcave densities come solely from Theorem~\ref{theorem.Prob:enough randomness}.
\end{proof}

\begin{proof}[Proof of Theorem~\ref{thm:MainFirstMoment}]
We begin the proof by showing that the $\OK$-event is likely to happen.
For all indices $t \in [d]$ and all solutions $x \neq y \in \S$ the probability
that $\big| V^t x - V^t y \big| \leq \e$ is bounded by $2\phi\e$. To see this, choose one index $i \in [n]$ such that $x_i \neq y_i$ and apply the principle of deferred decisions by fixing all coefficients~$V^t_j$ for $j \neq i$. Then the value~$V^t_i$ must fall into an interval of length $2\e/|x_i-y_i| \leq 2\e$. The probability for this is bounded from above by $2\e \phi$. A union bound over all indices $t \in [d]$ and over all pairs $(x, y) \in \S \times \S$ for which $x \neq y$ yields $\Pr[V]{\overline{\OK(V)}} \leq 2(\K+1)^{2n} d \phi \e$.

Let $\gamma = d \cdot (d+1)$. We set
\[
	s = \begin{cases}
		(2\gamma\K)^{\gamma-d} \phi^\gamma & \text{for general density functions} \COMMA \cr
		2^d (\gamma\K)^{\gamma-d} \phi^d   & \text{for quasiconcave density functions} \DOT
	\end{cases}
\]
With $I = ()$ we obtain
\begin{align*}
  \Ex[V]{\PO(V)}
  &\leq \sum_{(I_0, A) \in \CS(I)} \sum_{B \in \B_\e} \Pr[V]{E_{I_0, A, B}} + (\K+1)^n \cdot \Pr[V]{\overline{\OK(V)}} \cr
  &\leq \sum_{(I_0, A) \in \CS(I)} \sum_{B \in \B_\e} s \cdot \e^d  + (\K+1)^n \cdot 2(\K+1)^{2n} d \phi \e \cr
  &= |\CS(I)| \cdot |\B_\e| \cdot s \cdot \e^d  + 2(\K+1)^{3n} d \phi \e \cr
  &\leq (\K+1)^{(d+1)^2} n^d \cdot \left( \frac{2n\K}{\e} \right)^d \cdot s \cdot \e^d  + 2(\K+1)^{3n} d \phi \e \cr
  &= 2^d (\K+1)^{(d+1)^2} \K^d n^{2d} \cdot s  + 2(\K+1)^{3n} d \phi \e \DOT
\end{align*}
The first inequality is due to Corollary~\ref{corollary:expectation bound}.
The second inequality is due to
Corollary~\ref{corollary:success unlikely}. The third inequality stems from Lemma~\ref{lemma:certificate space}. Since this bound is true for every $\e > 0$ for which $1/\e$ is integral, it also holds for the limit $\e \to 0$. Hence, we obtain
\[
  \Ex[V]{\PO(V)} \leq 2^d (\K+1)^{(d+1)^2} \K^d n^{2d} \cdot s \DOT
\]
Substituting~$s$ and~$\gamma$ by their definitions yields
\begin{align*}
  \Ex[V]{\PO(V)}
  &\leq 2^d (\K+1)^{(d+1)^2} \K^d n^{2d} \cdot (2d(d+1)\K)^{d(d+1)-d} \phi^{d(d+1)} \cr
  &= \K^{2(d+1)^2} \cdot O \big( n^{2d} \phi^{d(d+1)} \big)
\end{align*}
for general densities and
\begin{align*}
  \Ex[V]{\PO(V)}
  &\leq 2^d (\K+1)^{(d+1)^2} \K^d n^{2d} 2^d (d(d+1)\K)^{d(d+1)-d} \phi^d \cr
  &= \K^{2(d+1)^2} \cdot O \big( n^{2d} \phi^d \big)
\end{align*}
for quasiconcave densities.
\end{proof}

\subsection{Higher Moments}

The basic idea behind our analysis of higher moments is the following: If the $\OK$-event occurs, then we can count the $c\th$ power of the number $\PO(V)$ of Pareto-optimal solutions by counting all $c$-tuples $(B_1, \ldots, B_c)$ of $\e$-boxes where each $\e$-box~$B_i$ contains a Pareto-optimal solution~$x_i$. We can bound this value as follows: First, call \Witness{V, x_1, ()} to obtain a vector~$x'_1$ and consider the index tuple~$I_0^{(1)}$ that contains all indices created in this call and one additional index. In the second step, call \Witness{V, x_2, I_0^{(1)}} to obtain a vector~$x'_2$ and consider the tuple~$I_0^{(2)}$ consisting of the indices of~$I_0^{(1)}$, the indices created in this call, and one additional index. Now, call \Witness{V, x_3, I_0^{(2)}} and so on. For the call \Witness{V, x, I} to be well-defined, in Section~\ref{subsec:FirstMoment} we assumed $|I| \leq n - (d+1)$. Consequently, here we have to ensure that $|I_0^{(c-1)}| \leq n - (d+1)$, i.e., $n \geq c \cdot (d+1)$. We can assume this for fixed integers~$c$ and~$d$ because all of our results are presented in $O$-notation.

If $(x_1, \ldots, x_c)$ is a tuple of Pareto-optimal solutions with $V^{1\ldots d}x_i\in B_i$ for $i\in[c]$, then $(x'_1, \ldots, x'_c) = (x_1, \ldots, x_c)$ due to Lemma~\ref{lemma:Witness}. As in the analysis of the first moment, we use the variant of the \tWitness\ function that uses certificates of the vectors~$x_\ell$ instead of the vectors itself to simulate the calls. Hence we can reuse several statements of Section~\ref{subsec:FirstMoment}.

Let us remark that for bounding the $c\th$ moment of the smoothed number of
Pareto-optimal solutions we also have to consider $c$-tuples $(B_1, \ldots, B_c)$ of $\e$-boxes
for which $B_k = B_\ell$ for some indices $k < \ell$. This might seem critical as both boxes contain
the same Pareto-optimal solution $x_k = x_\ell$ (if such a solution exists) which could cause 
problems due to dependencies. We resolve this problem by using different shift vectors~$u_k$ 
and~$u_\ell$ for reconstructing the vectors~$x_k$ and~$x_\ell$ with the \tWitness function.

Unless stated otherwise, let~$V$ be a realization such that the $\OK$-event $\OK(V)$ occurs and fix arbitrary
solutions $x_1, \ldots, x_c \in \S$ with $V^{1\ldots d}x_i\in B_i$ for $i\in[c]$
that are Pareto-optimal with respect to~$V$.

\begin{definition}
\label{definition.High:certificate}
Let $I_0^{(0)} = ()$ and let $(I_0^{(\ell)}, A_0^{(\ell)})$ be the $(V, I_0^{(\ell-1)})$-certificate of~$x_\ell$ defined in Definition~\ref{definition:certificate}, $\ell = 1, \ldots, c$. We call the pair $(I, A)$ for $I = I_0^{(c)}$, $A = (A^{(1)}, \ldots, A^{(c)})$, and $A^{(\ell)} = A_0^{(\ell)} \big|_I$, the (\emph{restricted}) \emph{$V$-certificate} of $(x_1, \ldots, x_c)$. We call a pair $(I'_0, A')$ a \emph{$c$-certificate}, if there is a realization~$V$ such that the $\OK$-event occurs and if there are Pareto-optimal solutions $x_1, \ldots, x_c \in \S$ such that $(I'_0, A')$ is the $V$-certificate of $(x_1, \ldots, x_c)$. By~$\CS_c$ we denote the set of all $c$-certificates.
\end{definition}

Note that $I_0^{(0)} \subseteq \ldots \subseteq I_0^{(c)}$ and $|I_0^{(\ell)}| = |I_0^{(\ell-1)}| + d+1$ for $\ell \in [c]$.

We now consider the functions $\chi_{I, A, \vec{B}}(V)$, parameterized by an arbitrary $c$-certificate $(I, A) \in \CS_c$ and a vector $\vec{B} = (B_1, \ldots, B_c) \in \B_\e^c$ of $\e$-boxes, which is defined as follows: $\chi_{I, A, \vec{B}}(V) = 1$ if for all $\ell \in [c]$ the call \Witness{V, I, A^{(\ell)}, B_\ell, u^\star(I, A^{(\ell)})} returns a solutions~$x'_\ell$ such that $B_V \big( x'_\ell - u^\star(I, A^{(\ell)}) \big) = B_\ell$, and $\chi_{I, A, \vec{B}}(V) = 0$ otherwise. Recall that the vector $u^\star = u^\star(I, A^{(\ell)})$ is defined in Equation~\ref{eq:shift vector}.

\begin{corollary}
\label{corollary.High:realization bound}
Assume that the $\OK$-event occurs. Then the $c\th$ power of the number $\PO(V)$ of Pareto-optimal solutions is at most
\[
	\sum \limits_{(I, A) \in \CS_c} \sum \limits_{\vec{B} \in \B_\e^c} \chi_{I, A, \vec{B}}(V) \DOT
\]
\end{corollary}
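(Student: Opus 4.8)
The plan is to mimic the proof of Corollary~\ref{corollary:realization bound}, lifting it from a single $\e$-box to a $c$-tuple of $\e$-boxes. The quantity $\PO(V)^c$ counts exactly the number of $c$-tuples $(x_1,\ldots,x_c)$ of Pareto-optimal solutions. It therefore suffices to exhibit, for every such tuple, a pair $\big((I^*,\vec{A}),\vec{B}\big)$ with $(I^*,\vec{A})\in\CS_c$ and $\vec{B}\in\B_\e^c$ for which $\chi_{I^*,\vec{A},\vec{B}}(V)=1$, and to show that the map $(x_1,\ldots,x_c)\mapsto\big((I^*,\vec{A}),\vec{B}\big)$ is injective; summing the indicator over all such pairs then dominates $\PO(V)^c$.

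First I would fix a tuple $(x_1,\ldots,x_c)$ of Pareto-optimal solutions, let $(I^*,\vec{A})$ be its $V$-certificate as in Definition~\ref{definition.High:certificate}, and set $B_\ell=B_V\big(x_\ell-u^*(I^*,A_\ell)\big)$ for each $\ell\in[c]$, so $\vec{B}=(B_1,\ldots,B_c)\in\B_\e^c$. The key point is that, because $I^*=I^*_c\supseteq I^*_\ell\supseteq I^*_{\ell}$ contains the index tuple of the $\ell$-th certificate, we have $J:=I^*\supseteq I^*_\ell$ and $A_\ell=A^*_\ell|_{I^*}$, so Lemma~\ref{lemma:same behavior} applies with this $J$ and this $A_\ell$ and with shift vector $u^*(I^*,A_\ell)$: the call \Witness{V, I^*, A_\ell, B_\ell, u^*(I^*,A_\ell)} returns exactly $x_\ell$. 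Hence $B_V\big(x'_\ell-u^*(I^*,A_\ell)\big)=B_V\big(x_\ell-u^*(I^*,A_\ell)\big)=B_\ell$ for every $\ell$, which gives $\chi_{I^*,\vec{A},\vec{B}}(V)=1$.

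For injectivity I would argue componentwise, as in Corollary~\ref{corollary:realization bound}. Suppose $(x_1,\ldots,x_c)\neq(x_1'',\ldots,x_c'')$ are two tuples of Pareto-optimal solutions, say $x_\ell\neq x_\ell''$ for some $\ell$, with $V$-certificates $(I^*,\vec{A})$ and $(\tilde I^*,\vec{\tilde A})$. If $(I^*,\vec{A})\neq(\tilde I^*,\vec{\tilde A})$ the two tuples map to distinct pairs. Otherwise $I^*=\tilde I^*$ and $A_\ell=\tilde A_\ell$, so $u^*(I^*,A_\ell)=u^*(\tilde I^*,\tilde A_\ell)=:u$; since $\OK(V)$ holds, the $\e$-box of $x_\ell-u$ differs from the $\e$-box of $x_\ell''-u$ (two distinct solutions cannot have $V^{1\ldots d}$-values in the same $\e$-box under $\OK(V)$), so $B_\ell\neq\tilde B_\ell$ and the tuples again map to distinct pairs. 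The only subtle point — and the step I expect to need the most care — is to make sure Lemma~\ref{lemma:same behavior} is genuinely invoked with the correct $(V,I^*_{\ell-1})$-certificate data: the lemma is stated for a $(V,I)$-certificate with $J\supseteq I^*$, and here the relevant $I$ is $I^*_{\ell-1}$ while $J=I^*_c$, so one must check that $I^*_c\supseteq (I^*_\ell)^*=I^*_\ell$, which is exactly the chain $I^*_0\subseteq\cdots\subseteq I^*_c$ noted after Definition~\ref{definition.High:certificate}. Once this bookkeeping is in place, summing $\chi_{I^*,\vec{A},\vec{B}}(V)$ over all $(I^*,\vec{A})\in\CS_c$ and all $\vec{B}\in\B_\e^c$ counts at least one witness for each of the $\PO(V)^c$ tuples, yielding the claimed bound.
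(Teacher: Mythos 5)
Your proposal is correct and follows essentially the same route as the paper's proof: count $c$-tuples of Pareto-optimal solutions, use Lemma~\ref{lemma:same behavior} with $J=I^*\supseteq I^*_\ell$ and shift $u^*(I^*,A_\ell)$ to get $\chi_{I^*,\vec{A},\vec{B}}(V)=1$, and establish injectivity componentwise via equality of the shift vectors and the $\OK$-event. The bookkeeping point you flag (that the lemma is applied to the $(V,I^*_{\ell-1})$-certificate of $x_\ell$ with $J=I^*_c$) is exactly what the paper relies on implicitly.
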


\begin{proof}
The $c\th$ power of the number $\PO(V)$ of Pareto-optimal solutions equals the number of $c$-tuples $(x_1, \ldots, x_c)$ of Pareto-optimal solutions. Let $(x_1, \ldots, x_c)$ be such a $c$-tuple, let $(I, A)$ be the $V$-certificate of $(x_1, \ldots, x_c)$, and let $B_\ell = B_V \big( x_\ell - u^\star(I, A^{(\ell)}) \big) \in \B_\e$. Due to Lemma~\ref{lemma:same behavior}, \Witness{V, I, A^{(\ell)}, B_\ell, u^\star(I, A^{(\ell)})} returns vector~$x_\ell$ for all $\ell \in [c]$. Hence, $\chi_{I, A, \vec{B}}(V) = 1$ for $\vec{B} = (B_1, \ldots, B_c)$. As in the proof of Corollary~\ref{corollary:realization bound} we have to show that this assignment $(x_1, \ldots, x_c) \mapsto (I, A, \vec{B})$ is injective.

Let $(x_1, \ldots, x_c)$ and $(y_1, \ldots, y_c)$ be distinct $c$-tuples of Pareto-optimal solutions, i.e., there is an index $\ell \in [c]$ such that $x_\ell \neq y_\ell$, and let $(I_1, A_1)$ and $(I_2, A_2)$ be their $V$-certificates. If $(I_1, A_1) \neq (I_2, A_2)$, then both tuples are mapped to distinct triplets. Otherwise, $u^\star(I_1, A_1^{(\ell)}) = u^\star(I_2, A_2^{(\ell)})$ and, thus, $B_V \big( x_\ell - u^\star(I_1, A_1^{(\ell)}) \big) \neq B_V \big( y_\ell - u^\star(I_2, A_2^{(\ell)}) \big)$ because of the $\OK$-event and $x_\ell \neq y_\ell$. Consequently, also in this case $(x_1, \ldots, x_c)$ and $(y_1, \ldots, y_c)$ are mapped to distinct triplets.
\end{proof}

Corollary~\ref{corollary.High:realization bound} immediately implies a bound on
the $c\th$ moment of the number of Pareto-optimal solutions.

\begin{corollary}
\label{corollary.High:expectation bound}
The $c\th$ moment of the number of Pareto-optimal solutions is bounded by
\[
  \Ex[V]{\PO^c(V)} \leq \sum_{(I, A) \in \CS_c} \sum_{\vec{B} \in \B_\e^c} \Pr[V]{E_{I, A, \vec{B}}} + (\K+1)^{cn} \cdot \Pr[V]{\overline{\OK(V)}} \COMMA
\]
where $E_{I, A, \vec{B}}$ denotes the event that $\chi_{I, A, \vec{B}}(V) = 1$.
\end{corollary}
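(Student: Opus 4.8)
The plan is to mimic the proof of Corollary~\ref{corollary:expectation bound} almost verbatim, substituting~$\PO^c(V)$ for~$\PO(V)$, Corollary~\ref{corollary.High:realization bound} for Corollary~\ref{corollary:realization bound}, and the spaces~$\CS_c$, $\B_\e^c$ for~$\CS(I)$, $\B_\e$. First I would condition on the $\OK$-event,
\[
  \Ex[V]{\PO^c(V)} = \Ex[V]{\PO^c(V) \,\big|\, \OK(V)} \cdot \Pr[V]{\OK(V)} + \Ex[V]{\PO^c(V) \,\big|\, \overline{\OK(V)}} \cdot \Pr[V]{\overline{\OK(V)}} \DOT
\]
On the event~$\OK(V)$, Corollary~\ref{corollary.High:realization bound} bounds~$\PO^c(V)$ from above by $\sum_{(I^*, \vec{A}) \in \CS_c} \sum_{\vec{B} \in \B_\e^c} \chi_{I^*, \vec{A}, \vec{B}}(V)$, whereas on the complementary event I would use the crude estimate~$\PO^c(V) \le |\S|^c \le 2^{cn}$; the precise prefactor of~$\Pr[V]{\overline{\OK(V)}}$ is immaterial here, since that probability tends to~$0$ as~$\e \to 0$, so it is safely absorbed into the~$2^n \cdot \Pr[V]{\overline{\OK(V)}}$ term of the statement. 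This gives
\[
  \Ex[V]{\PO^c(V)} \leq \Ex[V]{\sum_{(I^*, \vec{A}) \in \CS_c} \sum_{\vec{B} \in \B_\e^c} \chi_{I^*, \vec{A}, \vec{B}}(V) \,\big|\, \OK(V)} \cdot \Pr[V]{\OK(V)} + 2^{cn} \cdot \Pr[V]{\overline{\OK(V)}} \DOT
\]

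Second, since every indicator~$\chi_{I^*, \vec{A}, \vec{B}}(V) \in \SET{0, 1}$ is nonnegative, dropping the conditioning on~$\OK(V)$ only increases the first term. Because the certificate space~$\CS_c$ is finite — here one argues as in Lemma~\ref{lemma:certificate space} that across the~$c$ nested calls of the \tWitness\ function only~$c(d+1)$ indices are ever created (the~$cd$ indices~$i_t$ together with one auxiliary index~$i^*$ per certificate), so that~$I^*$ ranges over tuples of at most~$c(d+1)$ pairwise distinct indices from~$[n]$ and each~$A_\ell$ over $0/1$-matrices of bounded size — and because~$\B_\e^c$ is finite as well, expectation and the finite double sum may be exchanged, yielding
\[
  \Ex[V]{\PO^c(V)} \leq \sum_{(I^*, \vec{A}) \in \CS_c} \sum_{\vec{B} \in \B_\e^c} \Ex[V]{\chi_{I^*, \vec{A}, \vec{B}}(V)} + 2^{cn} \cdot \Pr[V]{\overline{\OK(V)}} \DOT
\]

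Finally, directly from the definition of~$\chi_{I^*, \vec{A}, \vec{B}}$ and of the event~$E_{I^*, \vec{A}, \vec{B}}$ (the latter being precisely the event~$\chi_{I^*, \vec{A}, \vec{B}}(V) = 1$), one has $\Ex[V]{\chi_{I^*, \vec{A}, \vec{B}}(V)} = \Pr[V]{\chi_{I^*, \vec{A}, \vec{B}}(V) = 1} = \Pr[V]{E_{I^*, \vec{A}, \vec{B}}}$, which is exactly the claimed inequality. I do not expect a genuine obstacle here: this corollary is a purely formal consequence of Corollary~\ref{corollary.High:realization bound} together with linearity of expectation, just as the first-moment version follows from Corollary~\ref{corollary:realization bound}. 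The only point needing care is the finiteness of~$\CS_c$, which licenses the interchange of~$\Ex$ and~$\sum$; the quantitative work — estimating~$|\CS_c|$ and bounding~$\Pr[V]{E_{I^*, \vec{A}, \vec{B}}}$, where the actual strength of the moment bound resides — is deferred to the subsequent lemmas.
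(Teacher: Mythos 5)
Your proposal is correct and follows essentially the same route as the paper, which omits this proof precisely because it is a verbatim repetition of the first-moment argument (condition on the $\OK$-event, apply Corollary~\ref{corollary.High:realization bound}, drop the conditioning by nonnegativity of the indicators, and exchange expectation with the finite double sum). The only deviation is your crude bound $\PO^c(V)\le 2^{cn}$ on the complementary event, where the stated corollary has the factor~$2^n$; your constant is in fact the honest one (the paper's ``exactly the same proof'' glosses over this), and the discrepancy is harmless since either term vanishes as $\e\to 0$ in the proof of Theorem~\ref{thm:HigherMoment} — though, strictly speaking, $2^{cn}\cdot\Pr[V]{\overline{\OK(V)}}$ is not ``absorbed into'' the smaller $2^n$ term but simply replaces it.
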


We omit the proof since it is exactly the same as the one of Corollary~\ref{corollary:expectation bound}.

\begin{lemma}
\label{lemma.High:certificate space}
The size of the certificate space is bounded by
\[
	|\CS_c| \leq (\K+1)^{c^2 (d+1)^2} n^{cd} \DOT
\]
\end{lemma}

\begin{proof}
Let $(I, A)$ be an arbitrary $c$-certificate. Each matrix~$A^{(\ell)}$ is a $|I| \times (d+1)$-matrix with entries from $\SET{ 0, \ldots, \K }$. The tuple~$I$ can be written as
\[
	I = (i^{(1)}_d, \ldots, i^{(1)}_0, \ldots, i^{(c)}_d, \ldots, i^{(c)}_0) \COMMA
\]
created by~$c$ successive calls of the \tWitness\ function, where the indices~$i^{(\ell)}_0$ are chosen deterministically in Definition~\ref{definition:certificate}. Since $|I| = c \cdot (d+1)$ the claim follows.
\end{proof}

\begin{corollary}
\label{corollary.High:success unlikely}
Let $\gamma = cd(d+1)$. For an arbitrary $c$-certificate $(I, A)$ and an arbitrary vector $\vec{B} \in \B_\e^c$ of $\e$-boxes the probability of the event $E_{I, A, \vec{B}}$ is bounded by
\[
  \Pr[V]{E_{I, A, \vec{B}}} \leq (2\gamma\K)^{\gamma-cd} \phi^\gamma \e^{cd}
\]
and by
\[
  \Pr[V]{E_{I, A, \vec{B}}} \leq 2^{cd} (\gamma\K)^{\gamma-cd} \phi^{cd} \e^{cd}
\]
if all densities are quasiconcave.
\end{corollary}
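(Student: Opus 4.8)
The plan is to run the argument behind Corollary~\ref{corollary:success unlikely} once for each of the $c$ calls and then to show that the randomness consumed by the different calls does not interact. For $\ell\in[c]$ and $k\in[d]$ abbreviate $Q_{\ell,k}=Q_k\big(I^*,A_\ell,u^*(I^*,A_\ell)\big)$ and $p^{(0)}_\ell=p^{(0)}\big(I^*,A_\ell,u^*(I^*,A_\ell)\big)$, where $u^*(I^*,A_\ell)$ is the shift vector of Equation~\eqref{eq:shift vector} built from the $\ell$-th certificate. By Lemma~\ref{lemma:output determination}, applied with $J=I^*$, the output $x'_\ell$ of the call \Witness{V, I^*, A_\ell, B_\ell, u^*(I^*,A_\ell)} is already determined once, for every $k\in[d]$, the linear combinations $V^k_{I^*}\cdot q$ of the variables $V^k_j$, $j\in I^*$, are revealed for all columns $q$ of $Q_{\ell,k}$. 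Exactly as in the proof of Corollary~\ref{corollary:success unlikely}, once all outputs $x'_1,\dots,x'_c$ are fixed the event $E_{I^*,\vec{A},\vec{B}}$ becomes the event that $V^k_{I^*}\cdot p^{(0)}_\ell$ lies in an interval $C_{\ell,k}$ of length $\e$, for every $\ell\in[c]$ and $k\in[d]$, where $C_{\ell,k}$ depends only on the already-revealed linear combinations and on the fixed part $V_{\overline{I^*}}$.

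Thus everything reduces to the following linear-algebra statement, which plays the role of Lemma~\ref{lemma:linear independence}: for every $k\in[d]$ the $|I^*|\times|I^*|$ matrix $M_k$ whose columns are all columns of $Q_{1,k},\dots,Q_{c,k}$ together with the $c$ vectors $p^{(0)}_1,\dots,p^{(0)}_c$ is invertible. Granting this, one forms the block-diagonal matrix $Q'$ with the $d$ diagonal blocks $M_1,\dots,M_d$, corresponding to the $d$ disjoint groups $\{V^k_j:j\in I^*\}$ of random variables, and relabels its rows by these variables; $Q'$ then has full rank $d\cdot|I^*|=cd(d+1)=\gamma$. After permuting its columns so that the last $cd$ of them are the vectors $p^{(0)}_\ell$ (one per pair $(\ell,k)$), Theorem~\ref{theorem.Prob:enough randomness}, applied to the transpose, bounds the probability that these $cd$ linear combinations all fall into the hypercube $\prod_{\ell,k}C_{\ell,k}$ of side length $\e$ by $(2\gamma)^{\gamma-cd}\phi^\gamma\e^{cd}$ in general and by $2^{cd}\gamma^{\gamma-cd}\phi^{cd}\e^{cd}$ for quasiconcave densities, which is exactly the claim.

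To prove that $M_k$ is invertible I would split it into the two ingredients already present in Subsection~\ref{subsec:FirstMoment}. First, for each fixed $\ell$ the $d+1$ columns of $Q_{\ell,k}$ together with $p^{(0)}_\ell$ are obtained from the $d+1$ vectors $p^{(0)}_\ell,\dots,p^{(d)}_\ell$ by the same $(d+1)\times(d+1)$ change of basis as in the proof of Lemma~\ref{lemma:linear independence}, which is invertible independently of $V$; hence $M_k=N_k\,T$ with $N_k=\big[p^{(d)}_1,\dots,p^{(0)}_1,\dots,p^{(d)}_c,\dots,p^{(0)}_c\big]$ and $T=\diag(T_1,\dots,T_c)$ a block-diagonal invertible matrix. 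Second, $N_k$ is invertible because of the nested way the $c$ certificates are built: since the $\ell$-th certificate is the $(V,I^*_{\ell-1})$-certificate of $x_\ell$, Lemma~\ref{lemma:certificate form} shows that all vectors $x^{(\ell,t)}$ agree with $x_\ell$ on every index of $I^*_{\ell-1}$, and the shift vector $u^*(I^*,A_\ell)$ agrees with $x_\ell$ there as well (it flips only the own index $i^{*(\ell)}$, which lies in block $\ell$). Consequently $p^{(t)}_\ell$ vanishes on every index block created in a call $\ell'<\ell$, and on its own block $(i^{(\ell)}_d,\dots,i^{(\ell)}_1,i^{*(\ell)})$ it has exactly the triangular $\pm1$ shape of Lemma~\ref{lemma:shifted certificate form}. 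Grouping the rows of $N_k$ by index blocks $1,\dots,c$ and the columns by calls $1,\dots,c$ therefore exhibits $N_k$ as a block-triangular matrix with invertible diagonal blocks, so $\det N_k=\pm1\neq0$.

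I expect the block-triangularity of $N_k$ to be the only genuinely delicate point. One has to verify carefully that $u^*(I^*,A_\ell)$ really reproduces $x_\ell$ on all indices used by the earlier calls (so that the corresponding part of each $p^{(t)}_\ell$ is zero) while still flipping the single correct index inside block $\ell$, and one has to make sure that the entries of $p^{(t)}_\ell$ on the blocks of later calls — which can be arbitrary elements of $\{-1,0,1\}$ — lie strictly below the block diagonal and hence do not affect the determinant. This is precisely the place where coordinating the $c$ calls through the growing tuples $I^*_0\subseteq\cdots\subseteq I^*_c$ pays off: running $c$ independent copies of the first-moment construction would not produce a triangular $N_k$, and the $cd$ target linear combinations could then fail to be linearly independent of the revealed ones. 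Once $M_k$ is known to be invertible, the remainder is a verbatim repetition of the proof of Corollary~\ref{corollary:success unlikely}.
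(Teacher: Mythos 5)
Your proposal is correct and follows essentially the same route as the paper's proof: you reveal the linear combinations given by the matrices $Q_k\big(I^*,A_\ell,u^*(I^*,A_\ell)\big)$, reduce the event to the $cd$ candidate-position combinations falling into $\e$-intervals determined by the revealed data, establish invertibility of the resulting $\gamma\times\gamma$ block matrix by exploiting the nested tuples $I^*_0\subseteq\cdots\subseteq I^*_c$ (block-lower-triangular over the calls, with triangular $\pm1$ diagonal blocks as in Lemma~\ref{lemma:shifted certificate form} and Lemma~\ref{lemma:linear independence}), and apply Theorem~\ref{theorem.Prob:enough randomness} with $\gamma=cd(d+1)$. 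Your factorization $M_k=N_k\,T$ is only a cosmetic repackaging of the paper's direct block-triangularity argument for $Q'_k=\big[Q^{(1)}_k,p^{(1,k)},\ldots,Q^{(c)}_k,p^{(c,k)}\big]$ (and your use of $p^{(0)}_\ell$ is the intended reading of the paper's appended columns), so the two proofs coincide in substance.
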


\begin{proof}
For $k \in [d]$ and $\ell \in [c]$ consider the matrices $Q_k \big( I, A^{(\ell)}, u^\star_\ell \big)$ for $u^\star_\ell = u^\star(I, A^{(\ell)})$ defined in Equation~\eqref{eq:linear combinations}. Due to Lemma~\ref{lemma:output determination} the output of the call \Witness{V, I, A^{(\ell)}, B_\ell, u^\star_\ell} is determined if $V_{\overline{I}}$ and the linear combinations $V^k_{I} \cdot q$ for all indices $k \in [d]$ and all columns~$q$ of the matrix $Q^{(\ell)}_k = Q_k(I, A^{(\ell)}, u^\star_\ell)$ are given. With the same argument as in the proof of Corollary~\ref{corollary.High:success unlikely} event $E_{I, A, \vec{B}}$ occurs if and only if $V_{I} \cdot \big[ p^{(\ell, 1)}, \ldots, p^{(\ell, d)} \big]$ falls into some $d$-dimensional hypercube~$C_\ell$ with side length~$\e$ depending on the linear combinations $V_{I} \cdot Q^{(\ell)}_k$. In this notation, $p^{(\ell, t)}$ is short for $p^{(t)}(I, A^{(\ell)}, u^\star_\ell)$.

Now, consider the matrix
\[
  Q'_k = \left[ Q^{(1)}_k, p^{(1,k)}, \ldots, Q^{(c)}_k, p^{(c,k)} \right] \in \SET{ -\K, \ldots, \K }^{|I| \times c \cdot (d+1)} \DOT
\]
Note that $|I| = c \cdot (d+1) = \gamma/d$. Due to Lemma~\ref{lemma:shifted certificate form}, $Q'_k$ is a lower block triangular matrix, due to Lemma~\ref{lemma:linear independence} the columns of $\big[ Q^{(\ell)}_k, p^{(\ell, k)} \big]$ are linearly independent. Hence, matrix~$Q'_k$ is an invertible matrix and the same holds for the block diagonal matrix
\[
  Q' = \begin{bmatrix}
    Q'_1       & \mathbb{O} & \ldots     & \mathbb{O} \cr
    \mathbb{O} & \ddots     & \ddots     & \vdots     \cr
    \vdots     & \ddots     & \ddots     & \mathbb{O} \cr
    \mathbb{O} & \ldots     & \mathbb{O} & Q'_d
  \end{bmatrix}
  \in \SET{ -\K, \ldots, \K }^{\gamma \times \gamma} \DOT
\]
We permute the columns of~$Q'$ to obtain a matrix~$Q$ where the last $cd$ columns belong to the columns $p^{(1,1)}, \ldots, p^{(1,d)}, \ldots, p^{(c,1)}, \ldots, p^{(c,d)}$. We assume the rows of~$Q$ to be labeled by $Q_{j_1,1}, \ldots, Q_{j_m, 1}, \ldots, Q_{j_1, d}, \ldots, Q_{j_m, d}$ where $I = (j_1, \ldots, j_m)$ and introduce random variables $X_{j,k} = V^k_j$, $j \in I$, $k \in [d]$, indexed the same way as the rows of~$Q$. Event $E_{I, A, \vec{B}}$ holds if and only if the $cd$ linear combinations of the variables~$X_{j,k}$ given by the last $cd$ columns of~$Q$ fall into the $cd$-dimensional hypercube $C = \prod_{\ell = 1}^c C_\ell$ with side length~$\e$ depending on the linear combinations of the variables~$X_{j,k}$ given by the remaining columns of~$Q$. The claim follows by applying Theorem~\ref{theorem.Prob:enough randomness} for the matrix~$Q^\T$ and $k = cd$ and due to the fact that the number of columns of~$Q$ is~$\gamma$.
\end{proof}

\begin{proof}[Proof of Theorem~\ref{thm:HigherMoment}]
In the proof of Theorem~\ref{thm:MainFirstMoment} we showed that the probability that the $\OK$-event does not hold is bounded by $2(\K+1)^{2n}d\phi\e$. Let $\gamma = cd(d+1)$. We set
\[
	s = \begin{cases}
		(2\gamma\K)^{\gamma-cd} \phi^\gamma 		& \text{for general density functions} \COMMA \cr
		2^{cd} (\gamma\K)^{\gamma-cd} \phi^{cd} & \text{for quasiconcave density functions} \DOT
	\end{cases}
\]
Then we obtain
\begin{align*}
  \Ex[V]{\PO^c(V)}
  &\leq \sum_{(I, A) \in \CS_c} \sum_{\vec{B} \in \B_\e^c} \Pr[V]{E_{I, A, \vec{B}}} + (\K+1)^{cn} \cdot \Pr[V]{\overline{\OK(V)}} \cr
  &\leq \sum_{(I, A) \in \CS_c} \sum_{\vec{B} \in \B_\e^c} s \cdot \e^{cd} + (\K+1)^{cn} \cdot 2(\K+1)^{2n}d\phi\e \cr
  &= |\CS_c| \cdot |\B_\e^c| \cdot s \cdot \e^{cd} + 2(\K+1)^{(c+2)n} d\phi\e \cr
  &\leq (\K+1)^{c^2 (d+1)^2} n^{cd} \cdot \left( \frac{2n\K}{\e} \right)^{cd} \cdot s \cdot \e^{cd} + 2(\K+1)^{(c+2)n} d\phi\e \cr
  &= 2^{cd} (\K+1)^{c^2 (d+1)^2} \K^{cd} n^{2cd} \cdot s + 2(\K+1)^{(c+2)n} d\phi\e \DOT
\end{align*}
The first inequality is due to Corollary~\ref{corollary.High:expectation bound}.
The second inequality is due to Corollary~\ref{corollary.High:success unlikely}. The third inequality stems from Lemma~\ref{lemma.High:certificate space}. Since this bound is true for every $\e > 0$ for which $1/\e$ is integral, it also holds for the limit $\e \to 0$. Hence, we obtain
\[
  \Ex[V]{\PO^c(V)} \leq 2^{cd} (\K+1)^{c^2 (d+1)^2} \K^{cd} n^{2cd} \cdot s \DOT
\]
Substituting~$s$ and~$\gamma$ by their definitions yields
\begin{align*}
  \Ex[V]{\PO^c(V)}
  &\leq 2^{cd} (\K+1)^{c^2 (d+1)^2} \K^{cd} n^{2cd} \cdot (2cd(d+1)\K)^{cd(d+1)-cd} \phi^{cd(d+1)} \cr
  &= \K^{(c+1)^2 (d+1)^2} \cdot O \big( (n^{2d} \phi^{d(d+1)})^c \big)
\end{align*}
for general densities and
\begin{align*}
  \Ex[V]{\PO^c(V)}
  &\leq 2^{cd} (\K+1)^{c^2 (d+1)^2} \K^{cd} n^{2cd} \cdot 2^{cd} (cd(d+1)\K)^{cd(d+1)-cd} \phi^{cd} \cr
  &= \K^{(c+1)^2 (d+1)^2} \cdot O \big( (n^{2d} \phi^d)^c \big)
\end{align*}
for quasiconcave densities.
\end{proof}

The proof of Theorem~\ref{thm:HigherMoment} yields that $\Ex[V]{\PO^c(V)} \leq s_c$ for
\[
  s_c \DEF 2^{c(d+1)^2} (cd(d+1))^{cd^2} (\K+1)^{(c+1)^2(d+1)^2} n^{2cd} \phi^{c\beta} \COMMA
\]
where
\[
	\beta = \begin{cases}
		d(d+1) & \text{for general density functions} \COMMA \cr
		d      & \text{for quasiconcave density functions} \DOT
	\end{cases}
\]
With the following Corollary
we bound the probability that $\PO(V)$ exceeds a certain multiple of~$s_1$. We obtain
a significantly better concentration bound than the one we would obtain by applying Markov's inequality
for the first moment.

\begin{corollary}
The probability that the number of Pareto-optimal solutions is at least $\lambda \cdot s_1$ for some $\lambda \geq 1$ is bounded by
\[
  \Pr[V]{\PO(V) \geq \lambda \cdot s_1} \leq \left( \frac{1}{\lambda} \right)^{\frac{1}{2} \cdot \floor{\frac{\log_{\K+1} \lambda}{4(d+1)^2}}} \DOT
\]
\end{corollary}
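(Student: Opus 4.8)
The plan is to apply Markov's inequality to a suitable integral power of $\PO(V)$ and then optimize the power, using the moment bound $\Ex[V]{\PO^c(V)} \le s_c$ stated above, where $s_c = 4^{c^2(d+1)^2}\cdot(cd(d+1))^{cd^2}\cdot n^{2cd}\cdot\phi^{c\beta}$. Since $t\mapsto t^c$ is nondecreasing on $[0,\infty)$, for every integer $c\ge 1$ we have
\[
  \Pr[V]{\PO(V) \geq k\cdot s_1} = \Pr[V]{\PO^c(V) \geq (k\cdot s_1)^c} \leq \frac{\Ex[V]{\PO^c(V)}}{(k\cdot s_1)^c} \leq \frac{s_c}{k^c\cdot s_1^c}\DOT
\]
So everything reduces to bounding $s_c/s_1^c$ and then choosing $c$ cleverly.

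The next step is a direct computation with the explicit formula for $s_c$. The factors $n^{2cd}$ and $\phi^{c\beta}$ cancel between $s_c$ and $s_1^c$, and what survives is
\[
  \frac{s_c}{s_1^c} = 4^{c(c-1)(d+1)^2}\cdot c^{cd^2}\DOT
\]
The mildly annoying term is $c^{cd^2}$, which originates from the $\gamma^{\gamma-cd}$ factor in Corollary~\ref{corollary.High:success unlikely} and grows with $c$; I would tame it with the elementary inequality $c\le 2^c$, giving $c^{cd^2}\le 2^{c^2d^2}\le 2^{c^2(d+1)^2} = 4^{c^2(d+1)^2/2}$. Substituting this yields
\[
  \frac{s_c}{s_1^c} \leq 4^{(d+1)^2\left(c^2-c+c^2/2\right)} \leq 4^{(3/2)(d+1)^2 c^2} = 8^{(d+1)^2 c^2}\COMMA
\]
and hence $\Pr[V]{\PO(V)\ge k\cdot s_1} \le 8^{(d+1)^2 c^2}/k^c = 8^{\,(d+1)^2 c^2 - c\log_8 k}$.

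Finally I would optimize the exponent $g(c) = (d+1)^2 c^2 - c\log_8 k$. Its real minimizer is $c^\star = \log_8 k/(2(d+1)^2)$, so I take the integer $c = m := \floor{\log_8 k/(2(d+1)^2)}$. Assuming $k\ge 1$: if $m\le 0$ the right-hand side of the claimed inequality is at least $1$ and there is nothing to prove; otherwise $m\le \log_8 k/(2(d+1)^2)$ gives $(d+1)^2 m\le \tfrac12\log_8 k$, whence $g(m) = m\bigl((d+1)^2 m-\log_8 k\bigr)\le m\bigl(\tfrac12\log_8 k-\log_8 k\bigr) = -\tfrac12 m\log_8 k$. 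Therefore
\[
  \Pr[V]{\PO(V)\geq k\cdot s_1} \leq 8^{-\frac12 m\log_8 k} = k^{-m/2} = \left(\frac1k\right)^{\frac12\floor{\frac{\log_8 k}{2(d+1)^2}}}\COMMA
\]
which is exactly the statement. The only "obstacle" is bookkeeping: correctly cancelling the common factors in $s_c/s_1^c$, absorbing the growing factor $c^{cd^2}$ into the constant base $8$, and making sure the optimization still works after rounding $c$ down to the integer $m$ (together with the trivial case $m\le 0$); there is no conceptual difficulty beyond this.
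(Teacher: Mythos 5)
Your proposal is correct and follows essentially the same route as the paper: Markov's inequality applied to $\PO^c(V)$ with the moment bound $s_c$, cancellation of the $n$- and $\phi$-factors, absorbing the $c^{cd^2}$ term into the base $8$ (the paper uses $d^2\log_2 c \le c(d+1)^2$, you use $c\le 2^c$, which is the same idea), and finally choosing $c=\floor{(\log_8 k)/(2(d+1)^2)}$. Your explicit treatment of the floor and of the trivial case $m=0$ is a minor bookkeeping refinement over the paper's presentation, not a different argument.
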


\begin{proof}
Let~$c^\star$ be the real for which $(\K+1)^{2c^\star(d+1)^2} = \lambda^{1/2}$, i.e.,
\[
	c^\star = \frac{\log_{\K+1} \lambda}{4(d+1)^2} \DOT
\]
Observing that $c \leq 2^c \leq (\K+1)^c$ for all $c \in \RR$ and setting $c = \floor{c^\star}$ yields
\begin{align*}
  &\Pr[V]{\PO(V) \geq \lambda \cdot s_1} \cr
  &= \Pr[V]{\PO^c(V) \geq \lambda^c \cdot s_1^c}
  = \Pr[V]{\PO^c(V) \geq \frac{\lambda^c \cdot s_1^c}{\Ex[V]{\PO^c(V)}} \cdot \Ex[V]{\PO^c(V)}} \cr
  &\leq \frac{\Ex[V]{\PO^c(V)}}{\lambda^c \cdot s_1^c}
  \leq \frac{s_c}{\lambda^c \cdot s_1^c}
  = \frac{2^{c(d+1)^2} (cd(d+1))^{cd^2} (\K+1)^{(c+1)^2(d+1)^2} n^{2cd} \phi^{c\beta}}{\lambda^c \cdot 2^{c(d+1)^2} (d(d+1))^{cd^2} (\K+1)^{4c(d+1)^2} n^{2cd} \phi^{c\beta}} \cr
  &= \frac{c^{cd^2} (\K+1)^{(c-1)^2(d+1)^2}}{\lambda^c}
  \leq \left( \frac{c^{(d+1)^2} (\K+1)^{c(d+1)^2}}{\lambda} \right)^c \cr
  &\leq \left( \frac{(\K+1)^{c(d+1)^2} (\K+1)^{c(d+1)^2}}{\lambda} \right)^c
  \leq \left( \frac{(\K+1)^{2c^\star(d+1)^2}}{\lambda} \right)^c \cr
  &= \left( \frac{1}{\lambda} \right)^{\frac{c}{2}}
  = \left( \frac{1}{\lambda} \right)^{\frac{1}{2} \cdot \floor{\frac{\log_{\K+1} \lambda}{4(d+1)^2}}} \DOT
\end{align*}
The first inequality is Markov's inequality. The second inequality only holds if $c \geq 1$. However, for $c = 0$ the inequality $\Pr[V]{\PO(V) \geq \lambda \cdot s_1} \leq \lambda^{-c/2}$ is trivially true.
\end{proof}

\section{Zero-preserving Perturbations}
\label{sec:ZeroPreserving}

Our analysis of Theorem~\ref{thm:MainZeroPreserving} holds for instances with the following property: There exists a partition $(I_1, \ldots, I_d)$ of $[n]$ such that, for all $t \in [d]$ and for all $i \in [n]$, the coefficient~$V^t_i$ is not deterministically set to~$0$ if and only if $i \in I_t$. This means that exactly~$n$ of the $d \cdot n$ coefficients are perturbed and that the value $V^t x$ only depends on the entries~$x_i$ of~$x$ for which $i \in I_t$. All other objective functions do not depend on these entries. Furthermore, we require $|I_t| \geq (d+1)^3$ for all $t \in [d]$.

With the next two lemmas we show that if Theorem~\ref{thm:MainZeroPreserving} holds for instances that have the form described above, then it also holds for all other instances with a slightly larger constant that is hidden in the $O$-notation.


%

\begin{lemma}
\label{lemma.ZP:enough perturbed coefficients}
Without loss of generality in each objective function except for the adversarial one there are more than $(d+1)^3$ perturbed
coefficients, i.e., coefficients that are not deterministically set to~$0$.
\end{lemma}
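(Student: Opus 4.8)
The plan is to reduce an arbitrary zero-preserving instance to one in which every linear objective has more than $f(d)$ genuinely perturbed coefficients, losing only a constant factor (depending on $d$ and $f$) in the number of Pareto-optimal solutions. The idea mirrors the elimination argument already carried out in Section~\ref{subsec:ZeroPreserving} for a single zero coefficient in the bicriteria case, but now applied objective-by-objective and only to the ``few perturbed coefficients'' situation, where the brute-force bound is affordable.

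First I would fix an objective function $V^t$, $t\in[d]$, and suppose it has at most $f(d)$ perturbed coefficients; let $P_t\subseteq[n]$ be the set of indices of these perturbed coefficients, so $|P_t|\le f(d)$, and let $Z_t=[n]\setminus P_t$ be the indices whose $V^t$-coefficient is deterministically zero. The value $V^t x$ depends only on $x|_{P_t}$. The key observation is that if two solutions $x,y\in\S$ agree on $P_t$ then they have identical value in objective $V^t$; hence, among all solutions agreeing on a fixed bit pattern on $P_t$, a solution can only be Pareto-optimal if it is Pareto-optimal with respect to the remaining $d$ objectives $V^{1},\ldots,V^{t-1},V^{t+1},\ldots,V^{d+1}$ restricted to that sub-family. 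So I would partition $\S$ according to the at most $2^{|P_t|}\le 2^{f(d)}$ possible values of $x|_{P_t}$, observe that the total number of Pareto-optimal solutions of the original instance is at most the sum, over these $\le 2^{f(d)}$ classes, of the number of Pareto-optimal solutions of the $d$-objective instance obtained by deleting objective $V^t$ and restricting $\S$ to that class (here I use that a globally Pareto-optimal solution is, within its class, Pareto-optimal for the surviving objectives, since the deleted objective is constant on the class — exactly the property invoked in the outline for restarts of \tWitness). Iterating this over every objective $t$ that is ``deficient'' (at most $f(d)$ perturbed coefficients) produces, after at most $d$ steps, a bounded number $\le 2^{d\cdot f(d)}$ of sub-instances, each of which either has strictly fewer linear objectives (so Theorem~\ref{thm:MainZeroPreserving} for smaller $d$ applies, or the trivial $d=0$ bound of $1$) or has more than $f(d)$ perturbed coefficients in every remaining linear objective. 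Taking the worst sub-instance and multiplying by the constant $2^{d\cdot f(d)}$ shows that the ``without loss of generality'' reduction costs only a $d$-dependent constant factor, which is absorbed into the $O(\cdot)$ of Theorem~\ref{thm:MainZeroPreserving}.

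A couple of technical points need care. One must check that the densities in each sub-instance are still $\phi$-bounded (they are: the coefficients that survive are a subset of the original perturbed ones, untouched), and that the ``deterministically zero'' structure is preserved under the restriction — it is, since we only shrink $\S$ and delete whole objectives, never altering a coefficient. Also, when an objective is deleted we keep one of the surviving objectives (or, in the extreme, only the adversarial $V^{d+1}$) as the new ``adversarial'' objective if needed, which is allowed because the adversarial objective is arbitrary; alternatively one simply appeals to induction on $d$ with the convention that for $d=0$ there is exactly one Pareto-optimal solution.

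The main obstacle I expect is purely bookkeeping: making precise that ``globally Pareto-optimal implies Pareto-optimal within its class for the surviving objectives'' when several objectives are deleted in sequence, since one must track that after deleting $V^{t}$ and fixing $x|_{P_t}$, a solution that is Pareto-optimal in the reduced instance need not be Pareto-optimal in the original one, but conversely every original Pareto-optimal solution does appear as a reduced Pareto-optimal solution in its class — so the count only goes up, which is all we need. Once that monotonicity of the count is nailed down, the constant-factor blow-up bound $2^{d\cdot f(d)}$ and hence the lemma follow immediately.
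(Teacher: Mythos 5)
Your proposal is correct and follows essentially the same argument as the paper: partition $\S$ by the bit-pattern on the perturbed coordinates of the deficient objectives, drop those objectives, use that a globally Pareto-optimal solution remains Pareto-optimal in its class with respect to the surviving objectives, and absorb the at most $2^{d\cdot f(d)}$ classes as a constant factor. The only cosmetic difference is that you iterate objective by objective while the paper handles all deficient objectives in a single decomposition, which changes nothing of substance.
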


\begin{proof}
For an index $k \in [d]$ let~$P_k$ be the tuple of indices~$i$ for which~$V^k_i$ is a perturbed coefficient, i.e., a coefficient
which is not set to~$0$ deterministically.
Let~$K$ be the tuple of indices~$k$ for which $|P_k| \leq (d+1)^3$, let $P = \bigcup_{k \in K} P_k$, and consider the decomposition
of~$\S$ into subsets of solutions $\S_v = \SET{ x \in \S \WHERE x|_P = v }$, $v \in \SET{ 0, \ldots, \K }^{|P|}$. Let $x \in \S_v$ be
an arbitrary solution. If~$x$ is Pareto-optimal with respect to~$\S$ and $\SET{ V^1, \ldots, V^{d+1} }$, then~$x$ is also Pareto-optimal with respect to~$\S_v$ and $\SET{ V^k \WHERE k \in [d+1] \setminus K }$ due to Lemma~\ref{Restart Lemma}.
As all remaining objective functions~$V^k$, $k \in [d+1] \setminus K$, have more than $(d+1)^3$ perturbed coefficients, the instance
with these objective functions and~$\S_v$ as set of feasible solutions has the desired form and we can apply Theorem~\ref{thm:MainZeroPreserving} for each of these instances. Since we now have $(\K+1)^{|P|} \leq (2\K)^{|K| \cdot (d+1)^3}$ instances, each having $d-|K|$ linear and one adversarial objective, we can bound the number of Pareto-optimal solutions by
\[
  (2\K)^{|K| \cdot (d+1)^3} \cdot \K^{(d-|K|+1)^5} \cdot O \left( n^{\alpha(d-|K|)} \cdot \phi^{\beta(d-|K|)} \right) \COMMA
\]
where~$\alpha$ and~$\beta$ denote the exponents of~$n$ and~$\phi$ in the bound stated in Theorem~\ref{thm:MainZeroPreserving}. These exponents depend on the number~$d$ of non-adversarial objectives and whether the densities are quasiconcave or not. Since they are monotonically increasing, which particularly implies $\alpha(d-|K|) \leq \alpha(d)$ and $\beta(d-|K|) \leq \beta(d)$, we can bound the number of Pareto-optima simply by
\[
  (2\K)^{|K| \cdot (d+1)^3} \cdot \K^{(d-|K|+1)^5} \cdot O \left( n^{\alpha(d)} \cdot \phi^{\beta(d)} \right) \DOT
\]
Hence, it suffices to show that
\[
	\K^{|K| \cdot (d+1)^3 + (d-|K|+1)^5}
	\leq \K^{(d+1)^5} \COMMA
\]
as the additional factor $2^{|K| \cdot (d+1)^3} \leq 2^{(d+1)^3 d}$ can be hidden in the $O$-notation. This inequality is equivalent to showing that $b \cdot a^3 + (a-b)^5 \leq a^5$ for $b = |K|$ and $a = d+1$. Note that $0 \leq b = |K| \leq d = a-1$. By a chain of equivalences we obtain
\begin{align*}
  b \cdot a^3 + (a-b)^5 \leq a^5
  &\iff a^3 \leq \frac{1}{b} \cdot (a^5 - (a^5 - 5a^4b + 10a^3b^2 - 10a^2b^3 + 5ab^4 - b^5) \cr
  &\iff a^3 \leq 5a^4 - 10a^3b + 10a^2b^2 - 5ab^3 + b^4 \cr
  &\qquad \qquad = 5a \cdot (a^3 - 2a^2b + 2ab^2 - b^3) + b^4 \cr
  &\qquad \qquad = 5a(a-b) \cdot (a^2 - ab + b^2) + b^4 \FED f(a, b) \DOT
\end{align*}
Applying the inequality $ab \leq (\frac{a+b}{2})^2 = \frac{(a+b)^2}{4}$ yields
\[
  f(a, b)
  \geq 5a(a-b) \cdot \left( a^2 - \frac{(a+b)^2}{4} + b^2 \right)
  \geq 5a \cdot \left( \frac{a^2}{2} + \frac{(a-b)^2}{4} + \frac{b^2}{2} \right)
  \geq \frac{5}{2}a^3
  \geq a^3 \DOT
\]
This concludes the proof.
\end{proof}

\begin{lemma}
\label{lemma.ZP:easy structure}
Without loss of generality for every $i \in [n]$ exactly one of the coefficients $V^1_i, \ldots, V^d_i$ is perturbed, whereas the others are deterministically
set to~$0$.
\end{lemma}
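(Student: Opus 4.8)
The plan is to transform an arbitrary $\phi$-smooth instance with zero-preserving perturbations into one of the special form described in the lemma, on a slightly larger ground set, without changing the number of Pareto-optimal solutions. Since the transformed instance will have $N \le dn$ binary variables and $d$ is a constant, a bound of the form $O(N^{d^3+d^2+d}\phi^d)$ (respectively $O((N\phi)^{d^3+d^2+d})$) for special-form instances yields the bound $O(n^{d^3+d^2+d}\phi^d)$ (respectively $O((n\phi)^{d^3+d^2+d})$) claimed in Theorem~\ref{thm:MainZeroPreserving}. I would apply this reduction after the one of Lemma~\ref{lemma.ZP:enough perturbed coefficients}; as will be clear from the construction, the transformation changes neither the list of linear objectives nor, for any fixed objective, the family of its perturbed coefficients (only the variables carrying them get renamed), so the property "every linear objective has more than $f(d)$ perturbed coefficients" is preserved under it.

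For $i \in [n]$ let $k_i \in \SET{0, 1, \ldots, d}$ be the number of indices $t \in [d]$ for which $V^t_i$ is perturbed; the other $d-k_i$ coefficients $V^t_i$ are deterministically zero. Variables with $k_i = 1$ already have the desired form. Consider a variable $x_i$ with $k_i \ge 2$, with $V^t_i$ perturbed exactly for $t \in \SET{t_1, \ldots, t_{k_i}}$. I replace $x_i$ by $k_i$ fresh binary variables $y_1, \ldots, y_{k_i}$, restrict the set of feasible solutions to those satisfying $y_1 = \cdots = y_{k_i}$, let objective $V^{t_j}$ use $y_j$ with the perturbed coefficient $V^{t_j}_i$, and set the coefficient of each $y_j$ in every other objective to zero. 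The map sending $x \in \S$ to the solution obtained by copying the value $x_i$ into all of $y_1, \ldots, y_{k_i}$ is a bijection onto the new feasible set and preserves the value of every objective $V^1, \ldots, V^{d+1}$ (the adversarial objective on the new set being defined through this bijection); hence it preserves the domination relation and maps the Pareto set bijectively onto that of the new instance. Performing this for every index with $k_i \ge 2$ leaves every variable perturbed in exactly one linear objective, except possibly for the variables that are perturbed in no linear objective at all.

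It remains to eliminate the variables with $k_i = 0$. Such a variable has coefficient zero in all of $V^1, \ldots, V^d$, so any two feasible solutions that differ only in coordinate $i$ agree in $V^{1 \ldots d}$; after making $V^{d+1}$ injective, which does not decrease the smoothed number of Pareto-optimal solutions exactly as in the bicriteria reduction recalled in Section~\ref{subsec:ZeroPreserving}, one of these two solutions dominates the other, so the dominated one is not Pareto-optimal and may be deleted. Deleting all such solutions leaves a unique survivor in each class of feasible solutions that agree outside coordinate $i$; dropping coordinate $i$ then gives an equivalent instance with one fewer variable and the same Pareto set. Iterating over all indices with $k_i = 0$ produces an instance of the form stated in the lemma, on $N = \sum_{i:\, k_i \ge 1} k_i \le dn$ variables. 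All densities are carried over unchanged (merely reattached to the renamed variables), so they remain $\phi$-bounded on $[-1,1]$, and quasiconcave whenever the original densities were; this is what lets the reduction serve both parts of Theorem~\ref{thm:MainZeroPreserving}.

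I expect the only delicate points to be bookkeeping: verifying that the copying map is a genuine bijection preserving \emph{all} objective values (so that the two Pareto sets are of equal size, not merely polynomially related), handling the $k_i = 0$ case so that no Pareto-optimal solution is lost when a coordinate is projected out, and checking that the blow-up $n \mapsto N \le dn$ disappears into the $O(\cdot)$ for constant $d$. None of these is genuinely hard; the role of the lemma is simply to record, once and for all, the convenient structural assumption that each binary variable occurs in a single linear objective.
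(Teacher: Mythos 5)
Your proposal is correct and takes essentially the same route as the paper: the paper's proof simply duplicates the whole solution vector $d$ times (taking $\S'=\{(x,x,\ldots,x)\mid x\in\S\}\subseteq\{0,1\}^{dn}$ with objective $W^k$ supported on the $k$-th block), which is a uniform version of your per-variable splitting with the copy constraints encoded in the feasible set, and it likewise absorbs the blow-up from $n$ to at most $dn$ variables into the constant for fixed $d$. Your handling of variables that are perturbed in no linear objective (tie-breaking $V^{d+1}$ and deleting dominated solutions before projecting the coordinate out) is in fact spelled out in more detail than in the paper, which only remarks that such variables can be eliminated.
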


Let us remark that we can transform every instance that has not the form stated in Lemma~\ref{lemma.ZP:easy structure} into an instance with this form. We will show that this transformation does not increase the size of the Pareto-set for any realization of the coefficients. Hence, the bound from Theorem~\ref{thm:MainZeroPreserving} that applies for the modified instance also applies for the original instance. However, our transformation increases the dimension of the set~$\S$ from~$n$ to $d \cdot n$. Hence, we lose a factor of $d^{d^3+d^2+d}$ in the bound which we can hide in the $O$-notation since we have to apply this transformation only once.

\begin{proof}
We first show how to decrease the number of indices~$i$ for which $V^1_i, \ldots, V^d_i$ is perturbed to at most one.
For this, let
\[
	\S' = \SET{ (x, x, \ldots, x) \WHERE x \in \S }
	\subseteq \SET{ 0, \ldots, \K }^{dn}
\]
be the set of feasible solutions that contains for every $x \in \S$ the solution $x^d \in \SET{ 0, \ldots, \K }^{dn}$ that consists of~$d$ copies of~$x$. For $k \in [d]$ we define a linear objective function $W^k \colon \S' \to \RR$ in which all coefficients~$W^k_i$ with $i \notin I_k \DEF \SET{ (k-1)n + 1, \ldots, kn }$ are deterministically set to~$0$. The coefficients $W^k_{(k-1)n+1}, \ldots, W^k_{kn}$ are chosen as the coefficients $V^k_1, \ldots, V^k_n$, i.e.,
either randomly according to a density~$f^k_i$ or~$0$ deterministically. The objective function~$W^{d+1}$ maps every
solution $x^d\in\S'$ to $V^{d+1}(x)$. The instance consisting of~$\S'$ and the objective functions $W^1, \ldots, W^{d+1}$ has the
desired property that every variable appears in at most one of the objective functions $W^1, \ldots, W^d$ and it has the
same smoothed number of Pareto-optimal solutions as the instance consisting of~$\S$ and the objective functions $V^1, \ldots, V^{d+1}$.  
For every $i \in [dn]$ for which none of the coefficients $W^1_i, \ldots, W^d_i$ is perturbed we can eliminate the corresponding
variable from~$\S'$.

This shows that every $\phi$-smooth instance with $\S \subseteq \SET{ 0, \ldots, \K }^n$ can be transformed into another $\phi$-smooth instance 
with $\S \subseteq \SET{ 0, \ldots, \K }^\ell$ with $\ell \leq dn$ in which every variable appears in exactly one objective function
and that has the same smoothed number of Pareto-optimal solutions. As the bound proven in Theorem~\ref{thm:MainZeroPreserving}
depends polynomially on the number of variables, we lose only a constant factor (with respect to~$n$, $\phi$, and~$\K$) by going from $\S \subseteq \SET{ 0, \ldots, \K }^n$ to $\S' \subseteq \SET{ 0, \ldots, \K }^{dn}$. This constant is hidden in the $O$-notation.
\end{proof}

In the remainder of this chapter we focus on instances having the structure described in
Lemma~\ref{lemma.ZP:enough perturbed coefficients} and Lemma~\ref{lemma.ZP:easy structure}. 
Then $(P_1, \ldots, P_d)$ is a partition of~$[n]$, where~$P_t$ denotes the tuple of
indices~$i$ for which~$V^t_i$ is perturbed.

We consider the variant of the \tWitness function given as Algorithm~\ref{algorithm:WitnessZ}, referred to as the \tWitnessZ function, which gets as 
parameters besides the usual~$V$, $x$, and~$I$, a set $K \subseteq [d]$ of indices of objective functions and a
call number $r\in\NN$. In a call of the \tWitnessZ function only the adversarial objective function~$V^{d+1}$ and the objective functions~$V^t$ with $t\in K$ are considered.
The set of solutions is restricted to solutions that agree with~$x$ in all positions~$P_k$ with $k\notin K$.
Additionally, as in the \tWitness function, only solutions are considered that agree with~$x$ in all positions $i\in I$.
By the right choice of~$I$, we can avoid choosing an index multiple times in different calls of the \tWitnessZ function. 
The parameter~$r$ simply corresponds to the number of the current call of the \tWitnessZ function.
The \tWitnessZ function always returns some subset of~$\S$.

\IfUseAlgorithmIIeStyle
{
\LinesNumbered
\begin{algorithm*}[h!t]
  \caption{\WitnessZ{V, x, K, r, I}}
  \label{algorithm:WitnessZ}
  let~$d_r$ be the number of components of~$K$ and let~$K$ be of the form $K = (k_1, \ldots, k_{d_r})$ \;
  set $k_{d_r+1} = d+1$ \;
  set $\R^{(r)}_{d_r+1} = \S_I(x) \cap \bigcap_{k \in [d] \setminus K} \S_{P_k}(x)$\label{l.ZP:initial set} \;
  \lIf{$d_r = 0$}{\Return{$\R^{(r)}_{d_r+1}$} \;}\label{l.ZP:success}
  \For{$t = d_r, d_r-1, \ldots, 0$}
  {\label{l.ZP:for}
    set $\C^{(r)}_t = \bSET{ z \in \R^{(r)}_{t+1} \WHERE V^{k_1 \ldots k_t} z < V^{k_1 \ldots k_t} x }$\label{l.ZP:winner set} \;
    \uIf{$\C^{(r)}_t \neq \emptyset$}
    {
      set $x^{(r, t)} = \argmin \bSET{ V^{k_{t+1}} z \WHERE z \in \C^{(r)}_t }$\label{l.ZP:argmin} \;
      let $K_\EQ \subseteq K$ be the tuple of indices~$k$ for which $x^{(r, t)}|_{P_k} = x|_{P_k}$\label{l.ZP:equal indices} \;
      set $K_\NEQ = K \setminus K_\EQ$\label{l.ZP:not equal indices} \;
      \For{$k \in K$}
      {
        \uIf{$k \in K_\EQ$}
        {
          set $r_k = r$\label{l.ZP:last access} \;
        }
        \Else
        { 
          set $i_k = \min \bSET{i \in P_k \WHERE x^{(r, t)}_i \neq x_i }$\label{l.ZP:index choice} \;
          $I \leftarrow I \cup (i_k)$ \;
        }
      }
      \uIf{$K_\EQ = ()$}
      {
        set $\R^{(r)}_t = \bSET{ z \in \R^{(r)}_{t+1} \WHERE V^{k_{t+1}} z < V^{k_{t+1}} x^{(r, t)} } \cap \S_I(x)$\label{l.ZP:loser set} \;
      }
      \Else
      {
        set $t_r = t$\label{l.ZP:last iteration} \;
        \Return{\WitnessZ{V, x, K_\NEQ, r+1, I}}\label{l.ZP:restart} \;
      }
    }
    \Else
    {
      \For{$k \in K$}
     	{
        set $i_k = \min (P_k \setminus I)$\label{l.ZP:trivial winner index} \;
        $I \leftarrow I \cup (i_k)$ \;
      }
      set $x^{(r, t)}_i = \begin{cases}
                            \min (\SET{ 0, \ldots, \K } \setminus \SET{ x_i }) & \text{if}\ i \in \bSET{ i_{k_1}, \ldots, i_{k_{d_r}} } \cr
                            x_i                                                & \OTHERWISE
             \end{cases}$\label{l.ZP:trivial winner} \;
      set $\R^{(r)}_t = \R^{(r)}_{t+1} \cap \S_I(x)$\label{l.ZP:trivial loser set} \;
    }
  }
  \Return{$\emptyset$}\label{l.ZP:error} \;
\end{algorithm*}%
}
{
\begin{algorithm*}[h!t]
  \caption{\WitnessZ{V, x, K, r, I}}
  \label{algorithm:WitnessZ}
  \begin{algorithmic}[1]
    \STATE Let~$d_r$ be the number of components of~$K$ and let~$K$ be of the form $K = (k_1, \ldots, k_{d_r})$.
    \STATE Set $k_{d_r+1} = d+1$.    
    \STATE Set $\R^{(r)}_{d_r+1} = \S_I(x) \cap \bigcap_{k \in [d] \setminus K} \S_{P_k}(x)$.  \label{l.ZP:initial set} 
    \STATE \textbf{if} $d_r = 0$ \textbf{then} \textbf{return} $\R^{(r)}_{d_r+1}$ \label{l.ZP:success}
    \FOR{$t = d_r, d_r-1, \ldots, 0$} \label{l.ZP:for}
      \STATE Set $\C^{(r)}_t = \bSET{ z \in \R^{(r)}_{t+1} \WHERE V^{k_1 \ldots k_t} z < V^{k_1 \ldots k_t} x }$. \label{l.ZP:winner set}
      \IF{$\C^{(r)}_t \neq \emptyset$}
        \STATE Set $x^{(r, t)} = \argmin \bSET{ V^{k_{t+1}} z \WHERE z \in \C^{(r)}_t }$. \label{l.ZP:argmin}
        \STATE Let $K_\EQ \subseteq K$ be the tuple of indices~$k$ for which $x^{(r, t)}|_{P_k} = x|_{P_k}$. \label{l.ZP:equal indices}
        \STATE Set $K_\NEQ = K \setminus K_\EQ$. \label{l.ZP:not equal indices}
        \FOR{$k \in K$}
          \IF{$k \in K_\EQ$}
            \STATE Set $r_k = r$. \label{l.ZP:last access}
          \ELSE 
            \STATE Set $i_k = \min \bSET{i \in P_k \WHERE x^{(r, t)}_i \neq x_i }$. \label{l.ZP:index choice}
            \STATE $I \lmapsto I \cup (i_k)$
          \ENDIF
        \ENDFOR
        \IF{$K_\EQ = ()$}
          \STATE Set $\R^{(r)}_t = \bSET{ z \in \R^{(r)}_{t+1} \WHERE V^{k_{t+1}} z < V^{k_{t+1}} x^{(r, t)} } \cap \S_I(x)$. \label{l.ZP:loser set}
        \ELSE
          \STATE Set $t_r = t$. \label{l.ZP:last iteration}
          \RETURN \WitnessZ{V, x, K_\NEQ, r+1, I} \label{l.ZP:restart}
        \ENDIF
      \ELSE
        \FOR{$k \in K$}
          \STATE Set $i_k = \min (P_k \setminus I)$. \label{l.ZP:trivial winner index}
          \STATE $I \lmapsto I \cup (i_k)$.
        \ENDFOR
        \STATE Set $x^{(r, t)}_i = \begin{cases}
                \min (\SET{ 0, \ldots, \K } \setminus \SET{ x_i }) & \text{if}\ i \in \bSET{ i_{k_1}, \ldots, i_{k_{d_r}} } \COMMA \cr
                x_i                                                & \OTHERWISE \DOT
               \end{cases}$ \label{l.ZP:trivial winner}
        \STATE Set $\R^{(r)}_t = \R^{(r)}_{t+1} \cap \S_I(x)$. \label{l.ZP:trivial loser set}
      \ENDIF
    \ENDFOR
    \RETURN $\emptyset$ \label{l.ZP:error}
  \end{algorithmic}
\end{algorithm*}%
}

Let us give some remarks about the \tWitnessZ function. As a convention we set $\bigcap_{k \in ()} \S_{P_k}(x) = \S$ (cf.\ Line~\ref{l.ZP:initial set}). This is only important in the case where $K = [d]$, i.e., in the first call.

As in the \tWitness function, if iteration $t=0$ is reached in a certain call~$r$ (this does not have to be the case), then we obtain $\C^{(r)}_0 = \R^{(r)}_1$ since $V^{k_1 \ldots k_t} z < V^{k_1 \ldots k_t} x$ (see Line~\ref{l.ZP:winner set}) is no restriction for $t=0$. For the definition of~$x^{(r,t)}$ in Line~\ref{l.ZP:argmin}, ties are broken by taking the lexicographically first solution. Though we did the same in the \tWitness function, it is much more important here. In the model without zero-preserving perturbations the functions $V^1, \ldots, V^d$ are injective with probability~$1$. If this is the case, then no ties have to be broken. In the model with zero-preserving perturbations, the functions $V^1, \ldots, V^d$ can be non-injective with probability~$1$: If there are two distinct solutions $x, y \in \S$ for which $x|_{P_k} = y|_{P_k}$, then $V^k x = V^k y$.

The index~$r_k$ defined in Line~\ref{l.ZP:last access} is the number of the last call in which the objective function~$V^k$ has been considered. The index~$t_r$ defined in Line~\ref{l.ZP:last iteration} is the number of the iteration in call number~$r$ of \tWitnessZ in which the next  recursive call of \tWitnessZ was made. We will see that, if the last call of the \tWitnessZ function is the call with number~$r^\star+1$, then $r_k \in [r^\star]$ for each $k \in [d]$ and there is at least one index $k \in [d]$ for which $r_k = r^\star$, i.e., the objective function~$V^k$ has been considered until the end. Furthermore, the indices~$t_r$ are defined for $r = 1, \ldots, r^\star$. For the simulation of the \tWitness function information about the solutions~$x^{(t)}$ and the indices~$i_t$ are required. For the simulation of the \tWitnessZ function we additionally need the values~$r_k$ and~$t_r$ to know when to make a new recursive call (in iteration $t=t_r$ in the call with number~$r$) and which objectives to consider (objective~$V^k$ will be considered in the call with number~$r$ if and only if $r \leq r_k$).

In Line~\ref{l.ZP:index choice} it is always possible to find an index $i \in P_k$ on which the current vector~$x^{(r, t)}$ and~$x$ disagree because this line is only reached if $k \in K_\NEQ$, i.e., if $x^{(r, t)}|_{P_k} \neq x|_{P_k}$. In order for Line~\ref{l.ZP:trivial winner index} to be feasible, we have to guarantee that $P_k \setminus I \neq ()$.
This follows since we assumed $|P_k|>(d+1)^3 > d(d+1)$ in accordance with Lemma~\ref{lemma.ZP:enough perturbed coefficients} and because there are at most~$d$ calls of \tWitnessZ with non-empty~$K$ with at most $d+1$ iterations each, and in each iteration at most one index from~$P_k$ is added to~$I$. Note that it would be more precise to introduce the notation~$i^{(r,t)}_k$ rather than~$i_k$ (cf.\ Line~\ref{l.ZP:index choice} and Line~\ref{l.ZP:trivial winner index}). Furthermore, we could also write~$I^{(r,t)}_k$ instead of~$I$. For the sake of readability we decided to drop these additional indices and refer to index~$i_k$ and tuple~$I$ of iteration~$t$ of call~$r$ in our proofs.

Before we analyze the \tWitnessZ function, let us discuss similarities and differences to the \tWitness function. The initial calls \WitnessZ{V, x, [d], 1, I} and \Witness{V, x, I} are very similar. All objectives $V^1, \ldots, V^{d+1}$ are considered. Furthermore, $d_1 = d$ and $\R_{d+1}^{(1)} = \S_I(x)$. Line~\ref{l.ZP:success} can be ignored in this call since $d_1 = d \geq 1$. Also the loop of the \tWitnessZ function is very similar to the loop of the \tWitness function. The sets~$\C^{(r)}_t$ and $\R^{(r)}_t$ and the solution~$x^{(r,t)}$ are defined the same way as the sets~$\C_t$ and~$\R_t$ and the solution~$x^{(t)}$ in the \tWitness function.

However, there are two main differences to the \tWitness function in the body of the loop that are due to the two additional issues we have to deal with when considering zero-preserving perturbations. First it is possible that $V^k x^{(r,t)} = V^k x$ for some of the indices~$k$. This happens if $x^{(r,t)}|_{P_k} = x|_{P_k}$ (otherwise, it happens with probability~$0$) and is a fundamental issue. If we would proceed running the loop as we do it in  the \tWitness function, then we might lose the crucial property that the function returns~$\SET{x}$ if~$x$ is Pareto-optimal. The tuple~$K_\EQ$ contains the problematic indices~$k$ for which $x^{(r,t)}|_{P_k} = x|_{P_k}$. If $K_\EQ = ()$, then we proceed more or less as we did in the \tWitness function (see Line~\ref{l.ZP:index choice} and Line~\ref{l.ZP:loser set}). As discussed above, the case $K_\EQ \neq ()$ has to be treated differently. In this case we make use of Lemma~\ref{Restart Lemma} which implies that, if~$x$ is Pareto-optimal, then it is also Pareto-optimal with respect to $\bigcap_{k \in K_\EQ} \S_{P_k}(x)$ and $\SET{ V^k \WHERE k \in K_\NEQ \cup (d+1) }$ (cf.\ Line~\ref{l.ZP:restart} of the current call and Line~\ref{l.ZP:initial set} of the next call).

The second difference due to another issue with zero-preserving perturbations can be sketched as follows. In each iteration~$t$ of the \tWitness function one index~$i_t$ is chosen. Since in the model without zero-preserving perturbations all coefficients are perturbed, we can then exploit the randomness in the coefficients $V^1_{i_t}, \ldots, V^d_{i_t}$. In the model with zero-preserving perturbations under the assumption given by Lemma~\ref{lemma.ZP:easy structure}, for each index $i \in [n]$ exactly one of the coefficients $V^1_i, \ldots, V^d_i$ is perturbed while the others are~$0$. Hence, we choose one index $i_k \in P_k$ for each objective~$V^k$ to ensure that we have one perturbed coefficient $V^k_{i_k}$ per objective. These indices~$i_k$ are chosen only for $k \in K_\NEQ$ (see Line~\ref{l.ZP:index choice}). This is due to the fact that for our analysis to work we need the additional property that $x^{(r,t)}_{i_k} \neq x_{i_k}$ which is impossible for $k \in K_\EQ$ by the definition of~$K_\EQ$ and the requirement $i_k \in P_k$. However, as from now on we do not consider the objectives~$V^k$ for $k \in K_\EQ$ anymore, we do not have to choose indices~$i_k$ for $k \in K_\EQ$.

In the remainder of this section we only consider the case that~$x$ is Pareto-optimal. Unless stated otherwise, we assume that the following \emph{$\OKZ$-event} $\OKZ(V)$ occurs. This event occurs if $|V^k \cdot (y - z)| \geq \e$ for every $k \in [d]$ and for every two solutions $y, z \in \S$ for which $y|_{P_k} \neq z|_{P_k}$. We will later see that the $\OKZ$-event occurs with sufficiently high probability.

\begin{lemma}
\label{lemma.ZP:Witness}
The call \WitnessZ{V, x, [d], 1, ()} returns the set $\SET{x^{(r^\star, t_{r^\star})}} = \SET{x}$, where $r^\star = \max \SET{ r_1, \ldots, r_d }$.
\end{lemma}

Lemma~\ref{lemma.ZP:Witness} was also stated in the conference version (\cite{BrunschR12}, Lemma~25) but Claim~1 of the proof was not correct. Here, we rely on the concept of weak Pareto-optimality (see Definition~\ref{def:Pareto optimality}) and its properties (Lemma~\ref{Recursion Lemma}) since we cannot guarantee~$x$ to be Pareto-optimal in every iteration. However, the Pareto-optimality holds at the beginning of every call to the \tWitnessZ function.

\begin{proof}
Let us consider an arbitrary call \WitnessZ{V, x, K, r, I} for $K \neq ()$. First we show the following claim by induction on~$t$.

\begin{claim}
\label{claim.ZP:weakly Pareto optimal}
In every iteration~$t$ that is reached, $x$ is weakly Pareto-optimal with respect to~$\R^{(r)}_{t+1}$ and $\bSET{ V^{k_1}, \ldots, V^{k_{t+1}} }$. 
\end{claim}

\begin{proof}[Proof of Claim~\ref{claim.ZP:weakly Pareto optimal}]
To begin with, consider $t = d_r$. As~$x$ is Pareto-optimal with respect to~$\S$ and $\bSET{ V^1, \ldots, V^{d+1} }$, $x$ is also Pareto-optimal with respect to $\bigcap_{k \in [d] \setminus K} \S_{P_k}(x)$ and
\[
	\bSET{ V^k \WHERE k \in K \cup (d+1) }
	= \bSET{ V^{k_1}, \ldots, V^{k_{d_r+1}} }
\]
due to Lemma~\ref{Restart Lemma}. Consequently, $x$ is also Pareto-optimal with respect to $\R^{(r)}_{d_r+1} = \S_I(x) \cap \bigcap_{k \in [d] \setminus K} \S_{P_k}(x)$ and $\bSET{ V^{k_1}, \ldots, V^{k_{d_r+1}} }$ due to Proposition~\ref{Subset Proposition}. Note that this property is even stronger than weak Pareto-optimality. We will need this strong version in the induction step when $t=d_r-1$.

Now consider a iteration $t \leq d_r - 1$ that is reached and assume that the induction hypothesis is true for $t+1$. We consider iteration~$t+1$ where $\R^{(r)}_{t+1}$ is defined, and distinguish between two cases. If $\C^{(r)}_{t+1} = \emptyset$, then $x$ is weakly Pareto-optimal with respect to $\R^{(r)}_{t+1} = \R^{(r)}_{t+2} \cap \S_I(x)$ and $\bSET{ V^{k_1}, \ldots, V^{k_{t+1}} }$ due to the induction hypothesis, Lemma~\ref{Recursion Lemma}~(I), and Proposition~\ref{Subset Proposition}.

Let us consider the more interesting case $\C^{(r)}_{t+1} \neq \emptyset$. Since iteration~$t$ is reached, there is no call of the \tWitnessZ function in iteration~$t+1$, i.e., $K_\EQ = ()$ in iteration $t+1$. The induction hypothesis and Lemma~\ref{Recursion Lemma}~(II) yield $V^{k_{t+2}} x \leq V^{k_{t+2}} x^{(r,t+1)}$.

We will show that even $V^{k_{t+2}} x < V^{k_{t+2}} x^{(r,t+1)}$. For this, we assume to the contrary that $V^{k_{t+2}} x = V^{k_{t+2}} x^{(r,t+1)}$ and distinguish between the cases $t = d_r-1$ and $t < d_r-1$.
\begin{enumerate}

	\item If $t = d_r-1$, then we obtain $V^{k_{d_r+1}} x = V^{k_{d_r+1}} x^{(r,d_r)}$ and $V^{k_1 \ldots k_{d_r}} x^{(r,d_r)} < V^{k_1 \ldots k_{d_r}} x$ since $x^{(r,d_r)} \in \C^{(r)}_{d_r}$. Hence, $x^{(r,d_r)} \in \R^{(r)}_{d_r+1}$ dominates~$x$ with respect to $\bSET{ V^{k_1}, \ldots, V^{k_{d_r+1}} }$ which contradicts the fact that~$x$ is Pareto-optimal with respect to~$\R^{(r)}_{d_r+1}$ and $\bSET{ V^{k_1}, \ldots, V^{k_{d_r+1}} }$.
	
	\item If $t < d_r-1$, then $V^{k_{t+2}} x = V^{k_{t+2}} x^{(r,t+1)}$ implies $x|_{P_{k_{t+2}}} = x^{(r,t)}|_{P_{k_{t+2}}}$ as we assume that the $\OKZ$-event occurs. Consequently, $k_{t+2} \in K_\EQ$ in iteration~$t+1$, which contradicts the previous observation that $K_\EQ = ()$ in that iteration.
	
\end{enumerate}
This concludes the proof of Claim~\ref{claim.ZP:weakly Pareto optimal}.
\end{proof}

With Claim~\ref{claim.ZP:weakly Pareto optimal} we are now able to show that a call \WitnessZ{V, x, K, r, I} terminates without a further call to the \tWitnessZ function if and only if $K = ()$. Note that one direction is trivial.

\begin{claim}
\label{claim.ZP:restart}
Consider an arbitrary call \WitnessZ{V, x, K, r, I}. If $K \neq ()$, then this call results in another call to the \tWitnessZ function (and does not terminate in Line~\ref{l.ZP:error}).
\end{claim}

\begin{proof}[Proof of Claim~\ref{claim.ZP:restart}]
Let us assume that there is no further call to the \tWitnessZ function until iteration $t = 0$, i.e., we reach iteration $t = 0$. In accordance with Claim~\ref{claim.ZP:weakly Pareto optimal}, $x$ is weakly Pareto-optimal with respect to~$\R^{(r)}_1$ and $\SET{V^{k_1}}$. Now let us consider iteration $t = 0$. We obtain $\C^{(r)}_0 = \R^{(r)}_1$ since there are no restrictions in this iteration. Consequently, $\C^{(r)}_0 \neq \emptyset$ because $x \in \R^{(r)}_1$. The solution~$x^{(r,0)}$ minimizes~$V^{k_1}$ among all solutions from~$\C^{(r)}_0$. On the other hand, $x^{(r,0)}$ cannot dominate~$x$ strongly. Hence, $V^{k_1} x^{(r,0)} = V^{k_1} x$, i.e., $x^{(r,0)}|_{P_{k_1}} = x|_{P_{k_1}}$ as we assumed that the $\OKZ$-event occurs. Therefore, $k_1 \in K_\EQ$, i.e., $K_\EQ \neq ()$, and thus, the \tWitnessZ function is called in Line~\ref{l.ZP:restart}.
\end{proof}

According to Claim~\ref{claim.ZP:restart} there will be recursive calls until a call of the form \WitnessZ{V, x, (), r, I}. This call immediately returns the set $\R^{(r)}_{d_r+1}$ in Line~\ref{l.ZP:success}. Since $[d] \setminus () = [d]$, we obtain
\[
	\R^{(r)}_{d_r+1}
	= \S_I(x) \cap \bigcap_{k \in [d]} \S_{P_k}(x)
	= \S_{[n]}(x)
	= \SET{x} \DOT
\]
Now consider call number $r-1$ and the iteration~$t_{r-1}$ in this call in which \WitnessZ{V, x, (), r, I} has been called. In this iteration, $K_\EQ \neq ()$ since Line~\ref{l.ZP:restart} is reached. Hence, there is at least one index $k \in K_\EQ$, and for these indices, $r_k$ is set to~$r$ in Line~\ref{l.ZP:last access}. Now, as the next call is of the form \WitnessZ{V, x, (), r, I}, this implies $K_\NEQ = ()$, i.e., all values~$r_k$ for $k \in [d]$ have been set by now and, thus,
\[
	r^\star
	= \max \SET{ r_1, \ldots, r_d }
	= r-1 \COMMA
\]
i.e., the number of the call we currently consider.

Consider the solution $x^{(r^\star,t_{r^\star})}$ defined in Line~\ref{l.ZP:argmin} and let~$K$ be the tuple of call~$r^\star$. Since $K_\NEQ = ()$ this implies $K_\EQ = K$. Hence, $x^{(r^\star,t_{r^\star})}|_{P_k} = x|_{P_k}$ for all $k \in K$ by definition of~$K_\EQ$ in Line~\ref{l.ZP:equal indices}. On the other hand,
\[
  x^{(r^\star,t_{r^\star})}
  \in \C^{(r^\star)}_{t_{r^\star}}
  \subseteq \R^{(r^\star)}_{t_{r^\star}+1}
  \subseteq \R^{(r^\star)}_{d_{r^\star}+1}
  \subseteq \bigcap_{k \in [d] \setminus K} \S_{P_k}(x) \DOT
\]
The first inclusion is due to the definition of $\C^{(r^\star)}_{t_{r^\star}}$ in Line~\ref{l.ZP:winner set}. The second inclusion is due to the observation that always $\R^{(r)}_{t+1} \subseteq \R^{(r)}_{t+2} \subseteq \ldots \subseteq \R^{(r)}_{d_r+1}$ due to the construction of the sets~$\R^{(r)}_t$ in Line~\ref{l.ZP:loser set} and Line~\ref{l.ZP:trivial loser set}. The construction of $\R^{(r^\star)}_{d_{r^\star}+1}$ in Line~\ref{l.ZP:initial set} yields the third inclusion. Hence, $x^{(r^\star,t_{r^\star})}|_{P_k} = x|_{P_k}$ for all $k \in [d] \setminus K$, and according to the previous observations, even for all $k \in K$. Consequently, $x^{(r^\star,t_{r^\star})} = x$. Summarizing the previous results, we obtain that the call \WitnessZ{V, x, [d], 1, ()} ends up in the call \WitnessZ{V, x, (), r, I} for some index tuple~$I$ which immediately returns the set $\R^{(r)}_{d_r+1} = \SET{x} = \SET{x^{(r^\star,t_{r^\star})}}$. \qed
\end{proof}

Like for the simple \tWitness function, we show that it is enough to know some information about the run of the
\tWitnessZ function to reconstruct the solution~$x$. As before, we call this data the \emph{certificate} of~$x$.

\begin{definition}
\label{definition.ZP:Certificate}
Let $r_1, \ldots, r_d$ and $t_1, \ldots, t_{r^\star}$ for $r^\star = \max \SET{ r_1, \ldots, r_d }$ be the indices and~$x^{(r, t)}$ be the vectors constructed during the execution of \WitnessZ{V, x, [d], 1, ()}. Furthermore, consider the tuple~$I$ at the moment when the last call to the \tWitnessZ function terminates. The pair $(I^\star, A)$ for $I^\star = I \cup \big( i^\star_1, \ldots, i^\star_d \big)$, $i^\star_k = \min (P_k \setminus I)$,
and $A = \left. \big[ x^{(1, d_1)}, \ldots, x^{(1, t_1)}, \ldots, x^{(r^\star, d_{r^\star})}, \ldots, x^{(r^\star, t_{r^\star})} \big] \right|_{I^\star}$, is called the \emph{$V$-certificate} of~$x$. We label the columns of~$A$ by~$a^{(r, t)}$. Moreover, we call a pair $(I', A')$ a \emph{certificate} if there is some realization~$V$ such that the $\OKZ$-events occurs and if there exists a Pareto-optimal solution $x \in \S$ such that $(I', A')$ is the $V$-certificate of~$x$. By~$\CS$ we denote the set of all certificates.
\end{definition}

We assume that the indices~$r_k$ and $t_r$ (and hence also the indices $d_r$) are implicitly encoded in a given certificate.
Later we will take these indices into consideration again when we count the number of possible certificates.

\begin{lemma}\newcommand{\AST}{`$*$'}
\label{lemma.ZP:certificate form}
Let~$V$ be a realization for which the $\OKZ$-event occurs and let $(I^\star, A)$ be a $V$-certificate of some Pareto-optimal solution~$x$.
Let~$A$ be of the form
\[
	A = \left[ a^{(1, d_1)}, \ldots, a^{(1, t_1)}, \ldots, a^{(r^\star, d_{r^\star})}, \ldots, a^{(r^\star, t_{r^\star})} \right] \DOT
\]
For a fixed index $k \in [d]$ let
\[
	M = \left. \left[ a^{(1, d_1)}, \ldots, a^{(1, t_1)}, \ldots, a^{(r_k, d_{r_k})}, \ldots, a^{(r_k, t_{r_k})} \right] \right|_J \COMMA
\]
where $J = I^\star \cap P_k \FED (j_1, \ldots, j_m)$. Then~$M$ is of the form
\[
  M = \begin{bmatrix}
    \overline{x_{j_1}} & x_{j_1} & \ldots                 & x_{j_1}     \cr
    *                  & \ddots  & \ddots                 & \vdots      \cr
    \vdots             & \ddots  & \overline{x_{j_{m-1}}} & x_{j_{m-1}} \cr
    *                  & \ldots  & *                      & x_{j_m}
  \end{bmatrix}
  \in \SET{ 0, \ldots, \K }^{|J| \times |J|} \COMMA
\]
where each \AST\ can be an arbitrary value from $\SET{ 0, \ldots, \K }$ (different \AST-entries can represent different values) and where $\overline{z}$ for a value $z \in \SET{ 0, \ldots, \K }$ can be an arbitrary value from $\SET{ 0, \ldots, \K } \setminus \SET{ z }$.
\end{lemma}

\begin{proof}
Consider the call \WitnessZ{V, x, [d], 1, ()} and all subsequent calls \WitnessZ{V, x, K, r, I}. By definition of~$r_k$ we have $r \leq r_k \iff k \in K$ (see Line~\ref{l.ZP:last access}, Line~\ref{l.ZP:not equal indices}, and Line~\ref{l.ZP:restart}). In each call where $r \leq r_k$ one vector~$x^{(r, t)}$ is constructed in each iteration~$t$.
Also, in each iteration except for the last iteration~$t_{r_k}$ of the $r_k\th$ call, when $k \in K_\EQ$ for the first and the last time, one index $i \in P_k$ is chosen and added to~$I$. Since~$J$ consists of the chosen indices $i \in P_k$ and the additional index~$i^\star_k$, matrix~$M$ is a square matrix.

We first consider the last column of~$M$. As $x^{(r_k, t_{r_k})}$ is the last vector constructed before~$k$ is removed from~$K$, 
index~$k$ must be an element of~$K_\EQ$ in iteration~$t_{r_k}$ of call~$r_k$, i.e., $x^{(r_k, t_{r_k})} \big|_{P_k} = x|_{P_k}$.
Hence, the last column of~$M$ has the claimed form because $J \subseteq P_k$.

Now consider the remaining columns of~$M$. Due to the construction of the set~$\R^{(r)}_t$ in Line~\ref{l.ZP:initial set}, Line~\ref{l.ZP:loser set}, and Line~\ref{l.ZP:trivial loser set}, all vectors $z \in \R^{(r)}_t$ coincide with~$x$ in the previously chosen indices~$i$.
As in the case $\C^{(r)}_t \neq \emptyset$ vector~$x^{(r, t)}$ is an element of $\C^{(r)}_t \subseteq \R^{(r)}_{t+1}$ and in the case $\C^{(r)}_t = \emptyset$ vector~$x^{(r, t)}$ is constructed appropriately, the upper triangle of~$M$, excluding the principal diagonal, has the claimed form. The form of the principal diagonal follows from the choice of index $i \in P_k$: In Line~\ref{l.ZP:index choice} we chose~$i$ such that $x^{(r, t)}_i \neq x_i$, in Line~\ref{l.ZP:trivial winner} we construct~$x^{(r,t)}$ explicitely such that $x^{(r, t)}_i \neq x_i$.
\end{proof}

Like in the model without zero-preserving perturbations, our goal is to execute 
the \tWitnessZ function without revealing the entire matrix~$V$.
For this we now consider a variant of the \tWitnessZ function given as Algorithm~\ref{algorithm:WitnessZ simulation} which gets as additional parameters the~$V$-certificate of~$x$,
a shift vector $u \in \SET{ 0, \ldots, \K }^n$, the $\e$-box $B = B_V(x - u)$, and a set~$\S'$ of solutions that are
still under consideration. Recall that at the beginning of every call to the original \tWitnessZ function the set of
solutions that have still to be considered is restricted to a subset of $\bigcap_{k \in [d] \setminus K} \S_{P_k}(x)$ (see Line~\ref{l.ZP:initial set}). The huge amount of information that is necessary to restrict the current set of solutions to those that still have to be considered is not given by the $V$-certificate of~$x$. Thus, we keep track of this set of remaining solutions by passing
it as a parameter. We will see how to update this set without too much knowledge about~$x$ (cf.\ Line~\ref{l.ZP.II:restart} of Algorithm~\ref{algorithm:WitnessZ simulation}).

\IfUseAlgorithmIIeStyle
{
\LinesNumbered
\begin{algorithm*}[h!t]
  \caption{\WitnessZ{V, K, r, I^\star, A, \S', B, u}}
  \label{algorithm:WitnessZ simulation}
  let~$d_r$ be the number of components of~$K$ and let~$K$ be of the form $K = (k_1, \ldots, k_{d_r})$ \;
  set $k_{d_r+1} = d+1$ \;
  let~$b$ be the corner of~$B$ \;
  \lIf{$d_r = 0$}{\Return{$\S'$} \;}\label{l.ZP.II:success}
  set $\R^{(r)}_{d_r+1} = \S' \cap \bigcup_{s=t_r}^{d_r} \S_{I^\star} \big( a^{(r, s)} \big)$\label{l.ZP.II:initial loser set} \;
  \For{$t = d_r, d_r-1, \ldots, 0$}
  {
    set $\C^{(r)}_t = \bSET{ z \in \R^{(r)}_{t+1} \WHERE V^{k_1 \ldots k_t} \cdot (z - u) \leq b|_{k_1 \ldots k_t} } \cap \S_{I^\star} \big( a^{(r, t)} \big)$\label{l.ZP.II:winner set} \;
    \uIf{$\C^{(r)}_t \neq \emptyset$}
    {
      set $x^{(r, t)} = \argmin \bSET{ V^{k_{t+1}} z \WHERE z \in \C^{(r)}_t }$\label{l.ZP.II:winner} \;
      \uIf{$t = t_r$}
     	{
        let $K_\EQ \subseteq K$ be the tuple of indices~$k$ for which $r_k = r$\label{l.ZP.II:equal indices} \;
        set $K_\NEQ = K \setminus K_\EQ$ \;
        \Return{\bWitnessZ{V, K_\NEQ, r+1, I^\star, A, \S' \cap \bigcap_{k \in K_\EQ} \S_{P_k} \big( x^{(r, t)} \big), B, u}}\label{l.ZP.II:restart} \;
      }
      \Else
      {
        set $\R^{(r)}_t = \bSET{ z \in \R^{(r)}_{t+1} \WHERE V^{k_{t+1}} z < V^{k_{t+1}} x^{(r, t)} } \cap \bigcup_{s=t_r}^{t-1} \S_{I^\star} \big( a^{(r, s)} \big)$\label{l.ZP.II:loser set} \;
      }
    }
    \Else
    {
      set $x^{(r, t)} = (\bot, \ldots, \bot)$ \;
      set $\R^{(r)}_t = \R^{(r)}_{t+1} \cap \bigcup_{s=t_r}^{t-1} \S_{I^\star} \big( a^{(r, s)} \big)$ \;
    }
  }
  \Return{$\emptyset$} \;
\end{algorithm*}%
}
{
\begin{algorithm*}[h!t]
  \caption{\WitnessZ{V, K, r, I^\star, A, \S', B, u}}
  \label{algorithm:WitnessZ simulation}
  \begin{algorithmic}[1]
    \STATE Let~$d_r$ be the number of components of~$K$ and let~$K$ be of the form $K = (k_1, \ldots, k_{d_r})$.
    \STATE Set $k_{d_r+1} = d+1$.
    \STATE Let~$b$ be the corner of~$B$.
    \STATE \textbf{if} $d_r = 0$ \textbf{then} \textbf{return} $\S'$ \label{l.ZP.II:success}
    \STATE Set $\R^{(r)}_{d_r+1} = \S' \cap \bigcup_{s=t_r}^{d_r} \S_{I^\star} \big( a^{(r, s)} \big)$. \label{l.ZP.II:initial loser set}
    \FOR{$t = d_r, d_r-1, \ldots, 0$}
      \STATE Set $\C^{(r)}_t = \bSET{ z \in \R^{(r)}_{t+1} \WHERE V^{k_1 \ldots k_t} \cdot (z - u) \leq b|_{k_1 \ldots k_t} } \cap \S_{I^\star} \big( a^{(r, t)} \big)$. \label{l.ZP.II:winner set}
      \IF{$\C^{(r)}_t \neq \emptyset$}
        \STATE Set $x^{(r, t)} = \argmin \bSET{ V^{k_{t+1}} z \WHERE z \in \C^{(r)}_t }$. \label{l.ZP.II:winner}
        \IF{$t = t_r$}
          \STATE Let $K_\EQ \subseteq K$ be the tuple of indices~$k$ for which $r_k = r$. \label{l.ZP.II:equal indices}
          \STATE Set $K_\NEQ = K \setminus K_\EQ$.
          \RETURN \bWitnessZ{V, K_\NEQ, r+1, I^\star, A, \S' \cap \bigcap_{k \in K_\EQ} \S_{P_k} \big( x^{(r, t)} \big), B, u} \label{l.ZP.II:restart}
        \ELSE
          \STATE Set $\R^{(r)}_t = \bSET{ z \in \R^{(r)}_{t+1} \WHERE V^{k_{t+1}} z < V^{k_{t+1}} x^{(r, t)} } \cap \bigcup_{s=t_r}^{t-1} \S_{I^\star} \big( a^{(r, s)} \big)$. \label{l.ZP.II:loser set}
        \ENDIF
      \ELSE
        \STATE Set $x^{(r, t)} = (\bot, \ldots, \bot)$.
        \STATE Set $\R^{(r)}_t = \R^{(r)}_{t+1} \cap \bigcup_{s=t_r}^{t-1} \S_{I^\star} \big( a^{(r, s)} \big)$.
      \ENDIF
    \ENDFOR
    \RETURN $\emptyset$
  \end{algorithmic}
\end{algorithm*}%
}

It is important to break ties in Line~\ref{l.ZP.II:winner} the same way as we did in the original \tWitnessZ function, i.e., we take the lexicographically first solution.

\begin{lemma}
\label{lemma.ZP:same behavior}
Let $(I^\star, A)$ be the $V$-certificate of~$x$, let $u \in \SET{ 0, \ldots, \K }^n$ be an arbitrary vector, and let $B = B_V(x - u)$. Then the call \WitnessZ{V, [d], 1, I^\star, A, \S, B, u} returns~$\SET{x}$.
\end{lemma}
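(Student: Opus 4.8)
The plan is to mirror the proof of Lemma~\ref{lemma:same behavior}, but now tracking the recursive structure of the \tWitnessZ\ function. Fix the run of \WitnessZ{V, x, [d], 1, ()}, which by the previous lemma produces the vectors~$x^{(r,t)}$, the indices~$r_k$, $t_r$, $d'_r$, and terminates having built the $V$-certificate~$(I^*,A)$. I would prove by induction on the recursion depth (equivalently, downward induction on~$|K|$ going through the sequence of calls~$r=1,2,\ldots,r^*$) that the call \WitnessZ{V, K, r, I^*, A, \S^{(r)}, B, u}, where~$\S^{(r)} = \S \cap \bigcap_{k\in[d]\setminus K}\S_{P_k}(x)$ is the appropriately restricted solution set that the original \tWitnessZ\ call would have at that point, reconstructs the same vectors~$x^{(r,t')}$ for~$t'\in\{t_r,\ldots,d'_r\}$ as the original call, and makes its recursive call with the same~$K_\NEQ$, the same round index~$t_r$, and a restricted set matching the original. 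Applying this at~$r=1$ with~$\S^{(1)}=\S$ and chasing down to~$r=r^*$, where the innermost call has~$d'=0$ and simply returns~$\R^{(r^*)}_1$, gives that the outermost call returns~$\SET{x}$ once we check that this final set equals~$\SET{x}$.

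The per-call argument reuses the machinery of Lemma~\ref{lemma:same behavior}. First, the shift-vector observation carries over verbatim: since~$b$ is the corner of~$B = B_V(x-u)$ and~$\OKZ(V)$ holds (so~$V^{k_i}\cdot(z-u)$ cannot equal~$b_{k_i}$ exactly for any~$z\in\S^{(r)}$ with~$z|_{P_{k_i}}\neq x|_{P_{k_i}}$; and for~$z$ with~$z|_{P_{k_i}}=x|_{P_{k_i}}$ the two objective values coincide), the test~$V^{k_1\ldots k_t}\cdot(z-u)\le b|_{k_1\ldots k_t}$ in line~\ref{l.ZP.II:winner set} is equivalent to~$V^{k_1\ldots k_t}z < V^{k_1\ldots k_t}x$ in line~\ref{l.ZP:winner set}. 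Second, I would show by downward induction on~$t$ (within a fixed call~$r$) the three containments analogous to those in Lemma~\ref{lemma:same behavior}: $\R'^{(r)}_t\subseteq\R^{(r)}_t$, that~$x'^{(r,t)}=x^{(r,t)}$ whenever~$\C^{(r)}_t\neq\emptyset$, and that the earlier-constructed vectors of call~$r$ survive into~$\R'^{(r)}_t$. The key feeding fact is Lemma~\ref{lemma.ZP:certificate form}: the columns~$a^{(r,t')}$ of~$A$ restricted to~$I^*$ agree with~$x$ on all previously chosen indices and on the index~$i^*_k$, so~$\bigcup_{t'=t_r}^{t-1}\S_{I^*}(a^{(r,t')})\subseteq\S_I(x)$ for the tuple~$I$ the original call had at that moment, which is exactly what is needed to match the~$\R$-updates in lines~\ref{l.ZP:loser set} and~\ref{l.ZP:trivial loser set}; and the initial restriction in line~\ref{l.ZP.II:initial loser set} is subsumed by~$\R^{(r)}_{d'+1}$ from line~\ref{l.ZP:initial loser set} since~$J\supseteq I^*$ is replaced here by the single global~$I^*$ and all~$x^{(r,t')}$ lie in~$\S_{I^*}(a^{(r,t')})$ by construction.

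The main obstacle is the handover at the recursive call. One must verify that, when the fast version reaches round~$t=t_r$ and finds~$\C^{(r)}_{t_r}\neq\emptyset$ with reconstructed~$x^{(r,t_r)}=x'^{(r,t_r)}$, the sets~$K_\EQ$ and~$K_\NEQ$ it computes (here via the stored indices~$r_k$) coincide with those the original call computed (via the test~$x^{(r,t_r)}|_{P_k}=x|_{P_k}$); this is where one uses that~$r_k=r$ if and only if~$k\in K_\EQ$, which is how~$r_k$ was defined, together with~$x'^{(r,t_r)}=x^{(r,t_r)}$. Then the restricted set passed down, $\S' \cap \bigcap_{k\in K_\EQ}\S_{P_k}(x^{(r,t_r)})$, equals the~$\S^{(r+1)}$ that the original recursion used, because~$x^{(r,t_r)}|_{P_k}=x|_{P_k}$ for~$k\in K_\EQ$ and~$\S'=\S^{(r)}$ by the outer induction hypothesis. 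Finally, at the bottom of the recursion the returned set~$\R^{(r^*)}_1 = \S^{(r^*)}\cap\bigcup_{t'}\S_{I^*}(a^{(r^*,t')})$; since~$x^{(r^*,t_{r^*})}=x$ (second statement of the previous lemma) and~$x$ agrees with~$x$ on~$I^*$, we get~$x$ in this set, and the~$\OKZ$-event plus the fact that in this last call all remaining objectives are removed forces the set to be exactly~$\SET{x}$, just as in the direct (non-guessed) \tWitnessZ\ run. I would close by noting that no step used anything about~$u$ beyond~$u\in\SET{0,1}^n$ and~$B=B_V(x-u)$, so the statement holds for every such~$u$, the specific choice~$u^*$ being deferred to the probabilistic analysis.
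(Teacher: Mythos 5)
Your plan follows the paper's own proof essentially step for step: the same per-call claim (with the restricted set $\S\cap\bigcap_{k\in[d]\setminus K}\S_{P_k}(x)$), the same shift-vector equivalence under the $\OKZ$-event, the same inner downward induction on~$t$ fed by Lemma~\ref{lemma.ZP:certificate form}, the same handover argument via $k\in K_\EQ\Leftrightarrow r_k=r$, and the same treatment of the base case $K=()$. The only blemishes are cosmetic: the innermost call is number $r^*+1$ rather than $r^*$, and at the bottom the set collapses to $\SET{x}$ already because $(P_1,\ldots,P_d)$ partitions $[n]$, without needing $\OKZ$ there.
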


Before we give a formal proof of Lemma~\ref{lemma.ZP:same behavior} we try to give some intuition for it. As for the simple variant
of the \tWitness function we restrict the set of solutions to vectors that look like the vectors we want to reconstruct in the
next iterations of the current call, i.e., we intersect the current set with the set $\bigcup_{s=t_r}^{t-1} \S_{I^\star} \big( a^{(r, s)} \big)$
in iteration~$t$. In this way we only deal with subsets of the original sets, but we do not lose the vectors we want to reconstruct.
In order to reconstruct the vectors, we need more information than in the simple variant: we need to know in which iterations the
recursive calls of \tWitnessZ are made, in each call we need to know which objective functions~$V^k$ must not be considered anymore,
and for each of these objective functions we need to know the vector $x|_{P_k}$. The information when the recursive calls are made
and which objective functions must not be considered anymore is given in the certificate:
The variable~$t_r$ contains the iteration number when the recursive call is made. The index~$r_k$ contains the number of the call where index~$k$ has to be removed from~$K$. Hence,
index~$k$ is removed in the $t_{r_k}\th$ iteration of call~$r_k$. If we can reconstruct~$K_\EQ$ and the vector~$x^{(r, t)}$ in
the iteration where we make the recursive call, then we can also reconstruct the bits of~$x$ at indices $i \in P_k$ for all indices $k \in K_\EQ$
because $x|_{P_k} = x^{(r, t)}|_{P_k}$ for these indices~$k$ (cf.\ Line~\ref{l.ZP.II:restart}).

\begin{proof}
We compare the executions of \WitnessZ{V, x, [d], 1, ()} and \WitnessZ{V, [d], 1, I^\star, A, \S, B, u} and show the following claim by induction on~$r$.
\begin{claim}
\label{claim.ZP:same behavior}
If there is a call of the form \WitnessZ{V, x, K, r, I} during the execution of the call \WitnessZ{V, x, [d], 1, ()}, then there is also a call of the form \WitnessZ{V, K, r, I^\star, A, \S', B, u} for $\S' = \bigcap_{k \in [d] \setminus K} \S_{P_k}(x)$ during the execution of the call \WitnessZ{V, [d], 1, I^\star, A, \S, B, u}.
\end{claim}
\begin{proof}[Proof of Claim~\ref{claim.ZP:same behavior}]
For the case $r=1$ it is true if we recall the convention that $\bigcap_{k \in ()} \S_{P_k}(x) = \S$. Now let us consider an arbitrary call $r+1$ and assume that Claim~\ref{claim.ZP:same behavior} holds for~$r$. Hence, we can assume that there are calls of the form \WitnessZ{V, x, K, r, I} and \WitnessZ{V, K, r, I^\star, A, \S', B, u} for $\S' = \bigcap_{k \in [d] \setminus K} \S_{P_k}(x)$. We now show that both calls are executed essentially the same way. Formally, we prove the following claims by induction on~$t$, where~$\R'^{(r)}_t$, $\C'^{(r)}_t$, $x'^{(r, t)}$, $K'_\EQ$, and~$K'_\NEQ$ refer to the sets, vectors, and tuples from the call \WitnessZ{V, K, r, I^\star, A, \S', B, u}.

\begin{claim}
\label{claim.ZP:same behavior I}
$\R'^{(r)}_t \subseteq \R^{(r)}_t$ for all $t \in \SET{ t_r + 1, \ldots, d_r+1 }$.
\end{claim}

\begin{claim}
\label{claim.ZP:same behavior II}
$x'^{(r, t)} = x^{(r, t)}$ for all $t \in \SET{ t_r, \ldots, d_r }$ for which $\C^{(r)}_t \neq \emptyset$.
\end{claim}

\begin{claim}
\label{claim.ZP:same behavior III}
$x^{(r, s)} \in \R'^{(r)}_t$ for all $t \in \SET{ t_r + 1, \ldots, d_r+1 }$ and all $s \in \SET{ t_r, \ldots, t-1}$ for which $\C^{(r)}_s \neq \emptyset$.
\end{claim}

\begin{proof}[Proof of Claim~\ref{claim.ZP:same behavior I}, Claim~\ref{claim.ZP:same behavior II}, and Claim~\ref{claim.ZP:same behavior III}]
We apply a downward induction on~$t$. For the beginning, consider $t = d_r + 1$. We have
\begin{align*}
	\R'^{(r)}_{d_r+1} &= \S' \cap \bigcup_{s=t_r}^{d_r} \S_{I^\star}(a^{(r,s)})\ \text{for}\ \S' = \bigcap_{k \in [d] \setminus K} \S_{P_k}(x) \quad \text{and} \cr
	\R^{(r)}_{d_r+1} &= \S_I(x) \cap \bigcap_{k \in [d] \setminus K} \S_{P_k}(x) \DOT
\end{align*}
Due to the construction of the vectors~$x^{(r,s)}$ and the definition of~$a^{(r,s)}$,
\[
	a^{(r,s)}|_I
	= x^{(r,s)}|_I
	= x|_I
\]
for all $s = t_r, \ldots, d_r$ (see Lemma~\ref{lemma.ZP:certificate form}). The inclusion $I^\star \supseteq I$ yields
\[
	\S_{I^\star}(a^{(r,s)})
	\subseteq \S_I(a^{(r,s)})
	= \S_I(x)
\]
for all $s = t_r, \ldots, d_r$. Consequently, $\R'^{(r)}_{d_r+1} \subseteq \R^{(r)}_{d_r+1}$. For Claim~\ref{claim.ZP:same behavior II} nothing has to be shown in the initial step $t = d_r + 1$ of the induction. Let us consider Claim~\ref{claim.ZP:same behavior III} and let $s \in \SET{ t_r, \ldots, d_r }$ be an arbitrary index for which $\C^{(r)}_s \neq \emptyset$. Then
\[
  x^{(r,s)}
  \in \C^{(r)}_s
  \subseteq \R^{(r)}_{s+1}
  \subseteq \R^{(r)}_{d_r+1}
  \subseteq \bigcap_{k \in [d] \setminus K} \S_{P_k}(x) \DOT
\]
Furthermore, $x^{(r,s)} \in \S_J(a^{(r,s)})$ for every index tuple~$J$ due to the definition of~$a^{(r,s)}$. Consequently, $x^{(r,s)} \in \R'^{(r)}_{d_r+1}$.

For the induction step let $t \leq d_r$. Due to the occurence of the $\OKZ$-event and the fact that $B = B_V(x-u)$, we obtain
\begin{align*}
  \C'^{(r)}_t
  &= \bSET{ z \in \R'^{(r)}_{t+1} \WHERE V^{k_1 \ldots k_t} \cdot (z-u) \leq b|_{k_1 \ldots k_t} } \cap \S_{I^\star}(a^{(r,t)}) \cr
  &= \bSET{ z \in \R'^{(r)}_{t+1} \WHERE V^{k_1 \ldots k_t} z < V^{k_1 \ldots k_t} x } \cap \S_{I^\star}(a^{(r,t)}) \quad \text{and} \cr
  \C^{(r)}_t
  &= \bSET{ z \in \R^{(r)}_{t+1} \WHERE V^{k_1 \ldots k_t} z < V^{k_1 \ldots k_t} x } \DOT
\end{align*}
Since $\R'^{(r)}_{t+1} \subseteq \R^{(r)}_{t+1}$, we obtain $\C'^{(r)}_t \subseteq \C^{(r)}_t$. First, we consider the case $\C^{(r)}_t = \emptyset$ which implies $\C'^{(r)}_t = \emptyset$ and $t \geq t_r + 1$. The inequality follows from the fact that in iteration~$t_r$ the \tWitnessZ function is called (Line~\ref{l.ZP:restart} of Algorithm~\ref{algorithm:WitnessZ}) which implies $\C^{(r)}_{t_r} \neq \emptyset$. In this case,
\begin{align*}
	\R'^{(r)}_t &= \R'^{(r)}_{t+1} \cap \bigcup_{s=t_r}^{t-1} \S_{I^\star}(a^{(r,s)}) \quad \text{and} \cr
	\R^{(r)}_t &= \R^{(r)}_{t+1} \cap \S_I(x) \COMMA
\end{align*}
where~$I$ is the updated index tuple~$I$. Due to the construction of the vectors~$x^{(r,s)}$ (see Lemma~\ref{lemma.ZP:certificate form}) and the definition of the vectors~$a^{(r,s)}$, we know that
\[
	a^{(r,s)}|_I
	= x^{(r,s)}|_I
	= x|_I
\]
for all $s = t_r, \ldots, t-1$. As $I^\star \supseteq I$, this implies
\[
	\bigcup_{s=t_r}^{t-1} \S_{I^\star}(a^{(r,s)})
	\subseteq \bigcup_{s=t_r}^{t-1} \S_I(a^{(r,s)})
	= \S_I(x) \DOT
\]
As $\R'^{(r)}_{t+1} \subseteq \R^{(r)}_{t+1}$ in accordance with the induction hypothesis, Claim~\ref{claim.ZP:same behavior I}, we obtain $\R'^{(r)}_t \subseteq \R^{(r)}_t$. For Claim~\ref{claim.ZP:same behavior II} nothing has to be shown here. Let $s \in \SET{ t_r, \ldots, t-1 }$ be an arbitrary index for which $\C^{(r)}_s \neq \emptyset$. Then $x^{(r,s)} \in \R'^{(r)}_{t+1}$ by Claim~\ref{claim.ZP:same behavior III} of the induction hypothesis, $x^{(r,s)} \in \S_{I^\star}(a^{(r,s)})$, and consequently $x^{(r,s)} \in \R'^{(r)}_t$.

Let us finally consider the case $\C^{(r)}_t \neq \emptyset$. Claim~\ref{claim.ZP:same behavior III} of the induction hypothesis yields $x^{(r,t)} \in \R'^{(r)}_{t+1}$. Since $x^{(r,t)} \in \S_{I^\star} \big( a^{(r,t)} \big)$ and $V^{k_1 \ldots k_t} x^{(r,t)} < V^{k_1 \ldots k_t} x$, also $x^{(r,t)} \in \C'^{(r)}_t$ and, thus, $\C'^{(r)}_t \neq \emptyset$. Hence, $x'^{(r,t)} = x^{(r,t)}$ as $\C'^{(r)}_t \subseteq \C^{(r)}_t$. Claim~\ref{claim.ZP:same behavior I} and Claim~\ref{claim.ZP:same behavior III} have only to be validated if $t \geq t_r$, i.e., we can assume that $K_\EQ = ()$. Then
\[
	\R'^{(r)}_t
	= \bSET{ z \in \R'^{(r)}_{t+1} \WHERE V^{k_{t+1}} z < V^{k_{t+1}} x^{(r,t)} } \cap \bigcup_{s=t_r}^{t-1} \S_{I^\star}(a^{(r,s)})
\]
because $x'^{(r,t)} = x^{(r,t)}$, and
\[
	\R^{(r)}_t
	= \bSET{ z \in \R^{(r)}_{t+1} \WHERE V^{k_{t+1}} z < V^{k_{t+1}} x^{(r,t)} } \cap \S_I(x) \DOT
\]
With the same argument used for the case $\C^{(r)}_t = \emptyset$ we obtain
\[
	\R'^{(r)}_{t+1} \cap \bigcup_{s=t_r}^{t-1} \S_{I^\star}(a^{(r,s)})
	\subseteq \R^{(r)}_{t+1} \cap \S_I(x)
\]
and, hence, $\R'^{(r)}_t \subseteq \R^{(r)}_t$. Consider an arbitrary index $s \in \SET{ t_r, \ldots, t-1 }$ for which $\C^{(r)}_s \neq \emptyset$. Then
\[
	x^{(r,s)}
	\in \C^{(r)}_s
	\subseteq \R^{(r)}_{s+1}
	\subseteq \R^{(r)}_t \DOT
\]
In particular, $V^{k_{t+1}} x^{(r,s)} < V^{k_{t+1}} x^{(r,t)}$ (see Line~\ref{l.ZP:loser set}) and, hence, $V^{k_{t+1}} x^{(r,s)} < V^{k_{t+1}} x'^{(r,t)}$ because $x'^{(r,t)} = x^{(r,t)}$. Furthermore, $x^{(r,s)} \in \R'^{(r)}_{t+1}$ due to the induction hypothesis, Claim~\ref{claim.ZP:same behavior III}, and $x^{(r,s)} \in \S_{I^\star}(a^{(r,s)})$. Consequently, $x^{(r,s)} \in \R'^{(r)}_t$. 
\end{proof}

The induction step of the proof of Claim~\ref{claim.ZP:same behavior} follows from the three claims above: Let us assume that there is a call of the form \WitnessZ{V, x, \hat{K}, r+1, \hat{I}}. By the definition of~$t_r$, this call is executed in iteration~$t_r$ of call~$r$. Consequently, $x^{(r,t_r)} \in \C^{(r)}_{t_r} \neq \emptyset$. Applying Claim~\ref{claim.ZP:same behavior III} for $s = t_r$ and $t = t_r + 1$, we obtain $x^{(r,t_r)} \in \R'^{(r)}_{t_r+1}$. As $x^{(r,t_r)} \in \C^{(r)}_{t_r}$, the inequalities $V^{k_1 \ldots k_{t_r}} x^{(r,t_r)} < V^{k_1 \ldots k_{t_r}} x$ hold, which are equivalent to $V^{k_1 \ldots k_{t_r}} \cdot (x^{(r,t_r)} - u) \leq b|_{k_1 \ldots k_{t_r}}$ due to the occurence of the $\OKZ$-event. Furthermore, $x^{(r,t_r)} \in \S_{I^\star}(a^{(r,t_r)})$ by the definition of $a^{(r,t_r)}$. Hence, $x^{(r,t_r)} \in \C'^{(r)}_{t_r}$ (see Line~\ref{l.ZP.II:winner set}), i.e., $\C'^{(r)}_{t_r} \neq \emptyset$. Moreover, $x'^{(r, t_r)} = x^{(r, t_r)}$ in accordance with Claim~\ref{claim.ZP:same behavior II}. In iteration~$t_r$ of the call \WitnessZ{V, K, r, I^\star, A, \S', B, u} Line~\ref{l.ZP.II:equal indices} is reached. By the definition of the values~$r_k$ we obtain $K'_\EQ = K_\EQ$, and hence, $x'^{(r,t_r)}|_{P_k} = x^{(r,t_r)}|_{P_k} = x|_{P_k}$ for all $k \in K'_\EQ = K_\EQ$ due to the definition of~$K_\EQ$. In Line~\ref{l.ZP.II:restart}, there is a call of the form \WitnessZ{V, K'_\NEQ, r+1, I^\star, A, \S' \cap \bigcap_{k \in K'_\EQ} \S_{P_k}(x^{(r,t_r)}), B, u}. The correctness of Claim~\ref{claim.ZP:same behavior} follows because
\[
  K'_\NEQ
  = K \setminus K'_\EQ
  = K \setminus K_\EQ
  = K_\NEQ
  = \hat{K} \COMMA
\]
where~$\hat{K}$ is the parameter from the call \WitnessZ{V, x, \hat{K}, r+1, \hat{I}}, and because
\begin{align*}
  \S' \cap \bigcap_{k \in K'_\EQ} \S_{P_k}(x^{(r,t_r)}), B, u)
  &= \bigcap_{k \in [d] \setminus K} \S_{P_k}(x) \cap \bigcap_{k \in K'_\EQ} \S_{P_k}(x^{(r,t_r)}) \cr
  &= \bigcap_{k \in [d] \setminus K} \S_{P_k}(x) \cap \bigcap_{k \in K'_\EQ} \S_{P_k}(x) \cr
  &= \bigcap_{k \in [d] \setminus K'_\NEQ} \S_{P_k}(x) \DOT \qed
\end{align*}
\end{proof}

Let us finish the proof of Lemma~\ref{lemma.ZP:same behavior}. According to Lemma~\ref{lemma.ZP:Witness}, the call \WitnessZ{V, x, [d], 1, ()} returns the set $\SET{x} \neq \emptyset$. Hence, there is a call of the form \WitnessZ{V, x, K, r, I} for $K = ()$ (see Line~\ref{l.ZP:success}). Due to Claim~\ref{claim.ZP:same behavior}, there must be also a call of the form \WitnessZ{V, (), r, I^\star, A, \S', B, u} for $\S' = \bigcap_{k \in [d] \setminus ()} \S_{P_k}(x) = \SET{x}$.  This set is immediately returned in Line~\ref{l.ZP.II:success}. \qed
\end{proof}

By the choice of the vector~$u$ we can control which information about~$V$ has to be
known in order to be able to execute the call \WitnessZ{V, [d], 1, I^\star, A, \S, B, u}.
While Lemma~\ref{lemma.ZP:same behavior} is correct for every choice of $u \in \SET{ 0, \ldots, \K }^n$,
we have to choose~$u$ carefully in order for the following probabilistic analysis to work.
Later we will see that $u^\star = u^\star(I^\star, A)$, given by
\begin{equation}
\label{eq.ZP:shift vector}
u^\star_i = \begin{cases}
  |x_i-1| & \text{if}\ i \in (i^\star_1, \ldots, i^\star_d) \COMMA \cr
  x_i     & \text{if}\ i \in I^\star \setminus (i^\star_1, \ldots, i^\star_d) \COMMA \cr
  0       & \OTHERWISE \COMMA
\end{cases}
\end{equation}
is well-suited for our purpose. Recall that $i^\star_k \in P_k$ are the indices that have been added to~$I$ in the definition of the $V$-certificate to obtain~$I^\star$. Furthermore, $x_i$ is given by the last column of~$A$ for every index $i \in I^\star$ (cf.\ Lemma~\ref{lemma.ZP:Witness}). Hence, vector~$u^\star$ can be defined with the information that is contained in the $V$-certificate of~$x$; we do not have to know the vector~$x$ itself.

In the next step, we bound the number of Pareto-optimal solutions.
For this, consider the following function $\chi_{I^\star, A, B}(V)$, parameterized by an
arbitrary certificate $(I^\star, A) \in \CS$ and an arbitrary $\e$-box $B \in \B_\e$, that is
defined as follows: $\chi_{I^\star, A, B}(V) = 1$ if
\WitnessZ{V, [d], 1, I^\star, A, \S, B, u^\star(I^\star, A)} returns a set~$\SET{x'}$ such
that $B_V \big( x' - u^\star(I^\star, A) \big) = B$, and $\chi_{I^\star, A, B}(V) = 0$ otherwise.

\begin{corollary}
\label{corollary.ZP:realization bound}
Assume that the $\OKZ$-event occurs. Then the number $\PO(V)$ of Pareto-optimal solutions is at most
\[
	\sum \limits_{(I^\star, A) \in \CS} \sum \limits_{B \in \B_\e} \chi_{I^\star, A, B}(V) \DOT
\]
\end{corollary}

\begin{proof}
Let~$x$ be a Pareto-optimal solution, let $(I^\star, A)$ be the $V$-certificate of~$x$, and let $B = B_V \big( x - u^\star(I^\star, A) \big) \in \B_\e$. Due to Lemma~\ref{lemma.ZP:same behavior}, \WitnessZ{V, [d], 1, I^\star, A, \S, B, u^\star(I^\star, A)} returns~$\SET{x}$. Hence, $\chi_{I^\star, A, B}(V) = 1$. It remains to show that this function $x \mapsto (I^\star, A, B')$ defined within the previous lines is injective. Let~$x_1$ and~$x_2$ be distinct Pareto-optimal solutions and let $(I^\star_1, A_1)$ and $(I^\star_2, A_2)$ be the $V$-certificates of~$x_1$ and~$x_2$, respectively. If $(I^\star_1, A_1) \neq (I^\star_2, A_2)$, then~$x_1$ and~$x_2$ are mapped to distinct triplets. Otherwise, $u^\star(I^\star_1, A_1) = u^\star(I^\star_2, A_2)$ and, hence, $B_V \big( x_1 - u^\star(I^\star_1, A_1) \big) \neq B_V \big( x_2 - u^\star(I^\star_2, A_2) \big)$ because of the $\OKZ$-event. Consequently, also in this case~$x_1$ and~$x_2$ are mapped to distinct triplets.
\end{proof}

Corollary~\ref{corollary.ZP:realization bound} immediately implies a bound on
the expected number of Pareto-optimal solutions.

\begin{corollary}
\label{corollary.ZP:expectation bound}
The expected number of Pareto-optimal solutions is bounded by
\[
  \Ex[V]{\PO(V)} \leq \sum_{(I^\star, A) \in \CS} \sum_{B \in \B_\e} \Pr[V]{E_{I^\star, A, B}} + (\K+1)^n \cdot \Pr[V]{\overline{\OKZ(V)}}
\]
where $E_{I^\star, A, B}$ denotes the event that the call \WitnessZ{V, [d], 1, I^\star, A, \S, B, u^\star(I^\star, A)} returns a set~$\SET{x'}$ such that $B_V \big( x' - u^\star(I^\star, A) \big) = B$.
\end{corollary}

\begin{proof}
By applying Corollary~\ref{corollary.ZP:realization bound}, we obtain
\begin{align*}
&\Ex[V]{\PO(V)} \cr
  &= \Ex[V]{\PO(V) \,\big|\, \OKZ(V)} \cdot \Pr[V]{\OKZ(V)} + \Ex[V]{\PO(V) \,\big|\, \overline{\OKZ(V)}} \cdot \Pr[V]{\overline{\OKZ(V)}} \cr
  &\leq \Ex[V]{\left. \sum_{(I^\star, A) \in \CS} \sum_{B \in \B_\e} \chi_{I^\star, A, B}(V) \right| \OKZ(V)} \cdot \Pr[V]{\OKZ(V)} + |\S| \cdot \Pr[V]{\overline{\OKZ(V)}} \cr
  &\leq \Ex[V]{\sum_{(I^\star, A) \in \CS} \sum_{B \in \B_\e} \chi_{I^\star, A, B}(V)} + (\K+1)^n \cdot \Pr[V]{\overline{\OKZ(V)}} \cr
  &= \sum_{(I^\star, A) \in \CS} \sum_{B \in \B_\e} \Pr[V]{E_{I^\star, A, B}} + (\K+1)^n \cdot \Pr[V]{\overline{\OKZ(V)}} \DOT \qedhere
\end{align*}
\end{proof}

We will see that the first term of the sum in
Corollary~\ref{corollary.ZP:expectation bound} can be bounded independently
of~$\e$ and that the second term tends to~$0$ for $\e \to 0$.
First of all, we analyze the size of the certificate space.

\begin{lemma}
\label{lemma.ZP:certificate space}
The size of the certificate space is bounded by
\[
  |\CS| = (\K+1)^{(d^2+d)(d^3+d^2+d)} \cdot O \big( n^{d^3+d^2} \big) \DOT
\]
\end{lemma}

\begin{proof}
Consider the execution of the call \WitnessZ{V, x, [d], 1, ()} and let $r^\star = \max \SET{ r_1, \ldots, r_d }$
be the maximum of the values $r_1, \ldots, r_d$. Including this call with number~$1$,
there can be at most~$d$ calls to the \tWitnessZ function except for the call with
number~$r^\star+1$ that
terminates due to $d_{r^\star+1} = 0$. This is because in each of the other calls at least one
index $k \in [d]$ is removed from the tuple~$K$. Hence, $r_1, \ldots, r_d \in [d]$.
Consequently, there are at most~$d^d$ possibilities for these numbers.
In the $r\th$ call, the iteration number~$t_r$ is an element of
$[d_r]_0 \subseteq [d]_0$, and hence, there are at most $(d+1)^{r^\star} \leq (d+1)^d$ possibilities to choose
iteration numbers $t_1,\ldots,t_{r^\star}$. In each iteration, at most~$d$ indices~$i$ are added to the
tuple~$I$. As there are at most~$d$ calls and at most $d+1$ iterations each call,
tuple~$I$ contains at most $d^2 \cdot (d+1)$ indices in total. Hence,
there are at most
\[
	\sum_{k=1}^{d^2(d+1)} n^k \leq d^2 \cdot (d+1)\cdot n^{d^2\cdot(d+1)}
\]
choices for~$I$. Once~$I$ is fixed, also the
indices in $I^\star\setminus I$ are fixed because the indices added to~$I$ in Definition~\ref{definition.ZP:Certificate}
are determined by~$I$. The tuple~$I^\star$ contains $|I| + d \leq d^3+d^2+d$ indices. In each call~$r$, at most $d+1$
vectors~$x^{(r, t)}$ are generated. Hence, matrix~$A$ has at
most $d \cdot (d+1)$ columns and at most $d^3+d^2+d$ rows. This yields
the claimed bound
\begin{align*}
|\CS|
  &\leq d^d \cdot (d+1)^d \cdot d^2 \cdot (d+1) \cdot n^{d^2(d+1)} \cdot (\K+1)^{d(d+1) \cdot (d^3+d^2+d)} \cr
  &\leq 2^{d+1} \cdot d^{2d+3} \cdot n^{d^3+d^2} \cdot (\K+1)^{(d^2+d)(d^3+d^2+d)} \cr
  &= (\K+1)^{(d^2+d)(d^3+d^2+d)} \cdot O \big( n^{d^3+d^2} \big) \DOT \qedhere
\end{align*}
\end{proof}

In the next step we analyze how much information of~$V$ is required in order 
to perform the call \WitnessZ{V, [d], 1, I^\star, A, \S, B, u} for a fixed certificate $(I^\star, A)$. We will see that~$V$ does not need to be revealed completely and that some randomness remains even after the necessary information to perform the call has been revealed. This is the key observation for analyzing
the probability $\Pr[V]{E_{I^\star, A, B}}$. For this, let~$V$ be an arbitrary realization,
i.e., we do not condition on the $\OKZ$-event anymore. Let $I^\star_k = I^\star \cap P_k$ for $k \in [d]$. We
apply the principle of deferred decisions and assume that for every $k \in [d]$ the coefficients of~$V^k$ belonging to indices $i \notin I^\star_k$ are fixed arbitrarily. We denote this part of~$V^k$ by~$V^k_{\overline{I^\star_k}}$ and concentrate on the remaining part of~$V^k$ which we denote by~$V^k_{I^\star_k}$.

As in the model with non-zero-preserving perturbations, the call \WitnessZ{V, [d], 1, I^\star, A, \S, B, u} can be executed without the full knowledge of $V^1_{I^\star_1}, \ldots, V^d_{I^\star_d}$. We write the linear combinations of~$V^k_{I^\star_k}$ of the calls $r = 1, \ldots, r_k-1$ and of call~$r_k$ that suffice to be known into the following matrices~$P_k$ and~$Q_k$, respectively:
\begin{align*}
  P_k
  &= \left. \left[ p^{(1, d_1)}_k, \ldots, p^{(1, t_1)}_k, \ldots, p^{(r_k-1, d_{r_k-1})}_k, \ldots, p^{(r_k-1, t_{r_k-1})}_k \right] \right|_{I^\star_k}
  \quad \text{and} \cr
  Q_k
  &= \left. \left[ p^{(r_k, d_{r_k})}_k, \ldots, p^{(r_k, j_k)}_k, p^{(r_k, j_k-2)}_k - p^{(r_k, j_k-1)}_k, \ldots, p^{(r_k, t_{r_k})}_k - p^{(r_k, j_k-1)}_k \right] \right|_{I^\star_k}
\end{align*}
for $p^{(r, t)}_k = a^{(r, t)} \big|_{I^\star_k} - u|_{I^\star_k}$, where $a^{(r, t)}$ are the columns of matrix~$A$. The index $j_k \in [d_{r_k}]$ denotes the index for which $k_{j_k} = k$ in the $r_k\th$ call of the \tWitnessZ function, i.e., $V^k$ is the $j_k\th$ objective function in~$K$ in the call $r = r_k$ and it is not considered anymore in iterations $t < j_k$.

Note that the matrices $P_k = P_k(I^\star, A, u)$ and $Q_k = Q_k(I^\star, A, u)$ depend, among others, on the choice of~$u$. Furthermore, the indices~$j_k$ are determined by the certificate $(I^\star, A)$. To be precise, the indices $r_1, \ldots, r_d$, which are implicitely given by the certificate $(I^\star, A)$, contain the information which objectives are still under consideration in a certain call~$r$ of the \tWitnessZ function: These are all objectives~$V^k$ for which $r_k \geq r$.

Observe that matrix~$Q_k$ has
\[
	(d_{r_k}-j_k+1) + ((j_k-2)-t_{r_k}+1) = d_{r_k}-t_{r_k}
\]
columns and matrix~$P_k$ has
\[
  \sum_{r=1}^{r_{k-1}} (d_r-t_r+1)
  = \sum_{r=1}^{r_k} (d_r-t_r+1) - (d_{r_k}-t_{r_k}+1)
  = |I^\star_k| - (d_{r_k}-t_{r_k}+1)
\]
columns. The last equation is due to the fact that in each call $r < r_k$ in each iteration one index~$i_k$ is chosen. In call $r = r_k$ one index~$i_k$ is chosen in each iteration $t > t_{r_k}$. The equation follows since~$I^\star_k$ contains one index more than the number of indices~$i_k$ that are chosen during the execution of the \tWitnessZ function (see Definition~\ref{definition.ZP:Certificate}). Moreover, observe that all entries of~$P_k$ and~$Q_k$ are from $\SET{ -\K, \ldots, \K }$.

\begin{lemma}
\label{lemma.ZP:output determination}
Let $u \in \SET{ 0, \ldots, \K }^n$ be an arbitrary shift vector, let $(I^\star, A) \in \CS$ be an arbitrary certificate, and let~$U$ and~$W$ be two realizations for which $U^k_{\overline{I^\star_k}} = W^k_{\overline{I^\star_k}}$ and $U^k_{I^\star_k} \cdot q = W^k_{I^\star_k} \cdot q$ for all indices $k \in [d]$ and all columns~$q$ of one of the matrices $P_k(I^\star, A, u)$ and $Q_k(I^\star, A, u)$. Then for every $\e$-box $B \in \B_\e$ the calls \WitnessZ{U, [d], 1, I^\star, A, \S, B, u} and \WitnessZ{W, [d], 1, I^\star, A, \S, B, u} return the same result.
\end{lemma}

\begin{proof}
We fix an index $k \in [d]$ and analyze which information of $V^k_{I^\star_k}$ is required for the execution of the call \WitnessZ{V, [d], 1, I^\star, A, \S, B, u}. By the construction of index~$r_k$ we know that only in the calls $r = 1, \ldots, r_k$ information about the function~$V^k$ must be available as in all subsequent calls this function is not considered anymore.

Since in call~$r$ we consider only vectors that coincide with one of the vectors $x^{(r, t_r)}, \ldots, x^{(r, d_r)}$ (see Line~\ref{l.ZP.II:initial loser set}) in the indices $i \in I^\star$, it suffices to know all linear combinations
\[
	V^k_{I^\star_k} \cdot \left( a^{(r, t)} \big|_{I^\star_k} - u|_{I^\star_k} \right) \DOT
	= V^k_{I^\star_k} \cdot p^{(r, t)}_k \DOT
\]
For all call numbers $r = 1, \ldots, r_k-1$ and all iterations $t = t_r, \ldots, d_r$ in these calls the vector $p^{(r, t)}_k$ is a column of matrix~$P_k$.

It remains to analyze which information of $V^k_{I^\star_k}$ is required in the $r_k\th$ call of the \tWitnessZ function. First, we observe that $t_{r_k} \leq j_k - 1$. This is due to the fact that $x^{(r_k, t_{r_k})} \big|_{P_k} = x|_{P_k}$, i.e., $V^k x^{(r_k, t_{r_k})} = V^k x$, since this is the iteration when~$k$ is removed from tuple~$K$. On the other hand, $x^{(r_k, t_{r_k})} \in \C^{(r_k)}_{t_{r_k}}$, which means that $V^{k_s} x^{(r_k, t_{r_k})} < V^{k_s} x$ for all indices $s = 1, \ldots, t_{r_k}$, where $k_1, \ldots, k_{d_{r_k}}$ denote the indices tuple~$K$ consists of in call $r = r_k$. Hence, $t_{r_k} < j_k$ as $k = k_{j_k}$.

There are only three lines where information about~$V$ is required: Line~\ref{l.ZP.II:winner set}, Line~\ref{l.ZP.II:winner}, and Line~\ref{l.ZP.II:loser set}. For Line~\ref{l.ZP.II:winner set} the values
\[
	V^k_{I^\star_k} \cdot \left( a^{(r_k, t)} \big|_{I^\star_k} - u|_{I^\star_k} \right)
	= V^k_{I^\star_k} \cdot p^{(r_k, t)}_k
\]
from iteration $t = d_{r_k}$ down to iteration~$j_k$ are needed. The vectors $p^{(r_k,j_k)}, \ldots, p^{(r_k, d_{r_k})}$ are columns of matrix~$Q_k$. In Line~\ref{l.ZP.II:winner} no additional information about~$V^k_{I^\star}$ is required since all considered vectors agree on indices $i \in I^\star_k$ with each other. For Line~\ref{l.ZP.II:loser set} only in iteration $t = j_k-1$ the values
\[
	V^k_{I^\star_k} \cdot \left( a^{(r_k, s)} \big|_{I^\star_k} - a^{(r_k, {j_k-1})} \big|_{I^\star_k} \right)
\]
for $s = t_{r_k}, \ldots, j_k-2$ are required. Observe that the vectors
\begin{align*}
	p^{(r_k, s)}_k - p^{(r_k, j_k-1)}_k
	&= \left( a^{(r_k, s)} \big|_{I^\star_k} - u|_{I^\star_k} \right) - \left( a^{(r_k, j_k-1)} \big|_{I^\star_k} - u|_{I^\star_k} \right) \cr
	&= a^{(r_k, s)} \big|_{I^\star_k} - a^{(r_k, {j_k-1})} \big|_{I^\star_k}
\end{align*}
for $s = t_{r_k}, \ldots, j_k-2$ are columns of matrix~$Q_k$.

As~$U$ and~$W$ agree on all necessary information, both calls return the same result.
\end{proof}

In the remainder of this section we assume that~$V^k_{\overline{I^\star_k}}$ and the $\e$-box~$B$ are fixed. 
In accordance with Lemma~\ref{lemma.ZP:output determination}, the output of the call
\WitnessZ{V, [d], 1, I^\star, A, \S, B, u} is determined if the linear combinations of~$V^k_{I^\star_k}$ given by the columns of the
matrices~$P_k$ and~$Q_k$ are fixed arbitrarily, i.e., it does not depend on the remaining randomness in the coefficients. We are interested in the event $E_{I^\star, A, B}$, i.e., in the
event that the output is a set~$\SET{x'}$ such that $V^{1 \ldots d} \cdot \big( x' - u^\star(I^\star, A) \big) \in B$.
Since $(I^\star, A)$ is a $V$-certificate of~$x$ for some~$V$ and~$x$, the output is always of the form~$\SET{x'}$ due to
Lemma~\ref{lemma.ZP:same behavior}. Hence, event $E_{I^\star, A, B}$ occurs if and only if for all indices~$k$ the
relation $V^k_{I^\star_k} \cdot \big( x'-u^\star(I^\star, A) \big)\big|_{I^\star_k} \in C_k$ holds for some interval~$C_k$ of length~$\e$ that
depends on the linear combinations of~$V_{I^\star_{\ell}}$ given by~$P_\ell$ and~$Q_\ell$ for all indices $\ell \in [d]$.

\begin{lemma}
\label{lemma.ZP:linear independence}
For every fixed index $k \in [d]$ the columns of matrix $P_k \big( I^\star, A, u^\star(I^\star, A) \big)$, of matrix $Q_k \big( I^\star, A, u^\star(I^\star, A) \big)$, and the vector $p^{(r_k, t_{r_k})}_k$ are linearly independent.
\end{lemma}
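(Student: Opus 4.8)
The plan is to mirror the proof of Lemma~\ref{lemma:linear independence} from the non-zero-preserving case, but now accounting for the block structure introduced by the several calls of the \tWitnessZ\ function. Fix $k \in [d]$ and set $J = I^* \cap P_k = (j_1, \ldots, j_m)$. Recall that $m = |J|$ equals the total number of vectors $x^{(r,t)}$ constructed in calls $r = 1, \ldots, r_k$ (one per round, plus the single closing index $i^*_k = j_m$), and that $t_{r_k} < j_k$ as established in the paragraph preceding the lemma. Since every vector and matrix in the claim already lives on the coordinates $I^*_k = J$, it suffices to show that the columns of $P_k\big(I^*, A, u^*(I^*,A)\big)$, the columns of $Q_k\big(I^*, A, u^*(I^*,A)\big)$, and the vector $p^{(r_k,t_{r_k})}_k$, taken together, are linearly independent in $\RR^m$. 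First I would check that there are exactly $m$ of them: the columns of $P_k$ account for calls $1,\ldots,r_k-1$, and the $d'_{r_k}-t_{r_k}$ columns of $Q_k$ together with the extra vector $p^{(r_k,t_{r_k})}_k$ account for call $r_k$; hence linear independence is equivalent to spanning $\RR^m$.

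The next step is to identify a reference basis of $\RR^m$. Consider the matrix $M$ of Lemma~\ref{lemma.ZP:certificate form}, whose columns are $a^{(r,t)}|_J$ over all pairs $(r,t)$ in calls $1,\ldots,r_k$ in construction order; it is lower triangular with $\overline{y_{j_\ell}}$ in the first $m-1$ diagonal positions and $y_{j_m}$ in the last, where $y = x|_J$. From Equation~\eqref{eq.ZP:shift vector} and the fact that, since $(P_1,\ldots,P_d)$ partitions $[n]$, the index $i^*_k = j_m$ is the only one of $(i^*_1,\ldots,i^*_d)$ lying in $P_k$, the restriction $u^*|_J$ equals $(y_{j_1},\ldots,y_{j_{m-1}},\overline{y_{j_m}})$. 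Subtracting $u^*|_J$ from every column of $M$ thus yields a matrix $N$ whose columns are exactly the vectors $p^{(r,t)}_k$ over all $(r,t)$ in calls $1,\ldots,r_k$; since $a-a=0$ and $a-\overline{a}\in\{-1,1\}$ for $a\in\{0,1\}$, the matrix $N$ is lower triangular with every diagonal entry in $\{-1,1\}$, hence invertible, so its columns form a basis of $\RR^m$.

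The last step is to show that the collection in the lemma spans the same space, which I would do by transforming the $m\times m$ matrix $[\,P_k \mid Q_k \mid p^{(r_k,t_{r_k})}_k\,]$ into $N$ up to a column permutation via invertible column operations. The columns of $P_k$ are already columns of $N$ (those belonging to calls $1,\ldots,r_k-1$), and so are the first $d'_{r_k}-j_k+1$ columns of $Q_k$, namely $p^{(r_k,t)}_k$ for $t=j_k,\ldots,d'_{r_k}$, as well as the extra vector $p^{(r_k,t_{r_k})}_k$. For the remaining difference-type columns $p^{(r_k,t)}_k - p^{(r_k,j_k-1)}_k$ with $t=t_{r_k},\ldots,j_k-2$, one recovers $p^{(r_k,j_k-1)}_k$ from the extra vector and the $t=t_{r_k}$ difference column by a single invertible column operation, and then $p^{(r_k,t)}_k = \big(p^{(r_k,t)}_k - p^{(r_k,j_k-1)}_k\big) + p^{(r_k,j_k-1)}_k$ for each remaining $t$ by adding that recovered column; the degenerate case $t_{r_k}=j_k-1$ (no difference columns, and $p^{(r_k,t_{r_k})}_k = p^{(r_k,j_k-1)}_k$) is immediate. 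After these operations the columns are precisely those of $N$ in some order, so $[\,P_k \mid Q_k \mid p^{(r_k,t_{r_k})}_k\,]$ is invertible. The main obstacle I expect is bookkeeping rather than mathematics: reading off the triangular form of $N$ correctly from Lemma~\ref{lemma.ZP:certificate form} and Equation~\eqref{eq.ZP:shift vector}, and making the column count and the column operations line up so that nothing is double-counted and the $t=t_{r_k}$ difference column is exactly what the separately-listed vector $p^{(r_k,t_{r_k})}_k$ cancels.
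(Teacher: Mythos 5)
Your proposal is correct and follows essentially the same route as the paper: both arguments first combine Lemma~\ref{lemma.ZP:certificate form} with the definition of~$u^*$ in Equation~\eqref{eq.ZP:shift vector} to see that the full family of vectors~$p^{(r,t)}_k$ for calls~$r\le r_k$ forms a lower triangular matrix with $\pm1$ diagonal entries (your~$N$, the paper's~$\hat{Q}_k$), and then observe that the difference columns of~$Q_k$ together with~$p^{(r_k,t_{r_k})}_k$ can be ``undone''. The only difference is presentational: you finish with invertible column operations on a square matrix (after checking that the number of columns equals~$|I^*\cap P_k|$), while the paper finishes with a direct vanishing-coefficients computation using the same case distinction $t_{r_k}=j_k-1$ versus $t_{r_k}<j_k-1$.
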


\begin{proof}
Consider the square matrix~$\hat{Q}_k$ consisting of the vectors~$p^{(r, t)}_k$, for $r \in \SET{ 1, \ldots, r_k }$ and $t \in \SET{ t_r, \ldots, d_r }$. Matrix~$\hat{Q}_k$ can be obtained from the matrix~$M$ of Lemma~\ref{lemma.ZP:certificate form} by subtracting the vector $u^\star|_{I^\star_k}$ from all of its columns. Due to Lemma~\ref{lemma.ZP:certificate form} and due to the construction of $u^\star = u^\star(I^\star, A)$ (see Equation~\ref{eq.ZP:shift vector}) matrix~$\hat{Q}_k$ is a lower triangular matrix and the elements of the principal diagonal are from the set $\SET{ -\K, \ldots, \K } \setminus \SET{0}$. This is because $x_i - u^\star_i = x_i - |x_i-1| \in \SET{ -1, 1 }$ for $i = i^\star_k$ and $\overline{x_i} - u^\star_i = \overline{x_i} - x_i \neq 0$ for all $i \in I^\star_k \setminus (i^\star_k)$, where~$\overline{z}$ for $z \in \SET{ 0, \ldots, \K }$ represents an arbitrary value from $\SET{ 0, \ldots, \K }$ not equal to~$z$. Hence, the vectors $p^{(r, t)}_k$ are linearly independent.

The columns of~$Q_k$ and $p^{(r_k, t_{r_k})}_k$ are linear combinations of the vectors $p^{(r_k, t_{r_k})}_k, \ldots, p^{(r_k, d_{r_k})}_k$, whereas the columns of matrix~$P_k$ are the remaining columns of matrix~$\hat{Q}_k$. As the vectors~$p^{(r, t)}_k$ are linearly independent, it suffices to show that the columns of matrix~$Q_k$ and vector~$p^{(r_k, t_{r_k})}_k$ are linearly independent. For this, we consider an arbitrary linear combination of the columns of matrix~$Q_k$ and vector~$p^{(r_k, t_{r_k})}_k$ and show that it is~$0$ if and only if all coefficients are~$0$. For sake of simplicity, we drop the index~$k$ in the remainder of this proof and write~$r$, $j$, and~$p^{(r,t)}$ instead of~$r_k$, $j_k$, and~$p^{(r_k,t)}_k$, respectively.
\[
  \sum_{t=j}^{d_r} \lambda_t \cdot p^{(r, t)} + \sum_{t=t_r}^{j-2} \lambda_t \cdot \left( p^{(r, t)} - p^{(r, j-1)} \right) + \mu \cdot p^{(r, t_r)} = 0 \DOT
\]
If $t_r = j - 1$, then this equation is equivalent to
\[ 
  \sum_{t=t_r+1}^{d_r} \lambda_t \cdot p^{(r, t)} + \mu \cdot p^{(r, t_r)} = 0 \DOT
\]
Therefore, all coefficients are~$0$ due to the linear independence of the vectors~$p^{(r, t)}$. If $t_r < j-1$, which is the only case remaining due to previous observations, then the equation is equivalent to
\[
  \sum_{t=j}^{d_r} \lambda_t \cdot p^{(r, t)} + \sum_{t=t_r+1}^{j-2} \lambda_t \cdot p^{(r, t)} - \left( \sum_{t=t_r}^{j-2} \lambda_t \right) \cdot p^{(r, j-1)} + \left( \lambda_{t_r} + \mu \right) \cdot p^{(r, t_r)} = 0 \DOT
\]
The linear independence of the vectors~$p^{(r, t)}$ implies $\lambda_t = 0$ for $t \in \bSET{ t_r+1, \ldots, j-2 } \cup \bSET{ j, \ldots, d_r }$, $\sum_{t=t_r}^{j-2} \lambda_t = 0$, and $\lambda_{t_r} + \mu = 0$. Consequently, also $\lambda_{t_r} = 0$ and, thus, $\mu = 0$. This means that all coefficients are~$0$. In both cases, the linear independence of the columns of~$Q_k$ and the vector~$p^{(r, t_r)}$ follows.
\end{proof}

\begin{corollary}
\label{corollary.ZP:success unlikely}
Let $\gamma = d^3 + d^2 + d$. For an arbitrary certificate $(I^\star, A) \in \CS$ the probability of the event $E_{I^\star, A, B}$ is bounded by
\[
  \Pr[V]{E_{I^\star, A, B}} \leq (2\gamma\K)^{\gamma-d} \phi^\gamma \e^d
\]
and by
\[
  \Pr[V]{E_{I^\star, A, B}} \leq 2^d (\gamma\K)^{\gamma-d} \phi^d \e^d
\]
if all densities are quasiconcave.
\end{corollary}

\begin{proof}
For a fixed index $k \in [d]$ we write the columns of matrix~$P_k$, of matrix~$Q_k$, and vector~$p^{(r_k, t_{r_k})}_k$ into one
matrix $Q_k' \in \SET{ -\K, \ldots, \K }^{|I^\star_k| \times |I^\star_k|}$ (the number of columns is
\[
	\big( |I^\star_k| - (d_{r_k}-t_{r_k}+1) \big) + (d_{r_k}-t_{r_k}) + 1
	= |I^\star_k|
\]
due to previous observations) and consider the matrix
\[
  Q' = \begin{bmatrix}
    Q'_1   & \NULL  & \ldots & \NULL  \cr
    \NULL  & \ddots & \ddots & \vdots \cr
    \vdots & \ddots & \ddots & \NULL  \cr
    \NULL  & \ldots & \NULL  & Q'_d
  \end{bmatrix}
  \in \SET{ -\K, \ldots, \K }^{|I^\star| \times |I^\star|} \DOT
\]
This matrix has full rank due to Lemma~\ref{lemma.ZP:linear independence}. Now we permute the columns of~$Q'$ to obtain a matrix~$Q$ whose last~$d$ columns belong to the last column of one of the matrices~$Q_k$. This means that the last~$d$ columns are
\[
	\left( p^{(r_1,t_{r_1})}_1, \NULL^{|I^\star_2|}, \ldots, \NULL^{|I^\star_d|} \right), \ldots, \left( \NULL^{|I^\star_1|}, \ldots, \NULL^{|I^\star_{d-1}|}, p^{(r_d,t_{r_d})}_d \right) \DOT
\]
For all $k \in [d]$ and every index $i_k \in I^\star_k$ let $X_i = V^k_i$ be the $i\th$ coefficient of~$V^k$. Event $E_{I^\star, A, B}$ holds if and only if the~$d$ linear combinations of the variables~$X_i$ given by the last~$d$ columns of~$Q$ fall into a $d$-dimensional hypercube~$C$ depending on the linear combinations of the variables~$X_i$ given by the remaining columns. The claim follows by applying Theorem~\ref{theorem.Prob:enough randomness} for matrix $A = Q'^\T$ and due to the fact that $|I^\star| \leq \gamma$ (see proof of Lemma~\ref{lemma.ZP:certificate space}).
\end{proof}

\begin{proof}[Proof of Theorem~\ref{thm:MainZeroPreserving}]
We begin the proof by showing that the $\OKZ$-event is likely to happen.
For all indices $t \in [d]$ and all solutions $x, y \in \S$ for which $x|_{P_t} \neq y|_{P_t}$ the probability
that $\big| V^t x - V^t y \big| \leq \e$ is bounded by $2\phi\e$. To see this, choose one index $i \in P_t$ for which $x_i \neq y_i$ and apply the principle of deferred decisions by fixing all coefficients~$V^t_j$ for $j \neq i$ arbitrarily. Then the value~$V^t_i$ must fall into an interval of length $2\e/|x_i-y_i| \leq 2\e$. The probability for this is bounded by $2\phi\e$. A union bound over all indices $t \in [d]$ and over all pairs $(x, y) \in \S \times \S$ for which $x|_{P_t} \neq y|_{P_t}$ yields
\[
	\Pr[V]{\overline{\OKZ(V)}} \leq 2(\K+1)^{2n}d\phi\e \DOT
\]

Let $\gamma = d^3 + d^2 + d$. We set
\[
	s = \begin{cases}
		(2\gamma\K)^{\gamma-d} \phi^\gamma = \K^{\gamma-d} \cdot O(\phi^\gamma) & \text{for general density functions} \COMMA \cr
		2^d (\gamma \K)^{\gamma-d} \phi^d = \K^{\gamma-d} \cdot O(\phi^d)       &  \text{for quasiconcave density functions} \DOT
	\end{cases}
\]
Then we obtain
\begin{align*}
  \Ex[V]{\PO(V)}
  &\leq \sum_{(I^\star, A) \in \CS} \sum_{B \in \B_\e} \Pr[V]{E_{I^\star, A, B}} + (\K+1)^n \cdot \Pr[V]{\overline{\OKZ(V)}} \cr
  &\leq \sum_{(I^\star, A) \in \CS} \sum_{B \in \B_\e} s \cdot \e^d + (\K+1)^n \cdot 2(\K+1)^{2n}d\phi\e \cr
  &= |\CS| \cdot |\B_\e| \cdot s \cdot \e^d + (\K+1)^n \cdot 2(\K+1)^{2n}d\phi\e \cr
  &= (\K+1)^{(d^2+d)(d^3+d^2+d)} \cdot O \big( n^{d^3+d^2} \big) \cdot \left( \frac{2n\K}{\e} \right)^d \cdot s \cdot \e^d + 2(\K+1)^{3n}d\phi\e \cr
  &= \K^{(d^2+d)(d^3+d^2+d)+d} \cdot O \big( n^{d^3+d^2+d} \big) \cdot s + 2(\K+1)^{3n}d\phi\e \DOT
\end{align*}
The first inequality is due to Corollary~\ref{corollary.ZP:expectation bound}.
The second inequality is due to Corollary~\ref{corollary.ZP:success unlikely}. The third inequality stems from Lemma~\ref{lemma.ZP:certificate space}. Since this bound holds for any $\e > 0$ for which $1/\e$ is integral, it also holds for the limit $\e \to 0$. Hence, we obtain
\[
  \Ex[V]{\PO(V)}
  = \K^{(d^2+d)(d^3+d^2+d)+d} \cdot O \big( n^{d^3+d^2+d} \big) \cdot s \DOT
\]
Substituting~$s$ and~$\gamma$ by their definitions yields
\begin{align*}
  \Ex[V]{\PO(V)}
  &= \K^{(d^2+d)(d^3+d^2+d)+d} \cdot O \big( n^{d^3+d^2+d} \big) \cdot \K^{d^3+d^2+d-d} \cdot O \big( \phi^{d^3+d^2+d} \big) \cr
  &= \K^{(d^2+d+1)(d^3+d^2+d)} \cdot O \big( (n\phi)^{d^3+d^2+d} \big) \cr
  &\leq \K^{(d+1)^5} \cdot O \big( (n\phi)^{d^3+d^2+d} \big)
\end{align*}
for general densities and
\begin{align*}
  \Ex[V]{\PO(V)}
  &= \K^{(d^2+d)(d^3+d^2+d)+d} \cdot O \big( n^{d^3+d^2+d} \big) \cdot \K^{\gamma-d} \cdot O \big( \phi^d \big) \cr
  &= \K^{(d^2+d+1)(d^3+d^2+d)} \cdot O \big( n^{d^3+d^2+d} \phi^d \big) \cr
  &\leq \K^{(d+1)^5} \cdot O \big( n^{d^3+d^2+d} \phi^d \big)
\end{align*}
for quasiconcave densities.
\end{proof}

\pagebreak

\section{Some Probability Theory}

In this chapter we lay the probabilistic foundation of this article. We consider linearly independent linear combinations of independent random variables and show that they behave to some extent like independent random variables.

Let $X_1, \ldots, X_n$ be independent random variables with densities $f_i \colon [-1, 1] \to [0, \phi]$ for all $i \in [n]$ and let $A \in \SET{ -\K, \ldots, \K }^{m \times n}$ be an integer matrix. Furthermore, let $(Y_1, \ldots, Y_{m-k}, Z_1, \ldots, Z_k)^\T = A \cdot (X_1, \ldots, X_n)^\T$ be an $m$-dimensional random vector whose entries are linear combinations of the random variables $X_1, \ldots, X_n$, and let~$C$ be an arbitrary function that maps every tuple $(y_1, \ldots, y_{m-k}) \in \RR^{m-k}$ to a $k$-dimensional hypercube $C(y_1, \ldots, y_{m-k}) \subseteq \RR^k$ with side length~$\e$. We want to bound the probability that the random vector $(Z_1, \ldots, Z_k)^\T$ falls into the random hypercube $C(Y_1, \ldots, Y_{m-k})$ from above.

Before we state the main theorem of this section, let us discuss two special cases. One simple case is when the matrix~$A$ is of the form $A = [\ID{m}, \ZERO{m}{n-m}]$. In this case, the random variables $Y_i = X_i$, $i = 1, \ldots, m-k$, and $Z_j = X_{m-k+j}$, $j = 1, \ldots, k$, are independent. In order to bound the probability $\Pr{(Z_1, \ldots, Z_k) \in C(Y_1, \ldots, Y_{m-k})}$, we can apply the principle of deferred decisions and assume that the outcome of the variables $Y_1, \ldots, Y_{m-k}$ has been revealed by an adversary, say $Y_i = y_i$ for $i = 1, \ldots, m-k$. Hence, the hypercube $C(Y_1, \ldots, Y_{m-k}) = C(y_1, \ldots, y_{m-k})$ is fixed and not random anymore. However, we still have not revealed the outcome of the random variables $Z_1, \ldots, Z_k$. As the random variables $Y_1, \ldots, Y_{m-k}, Z_1, \ldots, Z_k$ are independent, we obtain
\begin{align*}
	&\Pr{(Z_1, \ldots, Z_k) \in C(Y_1, \ldots, Y_{m-k}) \,|\, (Y_1, \ldots, Y_{m-k}) = (y_1, \ldots, y_{m-k})} \cr
	&= \Pr{(Z_1, \ldots, Z_k) \in C(y_1, \ldots, y_{m-k})} \cr
	&= \Pr{(X_{m-k+1}, \ldots, X_m) \in C(y_1, \ldots, y_{m-k})} \cr
	&\leq (\phi \e)^k \DOT
\end{align*}
Observe that we used the simple structure of~$A$ twice: First, we obtained independence which is why the first equation holds. Second, the event $(Z_1, \ldots, Z_k) \in C(y_1, \ldots, y_{m-k})$ can be directly translated into the event $(X_{m-k+1}, \ldots, X_m) \in C(y_1, \ldots, y_{m-k})$ that only depends on the~$k$ random variables $X_{m-k+1}, \ldots, X_m$ and the hypercube $C(y_1, \ldots, y_{m-k})$. In general, all random variables $X_1, \ldots, X_n$ and a more complex set $\hat{C}(y_1, \ldots, y_{m-k})$ that depends on the last~$k$ rows of~$A$ have to be considered.

Now let us consider a second special case in which each of the last~$k$ rows $a_{m-k+1}^\T, \ldots, a_m^\T$ of~$A$ is a linear combination of the first $m-k$ rows $a_1^\T, \ldots, a_{m-k}^\T$. In particular, for every index $i = 1, \ldots, k$ we can write $a_{m-k+i}$ as
\[
	a_{m-k+i} = \sum_{j=1}^{m-k} \lambda^{(i)}_j \cdot a_j
\]
for appropriate coefficients $\lambda^{(i)}_j$. As the function~$C$ we consider the function that maps a tuple $(y_1, \ldots, y_{m-k})$ to the hypercube $[b_1, b_1+\e] \times \ldots \times [b_k, b_k+\e]$, where~$b_i$ is defined as
\[
	b_i = \sum_{j=1}^{m-k} \lambda^{(i)}_j \cdot y_j \DOT
\]
With this choice and the notation $X = (X_1, \ldots, X_n)^\T$ we obtain
\begin{align*}
	Z_i
	&= a_{m-k+i}^\T X
	= \sum_{j=1}^{m-k} \lambda^{(i)}_j \cdot a_j^\T X
	= \sum_{j=1}^{m-k} \lambda^{(i)}_j \cdot Y_j
\end{align*}
for all $i = 1, \ldots, k$. Hence, $(Z_1, \ldots, Z_k)$ falls into the hypercube $C(Y_1, \ldots, Y_{m-k})$ for every realization of the random variables $X_1, \ldots, X_n$. Consequently, $\Pr{(Z_1, \ldots, Z_k) \in C(Y_1, \ldots, Y_{m-k})} = 1$. The reason why we can define a function~$C$ with such a property is that the outcome of the random variables $Z_1, \ldots, Z_k$ is determined when the outcome of the random variables $Y_1, \ldots, Y_{m-k}$ has been revealed since the last~$k$ rows of~$A$ can be expressed as linear combinations of the first $m-k$ rows of~$A$. Hence, in this special case we cannot obtain any non-trivial bound.

The following theorem claims a non-trivial bound for the probability that the random vector $(Z_1, \ldots, Z_k)$ falls into the hypercube $C(Y_1, \ldots, Y_{m-k})$ for the case when the rows of matrix~$A$ are linearly independent. The first inequality of Theorem~\ref{theorem.Prob:enough randomness} has been shown for binary matrices by R\"oglin and Teng~\cite{RoeglinT09} (see Lemma~3.3). Their proof can be easily generalized to arbitrary integer matrices. For the sake of completeness we state it here.

\begin{theorem}
\label{theorem.Prob:enough randomness}
Let $m \leq n$ be integers and let $X_1, \ldots, X_n$ be independent random variables, each with a probability density function $f_i \colon [-1, 1] \to [0,\phi]$, let $A \in \SET{ -\K, \ldots, \K }^{m \times n}$ be a matrix of rank~$m$, let $k \in [m-1]$ be an integer, let
\[
	(Y_1, \ldots, Y_{m-k}, Z_1, \ldots, Z_k)^\T
	= A \cdot (X_1, \ldots, X_n)^\T
\]
be the linear combinations of $X_1, \ldots, X_n$ given by~$A$, and let~$C$ be a function mapping a tuple $(y_1, \ldots, y_{m-k}) \in \RR^{m-k}$ to a hypercube $C(y_1, \ldots, y_{m-k}) \subseteq \RR^k$ with side length~$\e$. Then
\[
  \Pr{(Z_1, \ldots, Z_k) \in C(Y_1, \ldots, Y_{m-k})}
  \leq (2m\K)^{m-k} \phi^m \e^k \DOT
\]
If all densities~$f_i$ are quasiconcave, then even the stronger bound
\[
  \Pr{(Z_1, \ldots, Z_k) \in C(Y_1, \ldots, Y_{m-k})}
  \leq 2^k (m\K)^{m-k} \phi^k \e^k
\]
holds.
\end{theorem}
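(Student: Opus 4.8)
The plan is to reduce the general full-rank matrix $A$ to a convenient normal form by row and column operations and then to integrate out the variables one at a time, using the density bound $\phi$ at most once per variable. First I would use that $A$ has full rank $m$ to select an $m\times m$ invertible submatrix; after permuting the columns (equivalently, relabeling the $X_i$) I may assume the first $m$ columns of $A$ are linearly independent. Since we only need an \emph{upper} bound on a probability of a linear image landing in a hypercube, I would next apply an invertible linear transformation on the left: replace $A$ by $T A$ for a suitable invertible $T\in\RR^{m\times m}$ so that the last $k$ rows of $TA$ restricted to the first $m$ columns form the identity block, i.e.\ $TA$ has the shape $\big[\begin{smallmatrix} * & * \\ \ID{k} & * \end{smallmatrix}\big]$ in its last $k$ rows (the remaining $m-k$ rows being whatever they are). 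The catch is that a general $T$ spoils the $\{-1,0,1\}$ structure, which matters because the proof must account for the fact that each $X_i$ appears with coefficient of absolute value at most $1$; I will keep track of how the coefficients grow, and this is where the factors $n$ (resp.\ $2n$) in the bound come from — bounding $|T|$-type quantities by $n!$-free crude estimates using that the entries started in $\{-1,0,1\}$ and the matrix is $m\times n$ with $m\le n$.

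Concretely, the core estimate I would isolate is the following one-dimensional fact: if $X$ has density bounded by $\phi$ on $[-1,1]$, then for any interval $J$ of length $\ell$ and any constant $a\neq 0$, $\Pr{aX\in J}\le \phi\ell/|a|$ — but more usefully, if $|a|\ge 1$ the bound is just $\phi\ell$, and even if $|a|$ is small we can instead \emph{not} use the density of this particular variable and absorb the loss into a counting factor. In the quasiconcave case the sharper fact is: if $f$ is quasiconcave on $[-1,1]$ with $\sup f\le\phi$, then $\Pr{X\in J}\le \min\{1,\phi\ell\}$ \emph{and} moreover for any $a$, $\Pr{aX\in J}\le 2\phi\ell/\max\{1,|a|\}$ or, better for our purposes, the probability that $aX$ lies in an interval of length $\ell$ is at most $2$ times the corresponding bound without the quasiconcavity assumption is not quite it — rather, quasiconcavity lets one bound $\Pr{aX+b\in J}$ by $\phi\ell$ times a factor independent of $b$ (the density of $aX+b$ is quasiconcave with the same sup, shifted, so its mass on any length-$\ell$ interval is at most $\phi\ell$ provided $|a|\ge 1$, and the "$2$" handles the general affine shift cleanly). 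I would prove the quasiconcave affine lemma first, then the general one, then proceed with the integration.

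The main argument then runs by conditioning. Fix realizations of $Y_1,\ldots,Y_{m-k}$; this determines the hypercube $C=C(Y_1,\ldots,Y_{m-k})=\prod_{j=1}^k(c_j,c_j+\e]$. I would integrate over the $n$ variables in an order dictated by the normal form: process the $m-k$ "free" pivot columns and the $n-m$ non-pivot columns first (these are the ones that get union-bounded / contribute the $n^{n-k}$ or $(2n)^{n-k}$ factor, because for each such variable we crudely bound its contribution by integrating its density to $1$ after possibly rescaling, losing a factor $n$ per variable in the non-quasiconcave case and $n$ per variable in the quasiconcave case from the coefficient growth), and then handle the remaining $k$ pivot variables whose coefficients in the $Z$-rows form the identity block. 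For those last $k$ variables, conditioned on everything else, each $Z_j$ is (an invertible affine function of) a single remaining $X_{i_j}$, so $\Pr{Z_j\in(c_j,c_j+\e]\mid\cdots}$ is the mass of a length-$\e$ interval under the density of a rescaled/shifted $X_{i_j}$; this is at most $\phi\e$ in the general case and at most $2\phi\e$ (by the quasiconcave affine lemma, with the coefficient normalized to have absolute value $\ge 1$ — here the $\{-1,0,1\}$ normalization pays off so no extra $n$ appears) in the quasiconcave case. Multiplying the $k$ such factors gives $\phi^k\e^k$ resp.\ $2^k\phi^k\e^k$, and multiplying by the $n-k$ resp.\ "free-variable" factors of size $n$ gives $n^{n-k}\phi^{n}\e^k$... — and here I need to be careful: in the \emph{non}-quasiconcave bound the statement has $\phi^n$, meaning every one of the $n$ variables contributes a $\phi$, so the bookkeeping is that the $n-k$ non-pivot/free variables each contribute a factor $2n\cdot\phi$ (the $2n$ from bounding $\int$ of a density of a $\le n$-term linear combination over an interval whose length we control, times $\phi$), while the $k$ pivot variables each contribute $\phi\e$; total $(2n)^{n-k}\phi^{n}\e^k$. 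In the quasiconcave case the non-pivot contributions are $n\phi$ each wait — re-reading the target, quasiconcave gives $2^k n^{n-k}\phi^k\e^k$, so there the $n-k$ free variables contribute a factor $n$ each \emph{without} a $\phi$ (using $\Pr{X\in J}\le 1$ trivially, losing only the coefficient-size factor $n$), and the $k$ pivot variables contribute $2\phi\e$ each.

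The hard part, and the step I'd flag as the main obstacle, is the bookkeeping of coefficient sizes under the left-multiplication by $T$: I must show that after clearing the last $k$ rows to an identity block, the coefficients appearing (a) in the pivot $Z$-rows are normalized to absolute value $\ge 1$ (so the quasiconcave affine lemma applies with no extra factor), and (b) in the $Y$-rows and on the non-pivot columns stay bounded by something like $n$ (so each crude integration loses only a factor $n$). This is essentially Cramer's rule applied to the $\{-1,0,1\}$ pivot block together with the observation that we are free to rescale each row of $A$ and each variable independently, choosing the rescalings to make the relevant coefficients exactly $\pm1$; the remaining entries are then ratios of subdeterminants of $\{-1,0,1\}$ matrices of order $\le n$, which one bounds not by Hadamard (too large) but by the observation that we never actually need them small — we only need the \emph{interval lengths} after rescaling to be controlled, and rescaling a length-$\e$ interval by a factor $\le 1$ only helps. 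I expect the cleanest route is in fact to follow the argument of R\"oglin--Teng~\cite{RoeglinT09} for the non-quasiconcave bound essentially verbatim (it already yields $(2n)^{n-k}\phi^n\e^k$), and to present the quasiconcave improvement as a drop-in replacement of their final one-dimensional estimate by the quasiconcave affine lemma, which is where all the savings of $\phi^{n-k}$ (replaced by a harmless constant/$n$ factor) come from.
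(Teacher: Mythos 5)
There is a genuine gap, and it sits exactly where you flagged your ``main obstacle'' and, more importantly, in where you let quasiconcavity act. First, the reduction itself: the only row operations compatible with the event $(Z_1,\ldots,Z_k)\in C(Y_1,\ldots,Y_{m-k})$ are invertible maps among the $Y$-rows, adding $Y$-rows to $Z$-rows (this just translates the cube, since $C$ is an arbitrary function of $Y$), and column permutations; you may not add $Z$-rows to $Y$-rows (then $C$ would no longer be a function of the new $Y$'s) and you may not mix the $Z$-rows among themselves without turning the side-$\e$ hypercube into a parallelepiped. Under these constraints you cannot in general bring the $Z$-rows to an identity block supported on $k$ pivot columns, and you cannot make the $Y$-rows vanish on those columns either; hence the key step ``conditioned on everything else, each $Z_j$ is an affine function of a single remaining $X_{i_j}$ landing in a \emph{fixed} length-$\e$ interval'' is not available: the cube $C(Y)$ keeps moving with the very variables you are integrating last. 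Conditioning on $Y$ directly does not help, because the conditional density of the remaining coordinates given $Y=y$ is a ratio involving $f_Y(y)$ and is not bounded by $\phi$. This is precisely why the paper works with the joint density after the change of variables $(y,z)=Ax$, reducing everything to bounding $\int_y \max_z f_X\big(A^{-1}(y,z)\big)\,\d y$ (the general bound then follows from $\max f_X\le\phi^n$ and $y\in[-n,n]^{n-k}$).

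Second, and decisively, your use of quasiconcavity is misplaced. Your ``quasiconcave affine lemma'' ($\Pr{aX+b\in J}\le\phi\,|J|$ for $|a|\ge 1$) holds for \emph{every} $\phi$-bounded density and buys nothing; the real issue is avoiding a factor $\phi$ for each of the $n-k$ integrated-out variables, and your claim that those variables ``contribute a factor $n$ each without a $\phi$, using $\Pr{X\in J}\le 1$'' is exactly the unproved step. In the paper this saving is the whole content of the quasiconcave case: each quasiconcave density is approximated from above by a staircase, i.e.\ a sum of stacked rectangular functions whose heights sum to at most $\phi$ (this is where quasiconcavity is indispensable --- without it that sum is unbounded) and whose total mass is at most $1$; for a product of rectangles one bounds $\int_y\max_z f\big(A^{-1}(y,z)\big)\d y$ geometrically, by covering the projection of the box $Q$ by the projections of its $2^k\binom{n}{k}$ $(n-k)$-dimensional faces (via a basic-feasible-solution argument for the polytope $\{Mx'=y',\,x'+s'=b',\,x',s'\ge0\}$) and bounding each face's projected volume by $(n-k)!$ times the product of the untouched side lengths. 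Summing the staircase then charges $\phi$ only to $k$ coordinates and total mass $\approx 1$ to the rest, giving $2^k n^{n-k}\phi^k\e^k$. Nothing in your outline substitutes for this mechanism, so as written the quasiconcave bound --- the part of the theorem the paper actually needs --- does not follow. (Deferring the general bound to R\"oglin--Teng is acceptable, though the paper reproves it in two lines from the change-of-variables identity.)
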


Theorem~\ref{theorem.Prob:enough randomness} states that, for quasiconcave densities, linearly independent linear combinations of independent random variables almost behave like independent random variables when considering the event $(Z_1, \ldots, Z_k) \in C(Y_1, \ldots, Y_{m-k})$ with respect to~$\phi$ and~$\e$: The bound in Theorem~\ref{theorem.Prob:enough randomness} deviates from the bound derived for the special case $Y_i = X_i$ for $i = 1, \ldots, m-k$ and $Z_j = X_{m-k+j}$ for $j = 1, \ldots, k$ (see beginning of this section) only by a factor of $2^k (m\K)^{m-k}$.

\begin{proof}
First of all we show that we can assume w.l.o.g.\ that $n = m$. Otherwise, we can choose~$m$ indices $i_1 < \ldots < i_m \in [n]$ for which the matrix $A' = [a_{i_1}, \ldots, a_{i_m}]$ is a full-rank square submatrix of~$A$. For the sake of simplicity let us assume that $i_k = k$ for $k = 1, \ldots, m$. We apply the principle of deferred decisions and assume that $X_{m+1}, \ldots, X_n$ are fixed arbitrarily to some values $x_{m+1}, \ldots, x_n$.

Let $A'' = [a_{m+1}, \ldots, a_n]$, $A''_1 = A''|_{1, \ldots, m-k}$, $A''_2 = A''|_{m-k+1, \ldots, m}$, and $x = (x_{m+1}, \ldots, x_n)$. Let us further introduce the random vector
\[
	(Y'_1, \ldots, Y'_{m-k}, Z'_1, \ldots, Z'_k)
	= A' \cdot (X_1, \ldots, X_m)
\]
and the function
\[
	C'(Y'_1, \ldots, Y'_{m-k})
	= C((Y'_1, \ldots, Y'_{m-k}) + A''_1 \cdot x) - A''_2 \cdot x \DOT
\]
Observing that
\begin{align*}
	(Y'_1, \ldots, Y'_{m-k}) + A''_1 \cdot x
	&= (Y_1, \ldots, Y_{m-k}) \quad \text{and} \cr
	(Z'_1, \ldots, Z'_k) + A''_2 \cdot x
	&= (Z_1, \ldots, Z_k) \COMMA
\end{align*}
we obtain
\begin{align*}
  (Z_1, \ldots, Z_k) \in C(Y_1, \ldots, Y_{m-k})
  &\iff (Z'_1, \ldots, Z'_k) \in C(Y_1, \ldots, Y_{m-k}) - A''_2 \cdot x \cr
  &\iff (Z'_1, \ldots, Z'_k) \in C'(Y'_1, \ldots, Y'_{m-k}) \DOT
\end{align*}
The probability of the last event can be bounded by applying Theorem~\ref{theorem.Prob:enough randomness} for the $m \times m$-matrix~$A'$.

In the remainder of this proof we assume that $n = m$. As matrix~$A$ is a full-rank square matrix, its inverse~$A^{-1}$ exists and we can write
\begin{align*}
  \Pr{(Z_1, \ldots, Z_k) \in C(Y_1, \ldots, Y_{n-k})}
  &= \int_{y \in \RR^{n-k}} \int_{z \in C(y)} f_{Y,Z}(y, z) \;\! \d z \;\! \d y \cr
  &= \int_{y \in \RR^{n-k}} \int_{z \in C(y)} |\det(A^{-1})| \cdot f_X \big( A^{-1} \cdot (y, z) \big) \d z \;\! \d y \cr
  &\leq \int_{y \in \RR^{n-k}} \int_{z \in C(y)} f_X \big( A^{-1} \cdot (y, z) \big) \d z \;\! \d y \cr
  &\leq \e^k \cdot \int_{y \in \RR^{n-k}} \max_{z \in \RR^k} f_X \big( A^{-1} \cdot (y, z) \big) \d y \COMMA
\end{align*}
where $f_{Y, Z}$ denotes the common density of the variables $Y_1, \ldots, Y_{n-k}, Z_1, \ldots, Z_k$ and $f_X = \prod_{i=1}^n f_i$ denotes the common density of the variables $X_1, \ldots, X_n$. The second equality is due to a change of variables, the first inequality stems from the fact that $|\det(A^{-1})| = |1/\det A| \leq 1$ since~$A$ is an integer matrix.

In general, we can bound the integral in the formula above by
\begin{align*}
  \int_{y \in \RR^{n-k}} \max_{z \in \RR^k} f_X \big( A^{-1} \cdot (y, z) \big) \d y
  &\leq \int_{y \in [-n\K, n\K]^{n-k}} \max_{z \in \RR^k} f_X \big( A^{-1} \cdot (y, z) \big) \d y \cr
  &\leq \int_{y \in [-n\K, n\K]^{n-k}} \phi^n \d y \cr
  &= (2n\K)^{n-k} \phi^n \cr
  &= (2m\K)^{m-k} \phi^m \COMMA
\end{align*}
where the first inequality is due to the fact that all variables~$Y_i$ can only take values in the interval $[-n\K, n\K]$ as all entries of matrix~$A$ are from $\SET{ -\K, \ldots, \K }$ and as all variables~$X_j$ can only take values in the interval $[-1,1]$.

To prove the statement about quasiconcave functions we first consider arbitrary rectangular functions, i.e., functions that are
constant on a given interval, and~$0$ otherwise. This will be the main part of our analysis. Afterwards, we analyze sums of rectangular
functions and, finally, we show that quasiconcave functions can be approximated by such sums.

\begin{lemma}
\label{lemma.Prob:rectangular functions}
For $i \in [n]$ let $\phi_i \geq 0$, let $I_i \subseteq \RR$ be an interval of length~$\ell_i$, and let $f_i \colon \RR \to \RR$ be the function
\[
  f_i(x) = \begin{cases}
    \phi_i & \tIF x \in I_i,\cr
    0      & \OTHERWISE.
  \end{cases}
\]
Moreover, let $f \colon \RR^n \to \RR$ be the function $f(x_1, \ldots, x_n) = \prod_{i=1}^n f_i(x_i)$ and let $A \in \SET{ -\K, \ldots, \K }^{n \times n}$ be an invertible matrix. Then
\[
  \int_{y \in \RR^{n-k}} \max_{z \in \RR^k} f \big( A^{-1} \cdot (y, z) \big) \d y
  \leq 2^k \cdot (n-k)! \cdot \K^{n-k} \cdot \chi \cdot \sum_I \prod_{i \notin I} \ell_i
\]
where $\chi = \prod_{i=1}^n \phi_i$ and where the sum runs over all tuples $I = (i_1, \ldots, i_k)$ for which $1 \leq i_1 < \ldots < i_k \leq n$.
\end{lemma}

\begin{proof}
Function~$f$ takes the value~$\chi$ on the $n$-dimensional box $Q = \prod_{i=1}^n I_i$ and is~$0$ otherwise. Hence,
\[
  \int_{y \in \RR^{n-k}} \max_{z \in \RR^k} f \big( A^{-1} \cdot (y, z) \big) \d y
  = \chi \cdot \VOL(Q')
\]
for
\begin{align*}
  Q'
  &= \SET{ y \in \RR^{n-k} \WHERE \exists z \in \RR^k \ \text{such that} \ A^{-1} \cdot (y, z) \in Q } \cr
  &= \SET{ y \in \RR^{n-k} \WHERE \exists z \in \RR^k \exists x \in Q \ \text{such that} \ (y, z) = A \cdot x } \cr
  &= (P \cdot A) (Q) \COMMA
\end{align*}
where $P \DEF \big[ \ID{n-k}, \ZERO{(n-k)}{k} \big]$ is the projection matrix that removes the last~$k$ entries from a vector of
length~$n$. In the remainder of this proof we bound the volume of $M(Q)$ where $M \DEF P \cdot A \in \SET{ -\K, \ldots, \K }^{(n-k) \times n}$. Let $a_i \FED c_i^{0}$ and $b_i \FED c_i^{1}$ be the left and the right bound of interval~$I_i$, respectively. For an index tuple $I = (i_1, \ldots, i_k)$, $1 \leq i_1 < \ldots < i_k \leq n$, and a bit tuple $J = (j_1, \ldots, j_k) \in \SET{ 0, 1 }^k$, let
\[
  F_I^J = \prod_{i=1}^n \begin{cases}
    \bSET{ c_{i_t}^{j_t} } & \tIF i = i_t \in I \COMMA \cr
    I_i                    & \tIF i \notin I \COMMA
  \end{cases}
\]
be one of the $2^k \cdot \binom{n}{k}$ $(n-k)$-dimensional faces of~$Q$. We show that $M(Q) \subseteq \bigcup_I \bigcup_J M \big( F_I^J \big)$. Let $y \in M(Q)$, i.e., there is a vector $x \in Q$ such that $y = M \cdot x$. Now, consider the polytope
\[
  R = \SET{ (x', s') \in \RR^n \times \RR^n \WHERE M \cdot x' = y', \ x' + s' = b', \ \text{and} \ x', s' \geq 0 } \COMMA
\]
where $y' = y - M \cdot a$ and $b' = b - a$ for $a = (a_1, \ldots, a_n)$ and $b = (b_1, \ldots, b_n)$. This polytope is bounded and non-empty because $(x - a, b - x) \in R$. Consequently, there exists a basic feasible solution $(x^\star, s^\star)$. 
As there are~$2n$ variables and $2n-k$ constraints,
this solution has at least~$k$ zero-entries, i.e., there are indices $1 \leq i_1 < \ldots < i_k \leq n$ such that either $x^\star_{i_t} = 0$ (in that case set $j_t = 0$) or $x^\star_{i_t} = b'_{i_t}$ (in that case set $j_t = 1$) for all $t \in [k]$. Now, consider the vector $\hat{x} = x^\star + a \in [0, b'] + a = Q$. We obtain $M \cdot \hat{x} = y$ and $\hat{x}_{i_t} = c_{i_t}^{j_t}$ for all $t \in [k]$. Hence, $x \in F_I^J$ for $I = (i_1, \ldots, i_k)$ and $J = (j_1, \ldots, j_k)$, and thus $y \in M \big( F_I^J \big)$.

Due to this observation we can bound the volume of $M(Q)$ by $\sum_I \sum_J \VOL \big( M \big( F_I^J \big) \big)$. 
It remains to show how to bound the volume $\VOL \big( M \big( F_I^J \big) \big)$. For the sake of simplicity we only consider $I = (n-k+1, \ldots, n)$ in the following analysis. Let $\phi^J \colon \RR^{n-k} \to F_I^J$ be the function $\phi^J(x) = T \cdot x + v^J$, where $T = \big[ \ID{n-k}, \ZERO{(n-k)}{k} \big]^\T$ and $v^J = \big( 0, \ldots, 0, c_{n-k+1}^{j_1}, \ldots, c_n^{j_k} \big)$. Using function~$\phi^J$ is the canonical way to describe the affine subspace defined by face~$F_I^J$: it adds the fixed coordinates of~$F_I^J$ to a given vector of length $n-k$. Hence, function~$\phi^J$, restricted to the domain $F' = \prod_{i=1}^{n-k} I_i$, is bijective. With $\psi = M \circ \phi^J$ we obtain
\begin{align*}
  \VOL \big( M \big( F_I^J \big) \big)
  &= \int_{\psi(F')} 1 \;\! \d x
  = \int_{F'} |\det \D \psi(x)| \;\! \d x
  = \int_{F'} |\det(M \cdot T)| \;\! \d x \cr
  &= |\det(M \cdot T)| \cdot \VOL(F')
  = |\det(M \cdot T)| \cdot \prod_{i=1}^{n-k} \ell_i \DOT
\end{align*}
In general, the second equality only holds if~$\psi$ is injective. If~$\psi$ is not injective, then $M(F_I^J) = \psi(F')$ is not full-dimensional, i.e., $\VOL(M(F_I^J)) = 0$, and $\det(M \cdot T) = 0$ since~$\psi$ is affine linear. Hence, the second equality also holds in the case when~$\psi$ is not injective.

Matrix $M \cdot T = P \cdot A \cdot T$ is an $(n-k) \times (n-k)$-submatrix of~$A$. Thus, $|\det(M \cdot T)| \leq (n-k)! \cdot \K^{n-k}$, and we obtain the bound
\begin{align*}
  \int_{y \in \RR^{n-k}} \max_{z \in \RR^k} f \big( A^{-1} \cdot (y, z) \big) \d y
  &= \chi \cdot \VOL(M(Q))
  \leq \chi \cdot \sum_I \sum_J \VOL \big( M\big( F_I^J \big) \big) \cr
  &\leq \chi \cdot \sum_I \sum_J (n-k)! \cdot \K^{n-k} \cdot \prod_{i \notin I} \ell_i \cr
  &= 2^k \cdot (n-k)! \cdot \K^{n-k} \cdot \chi \cdot \sum_I \prod_{i \notin I} \ell_i \DOT \qedhere
\end{align*}
\end{proof}

In the next step we generalize the statement of Lemma~\ref{lemma.Prob:rectangular functions} to sums of rectangular functions.

\begin{corollary}
\label{corollary.Prob:staircases}
Let $N_1, \ldots, N_n$ be positive integers, let $\phi_{i,k} \geq 0$ be a non-negative real, let $I_{i,k} \subseteq \RR$ be an interval of length~$\ell_{i,k}$, and let $f_{i,k} \colon \RR \to \RR$ be the function
\[
  f_{i,k}(x) = \begin{cases}
    \phi_{i,k} & \tIF x \in I_{i,k} \COMMA \cr
    0          & \OTHERWISE \COMMA
  \end{cases}
\]
$i \in [n]$, $k \in [N_i]$. Furthermore, let $f_i \colon \RR \to \RR$ be the function $f_i = \sum_{k=1}^{N_i} f_{i,k}$, let $f \colon \RR^n \to \RR$ be the function $f(x_1, \ldots, x_n) = \prod_{i=1}^n f_i(x_i)$, and let $A \in \SET{ -\K, \ldots, \K }^{n \times n}$ be an invertible matrix. Then
\[
  \int_{y \in \RR^{n-k}} \max_{z \in \RR^k} f \big( A^{-1} \cdot (y, z) \big) \d y
  \leq 2^k \cdot (n-k)! \cdot \K^{n-k} \cdot \sum_I \left( \left( \prod_{i \notin I} \sigma_i \right) \cdot \left( \prod_{i \in I} \chi_i \right) \right)
\]
where $\sigma_i = \sum_{k=1}^{N_i} \phi_{i,k} \cdot \ell_{i,k}$ and $\chi_i = \sum_{k=1}^{N_i} \phi_{i,k}$ and where the first sum runs over all tuples $I = (i_1, \ldots, i_k)$ for which $1 \leq i_1 < \ldots < i_k \leq n$.
\end{corollary}

\begin{proof}
For indices $k_i \in [N_i]$ let $f_{k_1, \ldots, k_n}(x_1, \ldots, x_n) = \prod_{i=1}^n f_{i,k_i}(x_i)$. This function is of the form assumed in Lemma~\ref{lemma.Prob:rectangular functions} and takes only values~$0$ and $\chi_{k_1, \ldots, k_n} = \prod_{i=1}^n \phi_{i,k_i}$. We can write function~$f$ as
\begin{align*}
  f(x_1, \ldots, x_n)
  &= \prod_{i=1}^n f_i(x_i)
  = \prod_{i=1}^n \sum_{k_i=1}^{N_i} f_{i,k_i}(x_i)
  = \sum_{k_1=1}^{N_1} \ldots \sum_{k_n=1}^{N_n} \prod_{i=1}^n f_{i,k_i}(x_i) \cr
  &= \sum_{k_1=1}^{N_1} \ldots \sum_{k_n=1}^{N_n} f_{k_1, \ldots, k_n}(x_1, \ldots, x_n) \DOT
\end{align*}
For the sake of simplicity we write $\sum_{k_i}$ instead of $\sum_{k_i=1}^{N_i}$ and $\sum_{k_i \colon i \in (i_1, \ldots, i_\ell)}$ instead of $\sum_{k_{i_1}} \ldots \sum_{k_{i_\ell}}$. We can bound the integral as follows:
\begin{align*}
  \int_{y \in \RR^{n-k}} &\max_{z \in \RR^k} f \big( A^{-1} \cdot (y, z) \big) \d y
  = \int_{y \in \RR^{n-k}} \max_{z \in \RR^k} \sum_{k_i \colon i \in [n]} f_{k_1, \ldots, k_n} \big( A^{-1} \cdot (y, z) \big) \d y \cr
  &\leq \int_{y \in \RR^{n-k}} \sum_{k_i \colon i \in [n]} \max_{z \in \RR^k} f_{k_1, \ldots, k_n} \big( A^{-1} \cdot (y, z) \big) \d y \cr
  &= \sum_{k_i \colon i \in [n]} \int_{y \in \RR^{n-k}} \max_{z \in \RR^k} f_{k_1, \ldots, k_n} \big( A^{-1} \cdot (y, z) \big) \d y \cr
  &\leq \sum_{k_i \colon i \in [n]} \left( 2^k \cdot (n-k)! \cdot \K^{n-k} \cdot \chi_{k_1, \ldots, k_n} \cdot \sum_I \prod_{i \notin I} \ell_{i,k_i} \right) \cr
  &= 2^k \cdot (n-k)! \cdot \K^{n-k} \cdot \sum_{k_i \colon i \in [n]} \left( \prod_{i \in [n]} \phi_{i,k_i} \cdot \sum_I \prod_{i \notin I} \ell_{i,k_i} \right) \cr
  &= 2^k \cdot (n-k)! \cdot \K^{n-k} \cdot \sum_I \sum_{k_i \colon i \in [n]} \left( \prod_{i \in [n]} \phi_{i,k_i} \cdot \prod_{i \notin I} \ell_{i,k_i} \right) \COMMA
\end{align*}
where the second inequality is due to Lemma~\ref{lemma.Prob:rectangular functions}. Now,
\begin{align*}
  \sum_{k_i \colon i \in [n]} \left( \prod_{i \in [n]} \phi_{i,k_i} \cdot \prod_{i \notin I} \ell_{i,k_i} \right)
  &= \sum_{k_i \colon i \in [n]} \left( \prod_{i \in I} \phi_{i,k_i} \cdot \prod_{i \notin I} (\phi_{i,k_i} \cdot \ell_{i,k_i}) \right) \cr
  &= \left( \sum_{k_i \colon i \in I} \prod_{i \in I} \phi_{i,k_i} \right) \cdot \left( \sum_{k_i \colon i \notin I} \prod_{i \notin I} (\phi_{i,k_i} \cdot \ell_{i,k_i}) \right) \cr
  &= \left( \prod_{i \in I} \sum_{k_i} \phi_{i,k_i} \right) \cdot \left( \prod_{i \notin I} \sum_{k_i} (\phi_{i,k_i} \cdot \ell_{i,k_i}) \right) \cr
  &= \left( \prod_{i \in I} \chi_i \right) \cdot \left( \prod_{i \notin I} \sigma_i \right) \COMMA
\end{align*}
which completes the proof of Corollary~\ref{corollary.Prob:staircases}.
\end{proof}

To finish the proof of Theorem~\ref{theorem.Prob:enough randomness} we round the probability densities~$f_i$ as follows: For an arbitrarily small positive real~$\delta$ let $g_i \DEF \ceil{ f_i/\delta } \cdot \delta$, i.e., we round~$f_i$ up to the next integral multiple of~$\delta$. As the densities~$f_i$ are quasiconcave, there is a decomposition of~$g_i$ such that $g_i = \sum_{k=1}^{N_i} f_{i,k}$ where
\[
  f_{i,k} = \left\{ \begin{array}{c@{\quad:\quad}l}
    \phi_{i,k} & x \in I_{i,k} \COMMA \cr
    0 & \OTHERWISE \COMMA
  \end{array} \right.
  \qquad \text{and} \qquad
  \chi_i \DEF \sum_{k=1}^{N_i} \phi_{i,k} = \max_{x \in [-1,1]} g_i(x) \COMMA
\]
where~$I_{i,k}$ are intervals of length~$\ell_{i,k}$ and~$\phi_{i,k}$ are positive reals. The second property is the interesting one and stems from the quasiconcaveness of~$f_i$. Informally speaking the two-dimensional shape bounded by
the horizontal axis and the graph of~$g_i$ is a stack of rectangles aligned with axes (see Figure~\ref{fig:quasiconcave}). Therefore, the sum~$\chi_i$ of the rectangles' heights which appears in the formula of Corollary~\ref{corollary.Prob:staircases} is approximately~$\phi$. Without the quasiconcaveness~$\chi_i$ might be unbounded.

\begin{figure*}
  \begin{center}
    \includegraphics[width=0.4\textwidth]{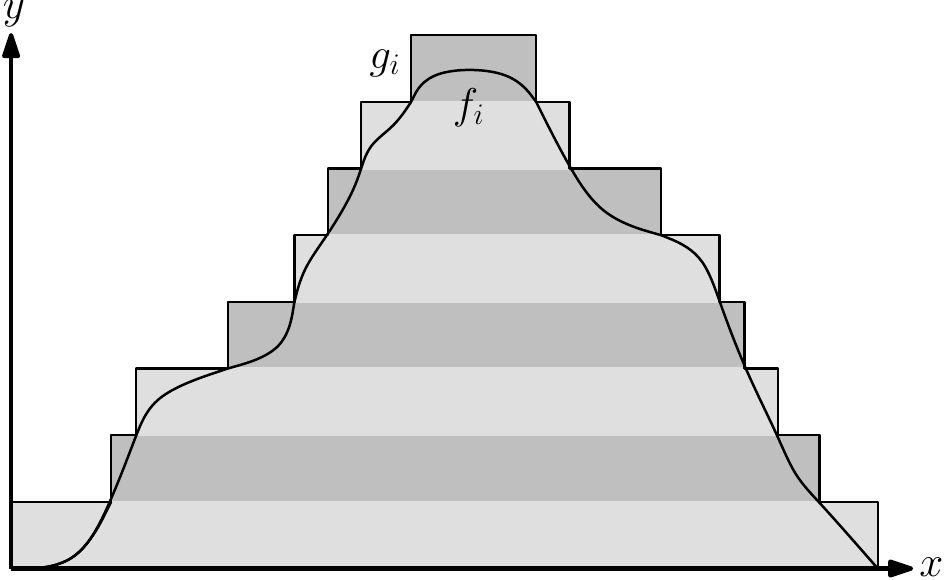}
  \end{center}
  \caption{Area of a quasi-concave function covered by a ``stack'' of rectangles with approximately the same area}
  \label{fig:quasiconcave}
\end{figure*}

Applying Corollary~\ref{corollary.Prob:staircases}, we obtain
\newcommand{\mult}{\!\cdot\!}
\begin{align*}
  \int_{y \in \RR^{n-k}} \max_{z \in \RR^k} f_X \big( A^{-1} \mult (y, z) \big) \d y
  &= \int_{y \in \RR^{n-k}} \max_{z \in \RR^k} \prod_{i=1}^n f_i \big( \big( A^{-1} \mult (y, z) \big)_i \big) \d y \cr
  &\leq \int_{y \in \RR^{n-k}} \max_{z \in \RR^k} \prod_{i=1}^n g_i \big( \big( A^{-1} \mult (y, z) \big)_i \big) \d y \cr
  &\leq 2^k \mult (n-k)! \mult \K^{n-k} \mult \sum_{I} \left( \prod_{i \notin I} \sum_{k_i=1}^{N_i} (\phi_{i,k_i} \mult \ell_{i,k_i}) \right) \mult \left( \prod_{i \in I} \chi_i \right) \cr
  &= 2^k \mult (n-k)! \mult \K^{n-k} \mult \sum_{I} \left( \prod_{i \notin I} \int_{[-1,1]} g_i \;\! \d x \right) \mult \left( \prod_{i \in I} \chi_i \right) \DOT
\end{align*}
Since $0 \leq \int_{[-1,1]} g_i \;\! \d x \leq \int_{[-1,1]} (f_i + \delta) \;\! \d x = 1 + 2\delta$ and $0 \leq \chi_i \leq \sup_{x \in [-1, 1]} f_i(x) + \delta \leq \phi + \delta$, this implies
\begin{align*}
  \int_{y \in \RR^{n-k}} \max_{z \in \RR^k} f_X \big( A^{-1} \cdot (y, z) \big) \d y
  &\leq 2^k \cdot (n-k)! \cdot \K^{n-k} \cdot \sum_{I} \left( \prod_{i \notin I} (1 + 2\delta) \right) \cdot \left( \prod_{i \in I} (\phi + \delta) \right) \cr
  &= 2^k \cdot (n-k)! \cdot \K^{n-k} \cdot \sum_{I} (1 + 2\delta)^{n-k} \cdot (\phi + \delta)^k \cr
  &= 2^k \cdot (n-k)! \cdot \K^{n-k} \cdot \binom{n}{k} \cdot (1 + 2\delta)^{n-k} \cdot (\phi + \delta)^k \cr
  &\leq 2^k \cdot n^{n-k} \cdot \K^{n-k} \cdot (1 + 2\delta)^{n-k} \cdot (\phi + \delta)^k \DOT
\end{align*}
As this bound is true for arbitrarily small reals $\delta > 0$, we obtain the desired bound of
\[
  2^k (n \K)^{n-k} \phi^k
  = 2^k (m \K)^{m-k} \phi^k \DOT \qedhere
\]
\end{proof}

\section{Conclusions and Open Problems}

With the techniques developed in this article we settled two questions posed by
Moitra and O'Donnell~\cite{MoitraO11}: For quasiconcave densities we showed that
the exponent of~$\phi$ in the bound for the smoothed number of Pareto-optimal
solutions is exactly~$d$. Moreover, we significantly improved on the previously
best known bound for higher moments of the smoothed number of Pareto-optima
by R\"{o}glin and Teng~\cite{RoeglinT09}.

Maybe even more interesting are our results for the model of zero-preserving
perturbations suggested by Spielman and Teng~\cite{SpielmanT04} and Beier and
V\"ocking~\cite{BeierV06}. For this model we proved the first non-trivial bound
on the smoothed number of Pareto-optimal solutions. We showed that this result
can be used to analyze multiobjective optimization problems with polynomial
and even more general objective functions. Furthermore, our result implies that
the smoothed running time of the algorithm proposed by Berger et al.~\cite{BergerRZ11}
to compute a path trade in a routing network is polynomially bounded for every constant number
of autonomous systems. We believe that there are many more such applications of
our result in the area of multiobjective optimization.

There are several interesting open questions. First of all it would be interesting
to find asymptotically tight bounds for the smoothed number of Pareto-optimal solutions.
There is still a gap between our upper bound of $O(n^{2d} \phi^d)$ 
for quasiconcave $\phi$-smooth instances and the best lower bound of $\Omega(n^{d-1.5}\phi^d)$~\cite{BrunschGRR14}.
Only for the case $d = 1$ we can show that the upper bound is tight~\cite{BrunschGRR14}. 

Especially for zero-preserving perturbations there is still a lot of work to do.
We conjecture that our techniques can be extended to also bound higher moments 
of the smoothed number of Pareto-optima for $\phi$-smooth instances with
zero-preserving perturbations. However, we feel that even our bound for the first moment is too
pessimistic as we do not have a lower bound showing that  
setting coefficients to~$0$ can lead to larger Pareto sets.
It would be very interesting to either prove a lower bound that shows that zero-preserving
perturbations can lead to larger Pareto-sets than non-zero-preserving perturbations or
to prove a better upper bound for zero-preserving perturbations.

\pagebreak

\bibliographystyle{plain}
\bibliography{literature}

\end{document}